\newcommand{\commStep}{%
  \medskip%
  {\SetKwComment{Comment}{}{}\Comment{\fbox{Communication step}}}
  \smallskip%
}
\newcommand{\compStep}{%
  \medskip%
  {\SetKwComment{Comment}{}{}\Comment{\fbox{Computation step} \hspace{2.3cm} $\rhd$ {\it wait that all messages $\msgm(-)$ of round $R$ are in $\received_{i,R}$}}}
  \smallskip%
}
\newtheorem{theorem}{Theorem}
\newtheorem{lemma}{Lemma}
\newtheorem{corollary}{Corollary}
\newtheorem{sublemma}{Sublemma}[lemma]
\DeclareMathOperator{\Exists}{\exists\,}
\newenvironment{proofoverview}{
  \renewcommand{\proofname}{Proof overview}%
  \begin{proof}(Detailed proof below.)
}{%
  \end{proof}
}%
\newenvironment{detailedproof}{
  \renewcommand{\proofname}{Detailed proof}%
  \begin{proof}%
}{%
  \end{proof}
}%
\newcommand{\brb}{brb}
\newcommand{\send}{\ensuremath{\mathsf{send}}\xspace}
\newcommand{\broadcast}{\ensuremath{\mathsf{broadcast}}\xspace}
\newcommand{\brbbroadcast}{\ensuremath{\mathsf{\brb\_broadcast}}\xspace}
\newcommand{\brbdeliver}{\ensuremath{\mathsf{\brb\_deliver}}\xspace}
\newcommand{\minfn}{\ensuremath{\mathsf{min}}\xspace}
\newcommand{\maxfn}{\ensuremath{\mathsf{max}}\xspace}
\newcommand{\view}{\ensuremath{\mathit{view}}\xspace}
\newcommand{\received}{\ensuremath{\mathit{received}}\xspace}
\newcommand{\toBeBcast}{\ensuremath{\mathit{to\_bcast}}\xspace}
\newcommand{\orig}{\ensuremath{\mathrm{sender}}\xspace}
\newcommand{\psender}{\ensuremath{p_{\orig}}\xspace}
\newcommand{\wgood}{\ensuremath{w_{\sf good}}\xspace}
\newcommand{\lambdagood}{\ensuremath{\lambda_{\sf good}}\xspace}
\newcommand{\ready}{\ensuremath{\mathit{delivered}}}
\newcommand{\weights}{\ensuremath{\mathit{weights}}}
\newcommand{\stopKW}{\ensuremath{\mathsf{ quit}()}\xspace}
\newcommand{\messageOf}{\ensuremath{\mathsf{message}}\xspace}
\newcommand{\subchain}{\ensuremath{\mathsf{subchain}}\xspace}
\newcommand{\truncate}{\ensuremath{\mathsf{truncate}}\xspace}
\newcommand{\choice}{\ensuremath{\mathsf{choice}}\xspace}
\newcommand{\set}{\ensuremath{\mathsf{set}}\xspace}
\newcommand{\WBPWeightMonoProp}{{\sc WBP-Weight-Monotony}\xspace}
\newcommand{\WBPViewMonoProp}{{\sc WBP-View-Monotony}\xspace}
\newcommand{\WBPRevealWeightMonoProp}{{\sc WBP-Revealing-Round-Monotony}\xspace}
\newcommand{\WBPLocalConspProp}{{\sc WBP-Local-Conspicuity}\xspace}
\newcommand{\WBPConspiProp}{{\sc WBP-Conspicuity}\xspace}
\newcommand{\WBPVisiProp}{{\sc WBP-Final-Visibility}\xspace}
\newcommand{\WBPLiveProp}{{\sc WBP-Good-Case-Liveness}\xspace}
\newcommand{\delpredicate}{\ensuremath{\mathsf{WBP}}\xspace}
\newcommand{\reveal}{\ensuremath{\mathsf{reveal\_round}}\xspace}
\newcommand{\GCLpredicate}{\ensuremath{{\delpredicate_{\mathsf{GCL}}}}\xspace}
\newcommand{\certificate}{\GCLpredicate}
\newcommand{\revealGCL}{\ensuremath{\reveal_{\mathsf{GCL}}}\xspace}
\newcommand{\knownmessages}{\ensuremath{\mathit{known\_msgs}}\xspace}
\newcommand{\setchains}{\ensuremath{\mathit{chains}}\xspace}
\newcommand{\candidatemsg}{\ensuremath{\mathit{candidate\_msgs}}\xspace}
\newcommand{\rmwi}{\ensuremath{r_{i}}\xspace}
\newcommand{\seq}{\ensuremath{\gamma}\xspace}
\newcommand{\sequencei}{\ensuremath{\seq_{i}}\xspace}
\newcommand{\accumulatorTwo}[1]{\ensuremath{T_{2,#1}}\xspace}
\newcommand{\finalround}{\ensuremath{t+1}\xspace}
\newcommand{\BRB}{\MakeUppercase{\brb}}
\newcommand{\brbVprop}{\textsc{\BRB-Validity}\xspace}
\newcommand{\brbNDNprop}{\textsc{\BRB-No-duplication}\xspace}
\newcommand{\brbNDYprop}{\textsc{\BRB-No-duplicity}\xspace}
\newcommand{\brbLDprop}{\textsc{\BRB-Local-delivery}\xspace}
\newcommand{\brbGDprop}{\textsc{\BRB-Global-delivery}\xspace}
\newcommand{\annote}[3]{{\color{#3}%
	\colorbox{#3}{\bfseries\sffamily\tiny\textcolor{white}{#2}}
	\color{#3}
	$\blacktriangleright${\em #1}$\blacktriangleleft$}
}
\newcommand{\ft}[1]{\annote{#1}{FT}{magenta}}
\newcommand{\ta}[1]{\annote{#1}{TA}{blue}}
\renewcommand{\annote}[3]{}
\newcommand{\msgm}{\textsc{msg}\xspace}
\newcommand{\ffalse}{\ensuremath{\mathtt{false}}\xspace}
\newcommand{\ttrue}{\ensuremath{\mathtt{true}}\xspace}
\DeclareMathSymbol{\myColon}{\mathord}{operators}{"3A}
\newcommand{\cryptchain}{{\myColon}}
\newcommand{\isdefinedas}{\stackrel{\text{\tiny def}}{=}}
\title{\textbf{Good-case Early-Stopping Latency of Synchronous\\Byzantine Reliable Broadcast: The Deterministic Case (Extended Version)}}
\author{Timoth\'e Albouy,
  ~Davide Frey,
  ~Michel Raynal,
  ~Fran\c{c}ois Ta\"{i}ani~\\~\\
Univ Rennes, IRISA, CNRS, Inria, 35042 Rennes, France}
\begin{document}

\maketitle

\begin{abstract}
This paper considers the good-case latency of Byzantine Reliable Broadcast (BRB), i.e., the time taken by correct processes to deliver a message when the initial sender is correct. This time plays a crucial role in the performance of practical distributed systems.
Although significant strides have been made in recent years on this question, progress has mainly focused on either asynchronous or randomized algorithms.
By contrast, the good-case latency of deterministic synchronous BRB under a majority of Byzantine faults has been little studied.
In particular, it was not known whether a good-case latency below the worst-case bound of $t+1$ rounds could be obtained.
This work answers this open question positively and proposes a deterministic synchronous Byzantine reliable broadcast that achieves a good-case latency of $\maxfn(2,t+3-c)$ rounds, where $t$ is the upper bound on the number of Byzantine processes and $c$ the number of effectively correct processes.

~\\{\bf Keywords}: Byzantine Fault, Deterministic Algorithm, Genericity, Good-Case Latency, Reliable Broadcast, Synchronous System, Weighted Predicate.
\end{abstract}

\newcommand\nnfootnote[1]{%
  \begin{NoHyper}
  \renewcommand\thefootnote{}\footnote{#1}%
  \addtocounter{footnote}{-1}%
  \end{NoHyper}
}

\nnfootnote{A first version of this paper was initially published at the 36th Int. Symposium on Distributed Computing (DISC 2022).}
\section{Introduction}

Introduced in the eighties~\cite{DS83,LSP82}, Byzantine reliable broadcast (BRB) and Byzantine Broadcast (BB) are two fundamental abstractions of distributed computing~\cite{AW04,AFRT21,B87,CGL11,IR16,MHR14,NRSVX20,WXDS20,R18}. 
BRB assumes that one particular process, the sender, broadcasts a message to the rest of the system and that correct (a.k.a. honest) processes all deliver the value initially broadcast if the sender is correct or that, if it is not, either all agree on some value or none delivers any value. BB further requires that all correct processes always deliver some value.\footnote{In this paper, we will tend to conflate the two problems, as the protocols we discuss solve both BB and BRB.}
BRB and BB play a crucial role in many practical distributed applications, from state machine replication (SMR) (see, for instance, the discussion in~\cite{ANRX21}), to broadcast-based money transfer~\cite{AFRT20,BDS20,CGKKMPPST20,GKMPS22}.

\paragraph{Good-case latency}  In  broadcast-based money transfer algorithms, for instance, a cryptocurrency is implemented by merely broadcasting the transfer operations originating from one participant (or in some sharded versions~\cite{BDS20} from one \emph{authority}) to the rest of the system~\cite{AFRT20,CGKKMPPST20,DBLP:conf/pact/FreyGRT21}.
These algorithms do not require consensus, and their performance is directly related to the underlying (Byzantine-tolerant) reliable broadcast algorithm they use.
Transfers issued by correct participants are guaranteed to terminate and only involve a single broadcast operation invoked by the issuer.
As a result, the latency of these algorithms---as experienced by correct participants---solely depends on the \emph{good-case latency} of the BRB algorithm they use, defined as the time taken for all correct parties to deliver a broadcast message when the initial broadcaster is correct~\cite{ANRX21}.
The \emph{good-case latency} of Byzantine-tolerant broadcast algorithms plays a similarly central role in the performance of SMR algorithms, with vast practical consequences for the performance of BFT replication systems, including consortium~\cite{AMNRY20,GKQV10} and committee-based blockchains~\cite{CM19}.

\paragraph{Synchronous networks}  In this paper, we focus on the \emph{good-case latency} of BRB algorithms subject to an \emph{arbitrary number of Byzantine failures} (i.e., we assume $n>t$, where $n$ is the number of processes, and $t$ is an upper bound on the number of Byzantine processes).
We further assume that processes can use signatures to authenticate messages.
We follow in this respect \cite{FN09} and~\cite{WXSD20}, and in part~\cite{ANRX21}.
Since BRB cannot be solved even in a partially synchronous model when $t \geq n/3$~\cite{DLS88,LSP82,PSL80}, we also assume a \emph{synchronous} network, in which messages are delivered during the same round in which they are sent.
Although synchronous wide-area networks are challenging to realize in practice, they can be approximated with high probability by using sufficiently high timeouts.
Synchronous algorithms are further intriguing in their own right and can yield insights into the nature of distributed computing that are useful beyond their specific use.

\paragraph{Randomized synchronous BRB algorithms}  The study of randomized synchronous BRB and BB algorithms tolerating arbitrary many Byzantine faults has progressed substantially in recent years~\cite{ANRX21,FN09,WXSD20}. In particular, the solution proposed by Wan, Xiao, Shi, and Devadas~\cite{WXSD20} and optimized by Abraham, Nayak, Ren, and Xiang~\cite{ANRX21} presents sublinear worst- and good-case latency bounds in expectation (boiling down to constant numbers of rounds when $t$, the maximal number of Byzantine processes, is assumed to be a fraction of $n$).
However, these works all rely on \emph{randomization}.\footnote{In addition, these randomized solutions generally assume a \emph{weakly adaptive adversary}, an adversary that cannot erase messages sent ``just before'' a process becomes Byzantine, where ``just before'' means in the same round. A notable exception is the solution presented in~\cite{WXDS20}, which tolerates a strongly-adaptive adversary by exploiting time-lock puzzles. By contrast, a deterministic algorithm that tolerates Byzantine processes inherently tolerates a strongly-adaptive adversary, i.e. an adversary which can remove messages ``after the fact''.}
Further, these works do not leverage a lower number of actual faults to provide an \emph{early stopping} property~\cite{DRS90}: their latency only involves $n$, the number of processes, and $t$, the upper bound on the number of Byzantine processes, but not $c$, the \emph{effective} number of correct processes. As a result, they cannot exploit a low number of actual failures to provide better latency performance.

\paragraph{This paper's contribution}
In contrast to randomized solutions, the good-case latency of \emph{deterministic} synchronous BRB and BB algorithms has been little studied.
In the worst case, however, its latency is lower-bounded by $t+1$ rounds~\cite{DS83,DLS88}, and optimal algorithms in this respect have been known since the eighties~\cite{DS83,LSP82}.

An unsolved question to this date is thus whether a good-case latency lower than $t+1$ rounds can be achieved using a deterministic algorithm subject to an arbitrary number of Byzantine faults.
This paper answers this question positively and proposes a deterministic synchronous Byzantine reliable broadcast that achieves a good-case latency of $\maxfn(2,t+3-c)$ rounds, where $t$ is the upper-bound on the number of Byzantine processes, and $c$ the number of effectively correct processes ($c\geq n-t$).
The algorithm we propose does not require correct processes to know either $n$ or $c$.
Moreover, and differently from recently proposed solutions to this problem~\cite{ANRX21,FN09,WXSD20}, our solution:
\begin{itemize}
\item is deterministic, 
\item only relies on signatures, eschewing richer cryptographic primitives (e.g. distributed random coins~\cite{FN09,WXSD20}, verifiable random functions~\cite{MRV99,WXSD20} or time-lock puzzles~\cite{WXDS20}),
\item ensures delivery in just $2$ rounds in good cases as soon as the effective number of  correct processes, $c$, is at least $t+1$, thus improving on all existing solutions.
\end{itemize}

More generally, our good-case latency is \emph{early stopping}~\cite{DRS90}, in that, in good cases, our algorithm will stop earlier when the effective number of correct processes $c$ increases. This provides a substantial advantage even when $c < t+1$. For instance, assuming $t< 3/4\times n$, and an intermediate situation where only $\lfloor t/2 \rfloor$ processes are effectively Byzantine, the good-case latency of our algorithm outperforms that of the best-known randomized algorithm up to $n\leq 43$, and is at least as good up to $n\leq 51$, making it competitive in a wide range of small- to medium-scale practical distributed systems.

\ft{FT05Jan23: Added to announce generic construction}%
To construct our solution, we introduce a general family of predicates used to select messages that we have termed \emph{weight-based predicates}.
These predicates exploit patterns in signature chains to help correct processes decide when they can safely deliver a message, thus extending an idea as old as the problem itself~\cite{DS83,LSP82}.
We formally define this predicate family, and present a \emph{generic} BRB algorithm that exploits its properties.
We then show how Lamport, Shostak and Pease's seminal BRB algorithm~\cite{LSP82} can be re-interpreted as a specific instance of our generic construction, and finally present our novel solution as a more advanced example with stronger properties, which yield our much-improved good-case latency.

%
\section{Background and Related Work}

\sloppy The Synchronous Byzantine Reliable Broadcast problem was first introduced in~\cite{PSL80} 
by Lamport, Shostak, and Pease, who proposed in~\cite{LSP82}
a deterministic solution based on signature chains. This solution requires $t+1$ rounds (both in good and bad cases), where $t<n$ is an upper bound on the number of Byzantine processes present in the system.
This worst-case round complexity was shown by Dolev and Strong~\cite{DS83} to be optimal for deterministic algorithms.
This result was later refined by Dolev, Reischuk, and Strong who showed that $\minfn(n - 1, n-c + 2, t + 1)$ rounds are necessary to realize Synchronous Byzantine Broadcast~\cite{DRS90}, where  $c\geq n-t$ is the effective number of correct processes in a given run.
They also present in the same paper a deterministic signature-free algorithm that achieves this bound provided that $n > \maxfn(4t, 2t^2 - 2t + 2)$.
The salient properties of this algorithm are summarized in the first column of Table~\ref{tab:comparison:sota} and compared to more recent works and to this paper (last column).

In recent years, substantial progress has been achieved to circumvent the hard bound of $t+1$ rounds for deterministic BRB and BB algorithm by exploiting \emph{randomization}. Assuming a majority of Byzantine processes, Fitzi and Nielsen proposed in~\cite{FN09} a randomized algorithm that achieves Byzantine Agreement in an expected number of $\lfloor(3t-n)/2\rfloor + 7 + O(1)$ rounds\footnote{More precisely, this expected number of rounds can be broken down into a deterministic number of synchronous rounds followed by an expected number of asynchronous rounds. The exact breakdown depends, in turn, on the choice of shared random coin used in the algorithm.}, and a good-case latency of $\lfloor (3t- n)/2 \rfloor + 6$ deterministic rounds.

\begin{table}[tb]
  \centering
  \renewcommand{\tabularxcolumn}[1]{>{\small}m{#1}}
  {\small
  \newcommand{\yes}{\bf yes}
  \newcommand{\myskip}{0.05em}
  \setlength{\tabcolsep}{1.5pt}
  \begin{tabularx}{\linewidth}{|>{\footnotesize}X|c|c|c|c|}
    \hline
    &Dolev, Reischuk&& \footnotesize Wan et al.~\cite{WXSD20} + &\\
    & 
    \& Strong~
    \cite{DRS90} & 
    \raisebox{0.5em}[0pt][0pt]{Fitzi \& Nielsen
    ~\cite{FN09}} &
    \footnotesize Abraham et al.~\cite{ANRX21} &
    \raisebox{0.5em}[0pt][0pt]{This paper}\\\hline
    Deterministic & \yes & no & no & \yes\\[\myskip]
    Early stopping & \yes & no & no & \yes\\[\myskip]
    Dishonest majority & no & \yes & \yes & \yes\\[\myskip]
    $n>$& $\maxfn(4t,2t^2{-}2t{+}2)$ & $-$ & $-$ & $-$\\[0.3em]
    Worst-case latency& $\minfn(n{-}c{+}2,t{+}1)$ &
    $\maxfn(7,\lfloor\frac{3t-n}{2}\rfloor{+}7){+}O(1)\,^{\ast}$
    &
    $O\big( (\frac{n}{n-t})^2 \big)\,^{\ast}$&
    $t+1$\\[0.3em]
    Good-case latency& 2 & $\maxfn(6,\lfloor\frac{3t-n}{2}\rfloor+6)$ &
    $\big\lceil \frac{n}{n-t}\big\rceil + \big\lfloor\frac{n}{n-t}\big\rfloor$ & $\maxfn(2,t{+}3{-}c)$  \\[0.3em]
    \hline
  \end{tabularx}}
\caption{Assumptions, guarantees, and latencies of synchronous signature-based BRB algorithms
($^\ast$ indicates an expected number of rounds)
}
  
  \label{tab:comparison:sota}
\end{table}

In 2020, Wan, Xiao, Shi, and Devadas presented a randomized algorithm that achieves BB in $O\big((\frac{n}{n-t})^2\big)$ expected synchronous rounds~\cite{WXSD20}. Last year, in an in-depth study of the good-case latency of BB and BRB algorithms~\cite{ANRX21} (extended version in~\cite{ANRX21-2}), Abraham, Nayak, Ren, and Xiang proved a lower bound of $\lfloor n/(n-t) \rfloor -1$ rounds for the good-case latency of synchronous BRB. They then explained how the solution presented in~\cite{WXSD20} can be optimized to deliver a good-case latency of $\lceil n/(n-t) \rceil + \lfloor n/(n-t)\rfloor$ rounds (about $2n/(n-t)\pm 1$).
\footnote{Although correct processes can deliver their message in about $2\times n/(n-t)$ rounds in this optimized algorithm, they must continue to participate in the algorithm for about the same amount of time, leading to an overall execution time of circa $4\times n/(n-t)$ rounds in good-cases.}

The properties of these earlier works are summarized in Table~\ref{tab:comparison:sota}, together with those of the algorithm we propose.
Among these works, only~\cite{DRS90} is deterministic. 
It imposes, however, a strong constraint on $n$ ($n > \maxfn(4t,2t^2{-}2t{+}2)$) and does not tolerate a majority of Byzantine processes, which the other algorithms do.
Conversely, the algorithms of~\cite{ANRX21,FN09,WXSD20} all tolerate an arbitrary number of Byzantine processes, but contrary to the solution we present, they rely on randomization 
and do not exploit executions in which the number of Byzantine processes is less than the upper bound $t$.
(They are not early stopping.)

%
\section{Computing Model and Specification}

\subsection{System Model}

\paragraph{Process Model} The system consists of $n$ synchronous sequential processes denoted $\Pi=\{p_1$, ..., $p_n\}$. 
Each process $p_i$ has an identity; all the identities are different and known by all processes. 
To simplify, we assume that $i$ is the identity of $p_i$.

Regarding failures, up to $t$ processes can be Byzantine, where a Byzantine process is a process whose behavior does not follow the code specified by its algorithm~\cite{LSP82,PSL80}. 
Let us notice that Byzantine processes can collude to fool non-Byzantine processes (also called correct processes).
Let us also notice that, in this model, the premature stop (crash)
of a process is a Byzantine failure. 
The integer $c$ denotes the number of processes that effectively behave correctly in an execution.
Both $c$ and $n$ remain unknown to correct processes, but they are used to analyze the properties of our algorithm.

\paragraph{Network Model}
Processes communicate by exchanging messages through a reliable synchronous network, in which messages are delivered in the round in which they were sent. 

\paragraph{Security Model}
Like earlier works in this area~\cite{DRS90,FN09,LSP82,PSL80,WXSD20}, we assume a PKI (Public Key Infrastructure) that provides an ideal signature scheme.
Processes can sign the messages they send, verify signatures, and forward content signed by other processes.

\subsection{Byzantine Reliable Broadcast}

Following~\cite{ANRX21,FN09,WXSD20}, we consider a one-shot Byzantine-tolerant reliable broadcast (BRB for short) in which the sending process $p_{\orig}$ is known beforehand. 
The BRB abstraction provides two operations, \brbbroadcast and \brbdeliver.
$\brbbroadcast(m)$ is invoked by the sending process $p_{\orig}$.
When this happens, we say that $p_{\orig}$ \brb-broadcasts $m$.
When a process $p_i$ invokes $\brbdeliver(m)$ we say that $p_i$ \brb-delivers $m$.
The BRB abstraction is specified by the following five properties.

\begin{itemize}
\item Safety:
  \begin{itemize}
  \item \brbVprop:
    If a correct process $p_i$ \brb-delivers a message $m$ 
    and $p_{\orig}$ is correct, 
    then $p_{\orig}$ has \brb-broadcast $m$. 
  \item \brbNDNprop:
    A correct process $p_i$ \brb-delivers at most one message. 
  \item \brbNDYprop:
    No two different correct processes \brb-deliver different messages. 
  \end{itemize}
\item Liveness:
  \begin{itemize}
  \item \brbLDprop:
    If $p_{\orig}$ is correct and \brb-broadcasts
    a message, then at least one
    correct process $p_j$ eventually \brb-delivers some message.
  \item \brbGDprop:
    If a correct process $p_i$ \brb-delivers a message, then all correct processes \brb-deliver a message.
  \end{itemize}
\end{itemize}
  



%

\section{A Generic BRB Algorithm Based on Weight-based Predicates}
\label{sec:fram-constr-determ}

The BRB algorithm introduced in this paper exploits patterns in sets of signature chains to detect when a (correct) process can safely \brb-deliver a message $m$ earlier than the worst-case latency $t+1$. We introduce this new algorithm in two steps: in this section, we first present a family of predicates used to select and rank messages and a generic BRB algorithm based on this predicate family that is designed to provide interesting good-case latency values. In the next section, we then define a particular predicate belonging to this family that achieves a good-case latency of $\maxfn(2,t+3-c)$ rounds when used in the generic algorithm of this section.

The family of predicates we introduce exploits \emph{weights} to create a hierarchy between potential candidate messages, and relies on two central properties, \emph{conspicuity} and \emph{final visibility}, that allow correct processes to \brb-deliver a message early yet safely in favorable circumstances.
In the following, we define the properties predicates belonging to this family must fulfill, show that these properties are sufficient to implement BRB with our generic algorithm (Theorem~\ref{theo:algo:generic:SBRB:pi:works}), and finally illustrate how Lamport, Shostak, and Pease's (BRB-LSP in the following) seminal algorithm can be seen as a specific example of our generic construction.

\subsection{Underlying intuition}

\paragraph{Signature chains} 
The original BRB-LSP algorithm uses \emph{signature chains} to propagate what each process knows of the system's state~\cite{LSP82}.
A signature chain (or chain for short) starts with a message $m$ signed by the sending process, e.g. $(m,i_\orig,\sigma_{\psender})$, where $i_\orig$ is the identity of the sending process, and $\sigma_{\psender}$ is a signature of $(m,i_\orig)$ with $\psender$'s private key.
Such a chain is of length $1$, as it contains one signature. A chain of length $\ell$ is extended by appending the identity $i_{\ell+1}$ of a process $p_{i_{\ell+1}}$ not present in the chain, followed by $p_{i_{\ell+1}}$'s signature of the resulting sequence:
$$(m,i_\orig,\sigma_{\psender},i_2,\sigma_{p_{i_2}},..,i_\ell,\sigma_{p_{i_\ell}},i_{\ell+1},\sigma_{p_{i_{\ell+1}}}).$$

\noindent As in~\cite{DS83,LSP82}, we use the compact notation $m\cryptchain \psender \cryptchain p_{i_2} \cryptchain .. \cryptchain p_{i_{\ell+1}}$ to represent such a chain.

\paragraph{Valid chains} In BRB-LSP~\cite{LSP82}, further formalized in~\cite{DS83}, and algorithms based on the same idea~\cite{FN09}, correct processes only accept \emph{valid} signature chains, i.e., signature chains that are acyclic and whose length matches the current round.
These conditions constrain the disruption power of Byzantine processes by limiting how long they can hide a message from correct processes. 
In~\cite{DS83,LSP82}, a message is considered for \brb-delivery when backed by at least one chain containing $t+1$ signatures: the length of the chain ($t+1$) ensures that Byzantine processes cannot reveal some message $m$ to only a subset of correct processes, while hiding it from others, and thus guarantees that all correct processes use the same set of messages to decide which message should be \brb-delivered (using a deterministic choice function).

\paragraph{From chains to weight-based predicates}
The generic algorithm proposed in this paper generalizes this intuition in a simple, albeit non-trivial, way.
Instead of single chains, our algorithm detects \emph{sets of chains} forming a particular \emph{pattern} to trigger delivery.
Ideally, such a pattern should allow correct processes to \brb-deliver early in good cases, while remaining safe in bad ones.
To fulfill this goal, our algorithm adopts the same sign-and-retransmit strategy as BRB-LSP.
This means that, in a good case execution, \psender signs a unique message $m$, and this message is necessarily repeated in round $2$ by the $c-1$ remaining correct processes, totalling at least $c$ signatures ``backing'' $m$ by the end of round 2 (that of \psender in round 1, and the other $c-1$ correct processes in round 2).
Although correct processes do not know $c$ (they only know the lower bound $n-t$), they can thus assign a \emph{weight} to each message $m$ they observe, depending on the ``amount'' of backing this message is perceived to enjoy from other processes.

Our intuition consists in combining this notion of weight with the temporal information provided by synchronous rounds to obtain a safe yet good-case-ready \emph{weight-based predicate}. Such a weight-based predicate serves to select candidate messages for delivery, while the perceived weight of a message serves to discriminate between competing messages when \psender is Byzantine. The crux of the problem lies in ensuring that a message $m$ selected by this predicate and \brb-delivered early by a correct process $p$ will trump (thanks to its weight) any other potential competitor $m'$ that might surface in later rounds. We solve this difficulty by requiring two properties from a weight-based predicate:
\begin{itemize}
    \item A weight-based predicate should be \emph{conspicuous}, in the sense that if a process $p$ observes a predicate of weight $w$ for a message $m$, $m$ should necessarily have become visible to all correct processes at or before \emph{a specific revealing round} that only depends on $w$, and $n$, $t$.
    By contrapositive, this property allows correct processes to conclude to the \emph{nonexistence} of a predicate of weight $w$ for a message $m'$ if they have not heard of $m'$ after this revealing round, and is thus instrumental to determining that a message $m$ cannot be beaten by any other.
    
    \item A weight-based predicate should also be \emph{finally visible}, meaning that if $p$ perceives a predicate of weight $w$ for $m$, all other correct processes should also observe  a predicate of weight $w$ for $m$ at the latest by round $t+1$.
\end{itemize}

In good cases, the conspicuity and final visibility of a weight-based predicate make it possible for a process $p$ that envisages to \brb-deliver a message $m$ to know that (i) no other message can beat $m$ in terms of weight (by waiting until the corresponding revealing round for $m$'s weight), and (ii) that all correct processes will also assign a weight of $w$ to $m$ at the latest by round $t+1$ (thus ensuring that they will also \brb-deliver $m$).
In bad cases, conspicuity prevents Byzantine processes from tricking a correct process into delivering early while revealing contradictory information to other correct processes in later rounds, and final visibility guarantees that correct processes observe the same (weight,message) pairs in the final $t+1$ round, ensuring agreement.

\paragraph{The rest of this section} The remainder of this section, first introduces a few notations that we use to manipulate (sets of) signature chains and messages (Section~\ref{sec:notations}).
The generic synchronous BRB algorithm (Algorithms~\ref{alg:generic:porig:code} and~\ref{algo:generic:SBRB:pi}) and the weight-based predicates it relies on are described in Section~\ref{sec:generic-weight-based}.
We then formalize the properties that weight-based predicates and their revealing functions must fulfill for our generic algorithm to implement BRB (Section~\ref{sec:lambd-brb-robustn}), a connection that is captured by Theorem~\ref{theo:algo:generic:SBRB:pi:works} (Section~\ref{sec:proof-theor-refth}).
Finally, to illustrate the generality of the proposed algorithm, we show that BRB-LSP can be interpreted as a special case of our generic construction (Section~\ref{sec:revis-lamp-shost}).



\subsection{Notations}
\label{sec:notations}

\newcommand{\setofchains}{E}

We use the following notations:
\begin{itemize}
\item $m\cryptchain p_{i_1} \cryptchain p_{i_2} \cryptchain \dotsb \cryptchain p_{i_\ell}$ is a chain of signatures (or \emph{chain} for short) as in \cite{DS83,FN09,LSP82}. We say that the \emph{length} of the chain is $\ell$.
  A \emph{valid} chain must start with $p_{\orig}$ (i.e. $p_{\orig}=p_{i_1}$), only contain valid signatures, and be acyclic (a process' signature can only appear once in a given chain).
  As in~\cite{DS83}, we assume a filter function removes any invalid chain from the reception queue of correct processes, so that correct processes only receive valid chains.
  In particular, correct processes will only accept chains of length $R$ during round $R$. As a shortcut, we might therefore say that a process $p_i$ \textit{has signed a chain $\pi$ in round $R$} to mean that $p_i$'s signature is the $R^\mathrm{th}$ signature in $\pi$.
  
\item $\pi$ being a chain of signatures, $\messageOf(\pi)$ 
denotes the message at the start of the chain. We therefore have $\messageOf(m\cryptchain p_{\orig} \cryptchain p_{i_2} \cryptchain \dotsb \cryptchain p_{i_\ell})=m$.
By extension, if $\setofchains$ is a set of chains, $\messageOf(\setofchains)$ is the direct image of $\setofchains$ by $\messageOf()$.

\item $M$ being a set of messages, $\choice(M)$ deterministically returns one of the messages, i.e., the same message $m$ is returned by all correct processes for the same input set $M$.
The function $\choice()$ can be implemented in various ways (e.g., the message with the smallest value or smallest time-stamp).
If $M$ is empty, $\choice(M)$ returns $\bot$.

\newcommand{\indexseq}[1]{i_{#1}}
\item  $\seq=(p_{\indexseq{k}})_{k\in[1..\ell]}\in \Pi^\ell$ 
  being a sequence of $\ell$ processes, for simplicity, we use the notation $\cryptchain\seq\cryptchain$ as a shorthand for the fragment of signature chain $\cryptchain p_{\indexseq{1}}\cryptchain\dotsb\cryptchain p_{\indexseq{\ell}}\cryptchain$.
  For instance, $m\cryptchain p_{\orig}\cryptchain \seq\cryptchain p_i$ thus means $m\cryptchain p_{\orig}\cryptchain p_{\indexseq{1}}\cryptchain\dotsb\cryptchain p_{\indexseq{\ell}} \cryptchain p_i$. We similarly equate the sequence $\seq$ with its supporting set $\set(\seq)=\{p_{\indexseq{k}}\}_{k\in[1..\ell]}$ when unambiguous. Thus $q\in\seq$ means $q\in\{p_{\indexseq{k}}\}_{k\in[1..\ell]}$, $|\seq|=|\{p_{\indexseq{k}}\}_{k\in[1..\ell]}|=\ell$, $X\cup \seq=X\cup \{p_{\indexseq{k}}\}_{k\in[1..\ell]}$.

  For simplicity, we extend these notations to chains of signatures. For instance, if 
  $\pi=m\cryptchain p_{i_1} \cryptchain p_{i_2} \cryptchain \dotsb \cryptchain p_{i_\ell}$ is a chain and $p\in\Pi$ a process, $p\in\pi$ means $p\in\{p_{i_k}\}_{k\in[1..\ell]}$.

\item If $\Gamma\subseteq\Pi^{\ast}$ is a set of process sequences (resp. a set of chains), by abuse of notation we note $\set(\Gamma)$ the set of processes that appear in one of the sequences of $\Gamma$ (resp. whose signature appears in one of the chains of $\Gamma$):
  \begin{equation*}
    \set(\Gamma)=\bigcup_{\gamma\in\Gamma} \set(\gamma).
  \end{equation*}
  
\item  $\seq=(p_{\indexseq{k}})_{k\in[1..\ell]}\in \Pi^\ell$ 
  being a sequence of $\ell$ processes, we note $\subchain(\seq,k_1,k_2)$ the sub-sequence of $\seq$ that contains its $k_1$\textsuperscript{th} to $k_2$\textsuperscript{th} elements (with both $p_{\indexseq{k_1}}$ and $p_{\indexseq{k_2}}$ included). The resulting sub-sequence is truncated accordingly if $\seq$ does not contain enough elements.
  Formally, we have:
  \begin{equation*}
    \subchain(\seq,k_1,k_2) = (p_{\indexseq{k}})_{k\in[k_1..\minfn(\ell,k_2)]}.
  \end{equation*}
  If $|\seq|\leq k$ in particular, $\subchain(\seq,1,k)=\seq$.

  As above, we extend this definition to chains of signatures. If 
  $\pi=m\cryptchain p_{i_1} \cryptchain p_{i_2} \cryptchain \dotsb \cryptchain p_{i_\ell}$ is a chain then $\subchain(\pi,k_1,k_2)=(p_{i_k})_{k\in[k_1...\minfn(\ell,k_2)]}$.

\item $\pi=m\cryptchain p_{i_1} \cryptchain p_{i_2} \cryptchain \dotsb \cryptchain p_{i_\ell}$ being a chain of signature, we note $\truncate_k(\pi)$ the chain in which only the first $k$ signatures are kept (or all of $\pi$ if $|\pi|\leq k$): $\truncate_k(\pi) = m\cryptchain p_{i_1} \cryptchain p_{i_2} \cryptchain \dotsb \cryptchain p_{i_{\minfn(k,\ell)}}.$
  
\end{itemize}

\subsection{A generic weight-based synchronous BRB algorithm}
\label{sec:generic-weight-based}

\begin{algorithm}[tb]
  \small
  \InSynchroRound{$R=1$}{
    $\broadcast(\msgm(\{m\cryptchain p_{\orig}\}))$\; \label{line:generic:porig:bcast}
    $\brbdeliver(m)$. \label{line:generic:porig:brdeliver}
  }
  \caption{\brb-broadcast operation executed by $p_\orig$ at round $R=1$}
  \label{alg:generic:porig:code}
\end{algorithm}

\begin{algorithm}[tb]
\small
\InitStep{$\view_i \gets \emptyset$;
$\ready_i \gets \ffalse$;
$\toBeBcast_{i,r} \gets \emptyset$ \textbf{for all} $r\in[1..t+1]$%
\label{line:generic:viewi:modif}}{}\medskip

\InEachSynchroRound{$R\in [1..\finalround]$}{\label{line:generic:start:synch:round}

    \commStep
    
    \lIf{$R \geq 2$}{\broadcast $\msgm(\toBeBcast_{i,R})$} \label{line:generic:bcast} \label{line:generic:start:comm:step}
    \lIf*{$\ready_i$}{\stopKW.} \label{line:generic:stop:ready} \label{line:generic:end:comm:step}
    

    \compStep
    
    $\view_i \gets \view_i \cup \{\pi \in \setchains \mid \msgm(\setchains) \in \received_{i,R}\}$\; \label{line:generic:view:i:R} \label{start:generic:computation:step}
    
    $\toBeBcast_{i,R+1} \gets \{\pi \cryptchain p_i \mid \pi \in \view_i[R] \land p_i \not\in \pi \}$\; \label{line:generic:signing:next:round}
    
    $\knownmessages_{i,R} \gets \{m \mid \Exists m \cryptchain \pi \in \view_i$\}\; \label{line:generic:computing:knownmsg}

    \uIf{$\knownmessages_{i,R} = \{m\} \land \Exists w\in\mathbb{N}^{+}: \ovalbox{$\delpredicate(m,w,\view_i)$} \land R\geq \ovalbox{$\reveal(w)$}$}{ \label{line:generic:test:knowmessage}
      $\brbdeliver(m)$;
      $\ready_i \gets \ttrue$\; \label{line:generic:brdeliver:immediate}
    }
    \ElseIf{$R=\finalround$}{
        \label{line:generic:reached:finalround}
        $\weights_i \gets \{w \in \mathbb{N}^{+} \mid \Exists m \in \knownmessages_{i,R}: \ovalbox{$\delpredicate(m,w,\view_i)$}\}$\;
        \label{line:generic:no:delivery}\label{line:generic:Wi:set}
        
        \uIf{$\weights_i \neq \emptyset$}{ \label{line:generic:cond:weight:not:empty}
            $\candidatemsg_i \gets \{m \in \knownmessages_{i,R} \mid \ovalbox{$\delpredicate\big(m,\maxfn(\weights_i),\view_i\big)$}\}$\label{line:generic:wmaxj:computed}\;
        
            $\brbdeliver(\choice(\candidatemsg_i))$\; \label{line:generic:brdeliver:finalround}
        }
        \lElse*{$\brbdeliver(\bot)$.}\label{line:generic:brbdeliver:bot}
    }
    \label{end:generic:computation:step}
}
\label{line:generic:end:synch:round}
\caption{Skeleton of a Weight-based Synchronous BRB algorithm (code for $p_i\neq p_{\orig}$). The use of $\delpredicate()$ and $\reveal()$ are highlighted using rounded boxes.}
\label{algo:generic:SBRB:pi}
\end{algorithm}

Algorithms~\ref{alg:generic:porig:code} and~\ref{algo:generic:SBRB:pi} describe a general construction for synchronous deterministic broadcast algorithms that lends itself to good-case latency and early-stopping properties. Our approach is modular and hinges on two functions: a message-selection predicate (noted $\delpredicate$, standing for \emph{weight-based predicate}), and a revealing-round function (noted $\reveal$). 

For readability, Algorithm~\ref{alg:generic:porig:code} presents the code for the sending process $\psender$ separately. To \brb-broadcast $m$, $\psender$ simply signs $m$, produces the signature chain $m\cryptchain \psender$, and broadcasts a protocol message $\msgm(\{m\cryptchain \psender\})$ containing this chain to all correct processes before \brb-delivering $m$ locally. Here, and as in the rest of the paper, the operation $\broadcast(m)$ is used as a shorthand for ``{\bf for all $p_j\in\Pi$ do} \send $m$ to $p_j$ {\bf end for}''.

Algorithm~\ref{algo:generic:SBRB:pi} constitutes the core of the generic BRB algorithm we propose.
It  uses up to $t+1$ synchronous rounds (lines~\ref{line:generic:start:synch:round}-\ref{line:generic:end:synch:round}). $R$ is a global variable containing the current round number.
Each round is divided into a communication step (lines~\ref{line:generic:start:comm:step}-\ref{line:generic:end:comm:step}), in which processes broadcast and receive messages exchanged during the round, and a computation step (lines~\ref{start:generic:computation:step}-\ref{end:generic:computation:step}) in which they handle received messages and prepare the messages to be sent during the next round.
The set $\received_{i,R}$ represents the messages received by process $p_i$ during round $R$.
It is directly updated by the (synchronous) network layer.

The set $\toBeBcast_{i,R}$ contains the signature chains to be broadcast by $p_i$ during round $R$.
In the first round, $p_i\neq \psender$ remains silent. 
Process $p_i$ accumulates in the set $\view_i$ the signature chains it receives during each round (line~\ref{line:generic:view:i:R}).
The notation $\view_i[R]$ used at line~\ref{line:generic:signing:next:round} is a shortcut to denote the chains of $\view_i$ that contain exactly $R$ signatures and have therefore just been received.
More generally we use the notation $\view_i[r]\isdefinedas\{\gamma\in\view_i: |\gamma|=r\}$.
The chains of length $R$ that do not already contain $p_i$'s signature are signed by $p_i$ and stored for broadcasting in the next round (line~\ref{line:generic:signing:next:round}).

\sloppy Process $p_i$'s behavior in the computation step is driven by the predicate $\delpredicate$ and the function $\reveal$. 
The function $\delpredicate$ takes three parameters: a message $m$, a positive weight $w\in\mathbb{N}^{+}$, and a set of \emph{valid} chains, $\view$, which captures a process's current view.
It returns a Boolean value, which when true, indicates that $m$ can be considered as a possible message to be \brb-delivered with a weight $w$ according to the information contained in $\view$. More formally this can be expressed as
\begin{equation*}
  \begin{array}{cccc}
  \delpredicate:& \mathcal{M}\times\mathbb{N}^{+}\times\mathcal{P}\big(\mathsf{valid}(\mathcal{M}\times\Pi^{\ast})\big)&\to&
  \{\ttrue,\ffalse\}\\
  &(m,w,\view) &\mapsto& \delpredicate(m,w,\view),
  \end{array}
\end{equation*}
where $\mathcal{M}$ is the set of possible messages, $\mathbb{N}^{+}$ the set of positive integers, and $\mathcal{P}\big(\mathcal{M}\times\mathsf{valid}(\Pi^{\ast})\big)$ the powerset of valid signature chains.
In terms of vocabulary, we say that $p_i$ \emph{observes a predicate} of weight $w$ for a message $m$ during round $R$ if $\delpredicate(m,w,\view_i)=\ttrue$ during the computation step of round $R$ at $p_i$ once the new value of $\view_i$ has been computed (lines~\ref{line:generic:test:knowmessage}-\ref{end:generic:computation:step} of Algorithm~\ref{algo:generic:SBRB:pi}).
The function $\reveal$ is closely linked to $\delpredicate$, and helps a process decide when a message of weight $w$ (according to the predicate $\delpredicate$) is safe to \brb-deliver.
\begin{equation*}
  \begin{array}{cccc}
    \reveal:&\mathbb{N}^{+}&\to&[1..t+1]\\
        &w&\mapsto& \reveal(w).
  \end{array}
\end{equation*}

How $p_i$ uses the information provided by \delpredicate and \reveal depends on whether $p_i$ has reached round $t+1$ or not. In earlier rounds, $p_i$ uses the conspicuity property of the predicate $\delpredicate$ to detect if a message $m$ is backed by a predicate ``heavy enough'' that cannot be beaten by any other message $m'\neq m$ (condition at line~\ref{line:generic:test:knowmessage}).
If this is the case, $m$ is \brb-delivered at line~\ref{line:generic:brdeliver:immediate}, and the flag $\ready_i$ is toggled to stop the algorithm in the next round.\footnote{\label{fn:more:detail:on:crashed:proc}
The extra round of communication induced by $\ready_i$ is needed to ensure all correct processes observe the same predicate $\delpredicate$ as $p_i$.
However, by delivering as soon as the condition of line~\ref{line:generic:test:knowmessage} is true, the algorithm does not ensure that crashed processes benefit from the \brbNDYprop and \brbGDprop properties.
These additional guarantees can be provided at the cost of one extra round by postponing the \brb-delivery of $m$ by one round from line~\ref{line:generic:brdeliver:immediate} to line~\ref{line:generic:stop:ready}.
}
``Heavy enough'' means that $w$, the weight of the predicate observed by $p_i$, should have revealing round $\reveal(w)$ of at most $R$.
This implies that, by round $R$, all predicates of weight $w$ or more must have become conspicuous and allows $p_i$ to make a safe \brb-delivery because it knows that no message $m'$ can exhibit a predicate heavier or equal to $w$, ensuring that $m$ will prevail in case of conflicts possibly detected by other correct processes.
Figure~\ref{fig:predicate-propagation} illustrates the above mechanism.

If $p_i$ reaches round $t+1$ without having \brb-delivered any message (line~\ref{line:generic:reached:finalround}), it tallies all messages known to it and keeps only messages backed by a predicate $\delpredicate$ with maximal weight $\maxfn(\weights_i)$.
Process $p_i$ uses a deterministic function $\choice()$ to break any tie between messages.
Finally, if no message is known to $p_i$ ($\weights_i=\emptyset$), $p_i$ \brb-delivers a default special value, $\bot$ (line~\ref{line:generic:brbdeliver:bot}).

\begin{figure}[ht]
  \centering
\begin{tikzpicture}[x=1cm,y=0.8cm]
    \newcommand{\eventDot}[3]{
    \node[circle,draw,fill=black,inner sep=0cm,minimum size=.16cm] (#1) at (#2,#3) {};
    }
    
    \tikzmath{
    \lPi = 9; 
    \lPjPk = 5; 
    \yInterP = 2; 
    \xPredPi = 1; 
    \xRevPi = 4; 
    \xTPlus1 = \lPi-1; 
    \xPiCallout = \xRevPi-3.25;
    \yPiCallout = -2;
    \xPjPkCallout = \xTPlus1+2.7;
    \yPjCallout = -\yInterP+1;
    \xPkCallout = \xTPlus1+3;
    \yPkCallout = -2*\yInterP+1;
    }

    \def\colorCallouts{Periwinkle!30}
    
    
    \node at (-.5,0) {$p_i$};
    \draw[->] (0,0) -- (\lPi,0);

    \eventDot{predPi}{\xPredPi}{0}
    \eventDot{revPi}{\xRevPi}{0}

    \node[overlay,rectangle callout,fill=\colorCallouts,callout absolute pointer=(predPi)] at (\xPredPi-1,1) {$\delpredicate(m,w,\view_i)$};

    \node (revPiLabel) at (\xRevPi,1) {$\reveal(w)$};
    \draw[dashed] (revPiLabel) -- (\xRevPi,-.65);

    \node[align=center,overlay,rectangle callout,fill=\colorCallouts,callout absolute pointer=(revPi),callout pointer xshift=.7cm] at (\xPiCallout,\yPiCallout) {$\delpredicate(m'{\neq}m,w'{\geq}w,\view_i)$\\$\implies$ $p_i$ brb-delivers $m$};

    \draw[red] (\xPiCallout-1.5,\yPiCallout+.1) -- (\xPiCallout+1.5,\yPiCallout+.6) (\xPiCallout-1.5,\yPiCallout+.6) -- (\xPiCallout+1.5,\yPiCallout+.1);

    
    \node at (\lPi-\lPjPk-.5,-\yInterP) {$p_j$};
    \draw[dashed] (\lPi-\lPjPk,-\yInterP) -- (\lPi-\lPjPk+.65,-\yInterP);
    \draw[->] (\lPi-\lPjPk+.65,-\yInterP) -- (\lPi,-\yInterP);

    \eventDot{predPj}{\xTPlus1}{-\yInterP}
    \draw[dotted,thick,->] (predPi) -- (predPj);
    \draw[dotted,thick,->] (revPi) -- (predPj);

    \node[align=center,overlay,rectangle callout,fill=\colorCallouts,callout absolute pointer=(predPj),callout pointer xshift=-.2cm] at (\xPjPkCallout,\yPjCallout) {$\delpredicate(m,w,\view_j)$\\$\delpredicate(m'{\neq}m,w'{\geq}w,\view_j)$};

    \draw[red] (\xPjPkCallout-1.5,\yPjCallout) -- (\xPjPkCallout+1.5,\yPjCallout-.5) (\xPjPkCallout-1.5,\yPjCallout-.5) -- (\xPjPkCallout+1.5,\yPjCallout);

    
    \node at (\lPi-\lPjPk-.5,-2*\yInterP) {$p_k$};
    \draw[dashed] (\lPi-\lPjPk,-2*\yInterP) -- (\lPi-\lPjPk+.65,-2*\yInterP);
    \draw[->] (\lPi-\lPjPk+.65,-2*\yInterP) -- (\lPi,-2*\yInterP);

    \eventDot{predPk}{\xTPlus1}{-2*\yInterP}
    \draw[dotted,thick,->] (predPi) -- (predPk);
    \draw[dotted,thick,->] (revPi) -- (predPk);

    \node[align=center,overlay,rectangle callout,fill=\colorCallouts,callout absolute pointer=(predPk),callout pointer xshift=-.2cm] at (\xPjPkCallout,\yPkCallout) {$\delpredicate(m,w,\view_j)$\\$\delpredicate(m'{\neq}m,w'{\geq}w,\view_j)$};

    \draw[red] (\xPjPkCallout-1.5,\yPkCallout) -- (\xPjPkCallout+1.5,\yPkCallout-.5) (\xPjPkCallout-1.5,\yPkCallout-.5) -- (\xPjPkCallout+1.5,\yPkCallout);


    \node (tPlus1Label) at (\xTPlus1,1) {round $t+1$};
    \draw[dashed] (tPlus1Label) -- (\xTPlus1,-2*\yInterP-.65);


    \node at (11.5,0) {};
    
\end{tikzpicture}
  
  \caption{If a correct process $p_i$ observes $\delpredicate(m,w,\view_i)$ or reaches round $\reveal(w)$ only with message $m$ as a candidate, then these events are eventually ``propagated'' to all other correct processes (here $p_j$ and $p_k$), at the latest in round $t+1$ in bad cases, and immediately in good cases. \ta{not necessarily, Byzantine processes can send signatures to only some correct processes, delaying the time of the observation of the predicate for some processes}}
  \label{fig:predicate-propagation}
\end{figure}
    
\subsection{\texorpdfstring{\lambdagood}{lambda}-BRB-robustness and BRB guarantees}
\label{sec:lambd-brb-robustn}

The pair $(\delpredicate,\reveal)$ is said to be \emph{\lambdagood-BRB-robust}\ft{TBD: there are probably better options.} if \delpredicate and $\reveal()$ exhibit the following properties when used in Algorithm~\ref{algo:generic:SBRB:pi}, where \lambdagood is an integer that depends on $n$, $t$, and $c$.

The properties are grouped into three blocks.
The first block, \emph{Monotony and Local Conspicuity}, states four intuitive properties on the stability of the \delpredicate predicate and the \reveal function when their weight and view parameters change.
The second block, \emph{Safety}, formalizes the conspicuity and the final visibility of a weight-based predicate, two properties that are essential to prove the safe termination of Algorithm~\ref{algo:generic:SBRB:pi}.
Finally, the third block, \emph{Liveness}, contains a single property that describes the behaviour of \delpredicate in good cases, which determines the good-case latency of Algorithm~\ref{algo:generic:SBRB:pi}.

The tight connection between these properties and BRB broadcast is captured by Theorem~\ref{theo:algo:generic:SBRB:pi:works}, described just afterwards. Theorem~\ref{theo:algo:generic:SBRB:pi:works} states that if a pair $(\delpredicate,\reveal)$ is \lambdagood-BRB-robust (i.e. fulfills the properties listed below), then Algorithm~\ref{algo:generic:SBRB:pi} implements a BRB broadcast that exhibits a  good-case latency of $\maxfn\big(2,\lambdagood\big)$.

In the following  $w,w'\in\mathbb{N}^{+}$ are weights; $m\in\mathcal{M}$ is a message; and $\view$ and $view'$ are sets of valid signature chains.

\begin{itemize}
\item Monotony and Local Conspicuity\ft{I'm not a fan of ``Local Conspicuity'', but could not find anything better}
  \begin{itemize}
  \item \WBPWeightMonoProp: If \delpredicate holds for a given weight, it should also hold for any smaller weight: $\forall w\geq w': \delpredicate(m,w,\view) \Longrightarrow \delpredicate(m,w',\view)$.
  
  \item \WBPViewMonoProp: If \delpredicate holds for a given view, it should also hold for any larger view (in the sense of set inclusion): $\forall view\subseteq view', \delpredicate(m,w,\view) \Longrightarrow \delpredicate(m,w,\view')$.
  
  \item \WBPRevealWeightMonoProp: A heavier predicate should exhibit an earlier revealing round: $\forall w>w': \reveal(w)\leq \reveal(w')$.
  
  \item \WBPLocalConspProp: A message exhibits a weight-$1$ predicate in a view if and only if it appears in the view: $\delpredicate(m,1,\view) \Longleftrightarrow m\in\messageOf(view)$
  \end{itemize}
\item Safety
  \begin{itemize}
  \item \WBPConspiProp: Consider $p_i$, $p_j$ two correct processes different from $\psender$, and $w$ a weight.
  If $p_i$ observes $\delpredicate(m,w,\view_i)$ during its execution, and $p_j$ executes round $\reveal(w)$, then $m$ is known to $p_j$ at the latest by round $\reveal(w)$.
  
  \item \WBPVisiProp: Consider $p_i$, $p_j$ two correct processes different from $\psender$. If $\psender$ is Byzantine and $p_i$ observes $\delpredicate(m,$ $w,$ $\view_i)$ during its execution, and $p_j$ executes round $t+1$, then $p_j$ observes $\delpredicate(m,$ $w,$ $\view_j)$ at round $t+1$.
  \end{itemize}
\item Liveness
  \begin{itemize}
  \item \WBPLiveProp: If $\psender$ is correct and \brb-broadcasts a message $m$, then all correct processes $p_i\neq\psender$ observe $\delpredicate(m,\wgood,\view_i)$ during round 2, where $\wgood\in\mathbb{N}^{+}$ is a weight such that $\reveal(\wgood)=\lambdagood$.
  \end{itemize}
\end{itemize}

\begin{theorem}\label{theo:algo:generic:SBRB:pi:works}
  If the pair $(\delpredicate, \reveal)$ is \lambdagood-BRB-robust, where \lambdagood is an integer that depends on $n$, $t$, and $c$, then Algorithm~{\em\ref{algo:generic:SBRB:pi}} implements a Synchronous Byzantine Reliable Broadcast object with a worst-case latency of $t+1$ rounds. Furthermore, if the initial sender $\psender$ is correct, correct processes \brb-deliver in at most $\maxfn\big(2,\lambdagood\big)$ rounds.
\end{theorem}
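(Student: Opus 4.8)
The plan is to verify the five BRB properties together with the two latency bounds, disposing of the easy safety and liveness facts first and concentrating the effort on agreement (\brbNDYprop) and \brbGDprop in the harder case where \psender is Byzantine. For \brbNDNprop, I would note that each correct $p_i\neq\psender$ delivers at most once: an early delivery at line~\ref{line:generic:brdeliver:immediate} sets $\ready_i$, which triggers \stopKW at the head of the next round (line~\ref{line:generic:stop:ready}), whereas a final-round delivery only occurs at $R=\finalround$, and the two delivery branches are mutually exclusive within any single round. For \brbVprop, I would use that every valid chain begins with \psender's signature and that a correct \psender signs only the unique message $m$ it \brb-broadcasts, so $\knownmessages_{i,R}\subseteq\{m\}$ at every correct $p_i$ and no value other than $m$ can ever be delivered; in particular \brbNDYprop is immediate when \psender is correct. \brbLDprop is trivial, since a correct \psender \brb-delivers $m$ at line~\ref{line:generic:porig:brdeliver}.

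The core of the argument is \brbNDYprop when \psender is Byzantine, for which I would split on whether any correct process performs an early delivery. Suppose some correct $p_i$ early-delivers $m$ with witnessing weight $w$ in a round $R\geq\reveal(w)$, so that $\knownmessages_{i,R}=\{m\}$. The key lemma to establish is that \emph{no} correct process ever observes $\delpredicate(m',w',\cdot)$ with $m'\neq m$ and $w'\geq w$. Indeed, from such an observation at a correct $p_j$, \WBPRevealWeightMonoProp gives $\reveal(w')\leq\reveal(w)\leq R$, so $p_i$ has run round $\reveal(w')$; then \WBPConspiProp, applied with $p_j$ as the observer and $p_i$ as the witness reaching round $\reveal(w')$, forces $m'$ to be known to $p_i$ by round $\reveal(w')\leq R$, contradicting $\knownmessages_{i,R}=\{m\}$. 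This lemma does the rest of the work: two early-deliverers must agree, since a symmetric application would otherwise force both $w<w'$ and $w'<w$; and any final-round deliverer $p_j$ outputs $m$ as well, because \WBPVisiProp puts $(m,w)$ among $p_j$'s round-$\finalround$ observations, hence $w\in\weights_j$, while the lemma forces $\maxfn(\weights_j)$ to be attained by $m$ alone, so $\candidatemsg_j=\{m\}$ and $\choice(\candidatemsg_j)=m$. In the complementary regime, where no correct process early-delivers, every correct process runs to round $\finalround$; there \WBPVisiProp applied in both directions shows that all correct processes hold exactly the same set of $(m,w)$ pairs at round $\finalround$, so they compute identical $\weights$ and $\candidatemsg$ sets and the deterministic $\choice$ returns the same value (or all deliver $\bot$).

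\brbGDprop then follows from the same case split: a correct process never crashes, so it either early-delivers or reaches round $\finalround$ and delivers there, and the agreement analysis shows that a single correct delivery forces all correct processes to deliver the same value. The worst-case bound of $t+1$ rounds is immediate from the loop range $R\in[1..\finalround]$. For the good case, I would invoke \WBPLiveProp: when \psender is correct, every correct $p_i\neq\psender$ observes $\delpredicate(m,\wgood,\view_i)$ by round $2$, with $\reveal(\wgood)=\lambdagood$; since $\knownmessages_{i,R}=\{m\}$ for all $R$ and, by \WBPViewMonoProp, this observation persists as $\view_i$ grows, the guard of line~\ref{line:generic:test:knowmessage} fires no later than round $\maxfn(2,\lambdagood)$, which, together with \psender delivering in round $1$, gives the claimed good-case latency.

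I expect the main obstacle to be the Byzantine-sender agreement, and specifically the ``no heavier competitor'' lemma: getting the quantifiers right so that \WBPConspiProp is invoked on the correct observer/witness pair and round, and verifying that $p_i$ has indeed executed round $\reveal(w')$ (via $\reveal(w')\leq R$ from \WBPRevealWeightMonoProp) before drawing the contradiction with $\knownmessages_{i,R}=\{m\}$. A secondary subtlety, flagged by the algorithm's footnote, is that delivering at line~\ref{line:generic:brdeliver:immediate} rather than one round later means the \brbNDYprop and \brbGDprop guarantees are claimed only for processes that are still correct, so I would be careful to phrase each agreement step in terms of correct processes that actually execute the relevant round.
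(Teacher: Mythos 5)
Your proposal is correct and takes essentially the same route as the paper's proof: the same decomposition into safety/liveness lemmas, the same three-way case analysis on early versus final-round delivery for \brbNDYprop under a Byzantine sender (with \WBPConspiProp plus the monotony properties handling agreement between deliverers, \WBPVisiProp---together with \WBPWeightMonoProp and \WBPLocalConspProp, which you implicitly use to get $m\in\knownmessages_{j,t+1}$ and hence $w\in\weights_j$---handling the round-$\finalround$ tally), and \WBPLiveProp plus \WBPViewMonoProp giving the $\maxfn\big(2,\lambdagood\big)$ good-case bound. Your only departure is organizational: you package the agreement core as a single ``no heavier competitor'' lemma applied symmetrically, invoking \WBPRevealWeightMonoProp uniformly where the paper's early/final-round case instead lowers the weight via \WBPWeightMonoProp before invoking \WBPConspiProp---an equivalent rearrangement of the same argument.
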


\subsection{Proof of Theorem~\ref{theo:algo:generic:SBRB:pi:works}}
\label{sec:proof-theor-refth}

The proof of Theorem~\ref{theo:algo:generic:SBRB:pi:works} follows from Lemmas~\ref{lemma:generic:Vprop}-\ref{lemma:generic:worst:t+1:rounds}, which follow.

\begin{lemma}
\label{lemma:generic:Vprop}
Algorithm~{\em\ref{algo:generic:SBRB:pi}} verifies the 
  {\em \brbVprop} Property.
\end{lemma}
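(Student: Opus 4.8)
The plan is to exploit the unforgeability of the signature scheme to show that, when $\psender$ is correct, the only message that can ever appear in the view of any correct process is the single message $m$ that $\psender$ actually signs. First I would recall three facts already established in the excerpt: a chain is \emph{valid} only if it starts with $\psender$'s signature (Section~\ref{sec:notations}); the filter function guarantees that correct processes only ever receive valid chains; and $\view_i$ is populated solely from received chains (line~\ref{line:generic:view:i:R}). Since $\psender$ is correct, it executes Algorithm~\ref{alg:generic:porig:code} and signs exactly one chain, $m\cryptchain\psender$, for the message $m$ it \brb-broadcasts (line~\ref{line:generic:porig:bcast}); by unforgeability, no valid chain $m'\cryptchain\psender\cryptchain\dotsb$ with $m'\neq m$ can ever exist. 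Consequently every chain in $\view_i$ is a chain for $m$, so $\messageOf(\view_i)\subseteq\{m\}$ and hence $\knownmessages_{i,R}\subseteq\{m\}$ for every correct $p_i\neq\psender$ and every round $R$ (line~\ref{line:generic:computing:knownmsg}).

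With this invariant in hand, the remainder is a case analysis over the (only) program points at which a correct process \brb-delivers a genuine message. The early-delivery branch (line~\ref{line:generic:brdeliver:immediate}) is guarded by $\knownmessages_{i,R}=\{m'\}$ for some $m'$ (line~\ref{line:generic:test:knowmessage}); the invariant forces $m'=m$, so the delivered value is $m$. At the final round (line~\ref{line:generic:reached:finalround}), the set $\candidatemsg_i$ is a subset of $\knownmessages_{i,R}\subseteq\{m\}$ (line~\ref{line:generic:wmaxj:computed}), and since this branch is entered only when $\weights_i\neq\emptyset$ the set is non-empty, whence $\candidatemsg_i=\{m\}$ and $\choice(\candidatemsg_i)=m$ (line~\ref{line:generic:brdeliver:finalround}). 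Finally, the sender $\psender$ itself delivers $m$ (line~\ref{line:generic:porig:brdeliver}), which it indeed broadcast. In every case the delivered message equals $m$, which $\psender$ \brb-broadcast, establishing \brbVprop.

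The only point requiring a word of care is the default $\bot$ delivery (line~\ref{line:generic:brbdeliver:bot}): it emits the special value $\bot$ rather than a genuine message, so it does not threaten validity; moreover, it is in fact unreachable here, since \WBPLiveProp guarantees that a correct $\psender$ makes $m$ visible with weight $\wgood$ to every correct process by round $2$, keeping $\knownmessages_{i,R}$ (and hence $\weights_i$) non-empty at round $t+1$. I expect the main—and essentially the only conceptual—obstacle to be the clean statement and justification of the unforgeability invariant, namely that a correct sender's signature pins down a unique message across all valid chains. Once that is fixed, the delivery-point case analysis is pure bookkeeping that does not even invoke the conspicuity, final-visibility, or monotony properties of the predicate.
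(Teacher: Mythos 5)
Your proof is correct and takes essentially the same route as the paper's: any delivered message must appear in a valid signature chain, which by definition begins with $\psender$'s signature, so the security of signatures forces a correct $\psender$ to have executed line~\ref{line:generic:porig:bcast} and hence to have \brb-broadcast that message (your uniqueness invariant $\knownmessages_{i,R}\subseteq\{m\}$ is slightly stronger than validity needs, but follows from the same argument). One small caveat: your aside that line~\ref{line:generic:brbdeliver:bot} is unreachable invokes \WBPLiveProp, which this lemma does not assume (the paper proves \brbVprop with no BRB-robustness hypothesis at all); fortunately your primary justification---that $\bot$ is a default special value rather than a genuine message---suffices on its own, matching the paper's implicit treatment.
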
%


\begin{proof}
  Consider $p_i$ a correct process.
  \begin{itemize}
  \item If $p_i=p_{\orig}$, the \brb-delivery of a message $m$ at line~\ref{line:generic:porig:brdeliver} of Algorithm~\ref{alg:generic:porig:code} trivially implies that $p_i$ has executed Algorithm~\ref{alg:generic:porig:code}, and hence has \brb-broadcast $m$.
    
  \item If $p_i\neq p_{\orig}$, $p_i$ may \brb-deliver a message $m$ either at lines~\ref{line:generic:brdeliver:immediate} or~\ref{line:generic:brdeliver:finalround} of Algorithm~\ref{algo:generic:SBRB:pi}. In both cases, $m$ belongs to some $\knownmessages_{i,R}$ variable computed at line~\ref{line:generic:computing:knownmsg}, and must therefore appear in a signature chain of the form $m\cryptchain p_{i_1} \cryptchain \cdots \cryptchain p_{i_\ell}$ received by $p_i$ at line~\ref{line:generic:view:i:R}. As $p_i$ is correct, it only accepts and processes valid chains of signatures by assumption, in which $m$ is first signed by $p_{\orig}$ (i.e. $p_{i_1}=p_{\orig}$). Since $p_{\orig}$ is correct, and we have assumed signatures to be secure, for $m$ to be signed by $p_{\orig}$, $p_{\orig}$ must have executed line~\ref{line:generic:porig:bcast} of Algorithm~\ref{alg:generic:porig:code}, and must therefore have \brb-broadcast $m$.\qedhere
  \end{itemize}%
\end{proof}

\begin{lemma}%
  Algorithm~{\em\ref{algo:generic:SBRB:pi}} verifies the 
  {\em \brbNDNprop} Property.%
\end{lemma}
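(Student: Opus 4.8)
The plan is to argue purely from the control flow of the two algorithms, observing that each contains at most one reachable \brbdeliver call per execution. I would proceed by case analysis on whether the correct process $p_i$ is the sender.

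First I would dispatch the case $p_i=p_{\orig}$. Since Algorithm~\ref{algo:generic:SBRB:pi} is the code of processes \emph{different from} $p_{\orig}$, the sender runs only Algorithm~\ref{alg:generic:porig:code}, whose sole \brbdeliver call is line~\ref{line:generic:porig:brdeliver}, executed once in round $R=1$. Hence $p_{\orig}$ \brb-delivers exactly once. The substance of the lemma is therefore the case $p_i\neq p_{\orig}$, for which I would enumerate the three \brbdeliver statements of Algorithm~\ref{algo:generic:SBRB:pi}: the \emph{immediate} delivery at line~\ref{line:generic:brdeliver:immediate} (inside the \uIf of line~\ref{line:generic:test:knowmessage}), and the two \emph{final-round} deliveries at lines~\ref{line:generic:brdeliver:finalround} and~\ref{line:generic:brbdeliver:bot} (guarded by the \ElseIf of line~\ref{line:generic:reached:finalround}).

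The argument then splits into an \emph{intra-round} and an \emph{inter-round} part. For the intra-round part, I would note that all three deliveries sit inside the single \uIf/\ElseIf cascade of the computation step, so they are mutually exclusive within one round: if the condition of line~\ref{line:generic:test:knowmessage} holds, only line~\ref{line:generic:brdeliver:immediate} fires and the final-round block is skipped; otherwise at most the branch selected by line~\ref{line:generic:cond:weight:not:empty} fires, yielding exactly one of lines~\ref{line:generic:brdeliver:finalround} or~\ref{line:generic:brbdeliver:bot}. Thus $p_i$ delivers at most once per round. For the inter-round part, I would observe that an immediate delivery at line~\ref{line:generic:brdeliver:immediate} also sets the flag $\ready_i\gets\ttrue$; consequently, in the communication step of the next round the guard of line~\ref{line:generic:stop:ready} triggers \stopKW before the computation step is reached, so no further delivery can occur. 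A final-round delivery, on the other hand, necessarily happens at $R=\finalround$, after which the loop of lines~\ref{line:generic:start:synch:round}--\ref{line:generic:end:synch:round} terminates and no later round exists. Combining the two parts gives a single delivery overall.

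I do not expect a genuine mathematical obstacle here, as the statement is a control-flow invariant rather than a combinatorial claim; the only care required is bookkeeping. The one point that must be checked explicitly is the boundary case where the immediate delivery occurs at round $R=\finalround$: here the flag $\ready_i$ cannot help (there is no round $\finalround+1$), but the loop simply ends, and the \ElseIf guarding the final-round deliveries is not taken since line~\ref{line:generic:test:knowmessage} held, so no second delivery slips through. Verifying that the $\ready_i$/\stopKW mechanism indeed forbids re-entering the computation step of round $R+1$ after an early delivery is the crux of the inter-round argument and is where I would be most careful.
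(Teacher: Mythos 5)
Your proof is correct and follows essentially the same route as the paper's, which argues in one sentence that any \brbdeliver (at line~\ref{line:generic:porig:brdeliver} of Algorithm~\ref{alg:generic:porig:code}, or lines~\ref{line:generic:brdeliver:immediate} or~\ref{line:generic:brdeliver:finalround} of Algorithm~\ref{algo:generic:SBRB:pi}) is followed by termination, either immediately or via \stopKW at line~\ref{line:generic:stop:ready} in the next round, with no further \brbdeliver. Your added bookkeeping (intra-round mutual exclusion of the \texttt{if}/\texttt{else} branches and the boundary case of an immediate delivery at round $\finalround$) is sound and merely makes explicit what the paper treats as trivial.
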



\begin{proof}
  Trivially, this is because once a correct process executes a \brbdeliver operation (either at line~\ref{line:generic:porig:brdeliver} of Algorithm~\ref{alg:generic:porig:code}, or lines~\ref{line:generic:brdeliver:immediate} or~\ref{line:generic:brdeliver:finalround} of Algorithm~\ref{algo:generic:SBRB:pi}), it terminates its execution, either immediately or at line~\ref{line:generic:stop:ready} in the next round, without invoking \brbdeliver.
\end{proof}

\begin{lemma}
  Algorithm~{\em\ref{algo:generic:SBRB:pi}} verifies the 
  {\em \brbLDprop} property.
\end{lemma}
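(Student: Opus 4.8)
The plan is to observe that \brbLDprop is the lightest of the five BRB properties and follows immediately from the sender's own behavior. The hypothesis is that $p_{\orig}$ is correct and \brb-broadcasts some message $m$. First I would recall that in this model a correct process neither deviates from its code nor crashes (a premature stop is counted as a Byzantine failure), so a correct $p_{\orig}$ executes Algorithm~\ref{alg:generic:porig:code} in its entirety: after broadcasting the chain $\msgm(\{m\cryptchain p_{\orig}\})$ at line~\ref{line:generic:porig:bcast}, it reaches line~\ref{line:generic:porig:brdeliver} and invokes $\brbdeliver(m)$. Since $p_{\orig}$ is itself a correct process, taking the witness $p_j=p_{\orig}$ directly establishes that at least one correct process \brb-delivers a message, which is exactly \brbLDprop. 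One small point of care here is that the lemma is really about the composite protocol (Algorithms~\ref{alg:generic:porig:code} and~\ref{algo:generic:SBRB:pi} together), not Algorithm~\ref{algo:generic:SBRB:pi} in isolation, so I would make explicit that the delivering line lives in the sender's code.

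Should one prefer a witness distinct from the sender (which is convenient when later chaining into the \brbGDprop argument), I would instead appeal to \WBPLiveProp. In a good-case run, since $p_{\orig}$ is correct, the only chain-originating message any correct $p_i\neq\psender$ can ever validate is $m$: the sender signs only $m$, and signatures being unforgeable, no valid chain can carry a different message attributed to $p_{\orig}$. Hence $\knownmessages_{i,R}=\{m\}$ from round $2$ onward. \WBPLiveProp then guarantees that each such $p_i$ observes $\delpredicate(m,\wgood,\view_i)$ during round $2$, with $\reveal(\wgood)=\lambdagood\leq t+1$. Consequently the guard of line~\ref{line:generic:test:knowmessage} becomes true no later than round $\maxfn(2,\lambdagood)$, so each non-sender correct process also reaches line~\ref{line:generic:brdeliver:immediate} and \brb-delivers $m$.

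I expect no genuine obstacle in this lemma; the entire subtlety is definitional rather than combinatorial. The only thing to keep straight is the model's convention that ``correct'' excludes crashing, which is precisely what guarantees the correct sender runs through to line~\ref{line:generic:porig:brdeliver}. I would therefore present the first paragraph as the actual proof and mention the \WBPLiveProp-based route only as a remark, since it anticipates the stronger delivery guarantees needed for \brbGDprop.
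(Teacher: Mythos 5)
Your first paragraph is exactly the paper's proof: a correct $p_{\orig}$ executes Algorithm~\ref{alg:generic:porig:code} and \brb-delivers its own message at line~\ref{line:generic:porig:brdeliver}, so $p_{\orig}$ itself is the witness, and your clarification that the lemma concerns the composite protocol is consistent with how the paper argues. The alternative route via \WBPLiveProp is sound but, as you note, belongs elsewhere (it is essentially the paper's Lemma~\ref{lemma:generic:porig:correct:max:nb:round} and would require the \lambdagood-BRB-robustness hypothesis, which this lemma does not assume), so keeping it as a remark is the right call.
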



\begin{proof}
  The property trivially follows from the code executed by $\psender$ (Algorithm~\ref{alg:generic:porig:code}). If $p_{\orig}$ is correct it executes Algorithm~\ref{alg:generic:porig:code} to broadcast
    a message $m$, then \brb-delivers its own message at line~\ref{line:generic:porig:brdeliver}.
\end{proof}

\begin{lemma}
  If the pair $(\delpredicate, \reveal)$ is \lambdagood-BRB-robust, then, Algorithm~{\em\ref{algo:generic:SBRB:pi}} verifies the 
  {\em \brbNDYprop} Property.
\end{lemma}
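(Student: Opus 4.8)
The plan is to fix two correct processes $p_i$ and $p_j$ that both \brb-deliver and to show they deliver the same value, splitting on whether $\psender$ is correct. If $\psender$ is correct, I would invoke \brbVprop: since $\psender$ runs Algorithm~\ref{alg:generic:porig:code} and signs a single message $m$, every valid chain a correct process accepts begins with $\psender$'s signature on that same $m$, so $\messageOf(\view_k)\subseteq\{m\}$, i.e. $\knownmessages_{k,R}\subseteq\{m\}$ for every correct $p_k$. Thus no correct process can \brb-deliver a value different from $m$, and the only remaining danger is that some correct process delivers the default $\bot$ at line~\ref{line:generic:brbdeliver:bot}. This I would rule out with \WBPLiveProp (every correct $p_k\neq\psender$ observes $\delpredicate(m,\wgood,\view_k)$ in round~$2$) together with \WBPViewMonoProp (the predicate persists as $\view_k$ grows), so $\weights_k\neq\emptyset$ whenever $p_k$ reaches round $t+1$; hence every correct process delivers $m$.

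The substantive case is $\psender$ Byzantine, where both $p_i,p_j\neq\psender$ run Algorithm~\ref{algo:generic:SBRB:pi} and each delivers either early (line~\ref{line:generic:brdeliver:immediate}) or in the final round (line~\ref{line:generic:brdeliver:finalround}/\ref{line:generic:brbdeliver:bot}). I would treat first the subcase where both deliver in round $t+1$. Here \WBPVisiProp, applied in both directions, shows that $p_i$ and $p_j$ observe exactly the same set of pairs $\{(m,w):\delpredicate(m,w,\view)\}$ in round $t+1$; consequently $\weights_i=\weights_j$, and (using \WBPWeightMonoProp and \WBPLocalConspProp to identify $\candidatemsg$ with the messages attaining the common maximal weight, since $\delpredicate(m,\maxfn(\weights),\view)$ already forces $m\in\knownmessages$) we get $\candidatemsg_i=\candidatemsg_j$. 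Determinism of $\choice$ then forces the same delivered value, and if the common set is empty both deliver $\bot$.

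For the subcases where at least one process, say $p_i$, delivers early at some round $R_i$ with message $m$, weight $w$ satisfying $R_i\ge\reveal(w)$, and $\knownmessages_{i,R_i}=\{m\}$, the central claim I would establish is that no message $m'\neq m$ can attain weight $\ge w$ at any correct process. Indeed, suppose some correct $p_k$ observed $\delpredicate(m',w'',\view_k)$ with $w''\ge w$; by \WBPRevealWeightMonoProp we have $\reveal(w'')\le\reveal(w)\le R_i$, so $p_i$ executed round $\reveal(w'')$, and \WBPConspiProp would then place $m'$ in $\knownmessages_{i,\reveal(w'')}\subseteq\knownmessages_{i,R_i}=\{m\}$ (using monotonicity of $\knownmessages$ in the round number), forcing $m'=m$. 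This domination claim handles an early-delivering $p_j$ directly (WLOG $R_i\le R_j$, so $R_j\ge\reveal(w)$ and \WBPConspiProp puts $m$ in the singleton $\knownmessages_{j,R_j}=\{m'\}$, whence $m=m'$), and for a $p_j$ delivering in round $t+1$ I would combine it with \WBPVisiProp: final visibility guarantees $\delpredicate(m,w,\view_j)$ holds in round $t+1$, so $\maxfn(\weights_j)\ge w$, while the claim shows every message attaining that maximal weight equals $m$; hence $\candidatemsg_j=\{m\}$ and $p_j$ delivers $m$.

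I expect the mixed case (one early, one final-round delivery) to be the main obstacle, because it is the only one that must simultaneously prove two things about $p_j$: that $m$ survives to the final tally with a winning weight (via \WBPVisiProp) and that no competitor can match or exceed it (via \WBPConspiProp and \WBPRevealWeightMonoProp). The delicate bookkeeping is checking the ``executes round $\reveal(\cdot)$'' preconditions of conspicuity—these follow from $\reveal(w'')\le R_i\le t+1$ and from $p_j$ reaching round $t+1$—and using monotonicity of $\knownmessages$ and $\view$ in the round number to pull an alleged competitor back before round $R_i$. The remaining properties (\WBPWeightMonoProp, \WBPViewMonoProp, \WBPLocalConspProp) enter only as routine bridges between the ``weight'' and ``appears in the view'' viewpoints.
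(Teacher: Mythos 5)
Your proposal is correct and follows essentially the same route as the paper's proof: the same split on whether $\psender$ is correct, and for a Byzantine sender the same three subcases (both early, both final-round, mixed), with \WBPVisiProp giving $\weights_i=\weights_j$ in the final-round case and \WBPConspiProp plus \WBPRevealWeightMonoProp resolving the early and mixed cases — your ``domination claim'' is just a factored restatement of the paper's Case~1/Case~3 argument, and is sound as stated. If anything, your handling of the correct-sender case is slightly more careful than the paper's, which appeals only to \brbVprop and leaves implicit the point you address explicitly (via \WBPLiveProp and \WBPViewMonoProp) that no correct process can fall through to the $\bot$-delivery at line~\ref{line:generic:brbdeliver:bot}.
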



\begin{proof}~
  \begin{itemize}
  \item If $p_{\orig}$ is correct, $p_{\orig}$ \brb-broadcasts one single message $m$ (Algorithm~\ref{alg:generic:porig:code}), and by \brbVprop (Lemma~\ref{lemma:generic:Vprop}), all correct processes that do \brb-deliver a message only \brb-deliver $m$.
  
  \item If $p_{\orig}$ is Byzantine, consider two correct processes $p_i$ and $p_j$ (both necessarily different from $p_{\orig}$) that each \brb-deliver some message: $p_i$ \brb-delivers $m_i$ and $p_j$ \brb-delivers $m_j$. We distinguish three cases depending on the lines at which $p_i$ and $p_j$ execute \brbdeliver.
    \begin{itemize}
    \item Case 1: Assume $p_i$ and $p_j$ both deliver their respective message at line~\ref{line:generic:brdeliver:immediate} of Algorithm~\ref{algo:generic:SBRB:pi}. Due to the condition at line~\ref{line:generic:test:knowmessage}, there exist two rounds $R_i$ and $R_j$ such that the following holds
      \begin{align*}
        \knownmessages_{i,R_i} &= \{m_i\} \land \Exists w_i \in \mathbb{N}^{+}: \big(\delpredicate(m_i,w_i,\view_i) \land R_i \geq \reveal(w_i)\big),
        \intertext{and}
        \knownmessages_{j,R_j} &= \{m_j\} \land \Exists w_j \in \mathbb{N}^{+}:  \big(\delpredicate(m_j,w_j,\view_j) \land R_j \geq \reveal(w_j)\big).
      \end{align*}
      Without loss of generality, assume $w_i\geq w_j$. By \WBPRevealWeightMonoProp, $\reveal(w_i)\leq \reveal(w_j)$, which leads by case assumption to $R_j$ $\geq$ $ \reveal(w_i)$.
      Process $p_j$ therefore executes round $\reveal(w_i)$, and \WBPConspiProp applies to $\delpredicate(m_i,$ $w_i,$ $view_i)$ that $p_i$ observes.
      We conclude that $m_i$ is known to $p_j$ by round $\reveal(w_i)$, i.e. formally $m_i\in\knownmessages_{j,\reveal(w_i)}$.
      Since $\view_i$ keeps growing with each passed round, $R_j\geq \reveal(w_i)$ implies $\knownmessages_{j,\reveal(w_i)}\subseteq \knownmessages_{j,R_j}$, and therefore $m_i\in\knownmessages_{j,R_j}$.
      Since $\knownmessages_{j,R_j}=\{m_j\}$ by case assumption, this leads to $m_i=m_j$, proving the case.

    \item Case 2: Assume $p_i$ and $p_j$ both \brb-deliver their respective message at line~\ref{line:generic:brdeliver:finalround} or~\ref{line:generic:brbdeliver:bot} of Algorithm~\ref{algo:generic:SBRB:pi}, during round $t+1$.
    Let us consider the two following sets, defined at round $t+1$:
    \begin{align*}
        \weights_i &= \{w \in \mathbb{N}^{+} \mid \Exists m \in \knownmessages_{i,t+1}: \delpredicate(m,w,\view_i)\},
      \intertext{and}
      \weights_j &= \{w \in \mathbb{N}^{+} \mid \Exists m \in \knownmessages_{j,t+1}: \delpredicate(m,w,\view_j)\}.
    \end{align*}
      Consider $w\in \weights_i$, and $m \in \knownmessages_{i,t+1}$ a message such that $\delpredicate(m,w,\view_i)$.
      Because $\psender$ is Byzantine (by assumption), \WBPVisiProp applies and $\delpredicate(m,w,\view_i)$ for $p_i$ implies $\delpredicate(m,w,\view_j)$ for $p_j$ at round $t+1$. By \WBPWeightMonoProp and \WBPLocalConspProp, $\delpredicate(m,w,\view_j)$ at round $t+1$ implies $\delpredicate(m,1,\view_j)$ (since $w\geq 1$ by construction), $m\in \knownmessages_{j,t+1}$, and therefore that $w\in \weights_j$.
      Inverting $p_i$ and $p_j$ leads to $\weights_i=\weights_j$, and therefore to $\maxfn(\weights_i)=\maxfn(\weights_j)$.
      Using this last equality, we conclude that either $p_i$ and $p_j$ both have empty $\weights_i$ and $\weights_j$ sets and get to the else branch at line~\ref{line:generic:brbdeliver:bot} and both brb-deliver $\bot$, or they both pass the condition at line~\ref{line:generic:cond:weight:not:empty}, in which case they must both obtain the same $\candidatemsg_i$ and $\candidatemsg_j$ sets, and therefore \brb-deliver the same message at line~\ref{line:generic:brdeliver:finalround}.
    \item Case 3:
Assume $p_i$ \brb-delivers $m_i$ at line~\ref{line:generic:brdeliver:immediate} of Algorithm~\ref{algo:generic:SBRB:pi} during some round $R_i$, and $p_j$ \brb-delivers $m_j$ at line~\ref{line:generic:brdeliver:finalround} or~\ref{line:generic:brbdeliver:bot} of the same algorithm during round $t+1$.
Due to the condition at line~\ref{line:generic:test:knowmessage}, there exists a weight $w_i\in\mathbb{N}^{+}$ such that the following holds
      \begin{gather}
        \knownmessages_{i,R_i} =\{m_i\} \land \delpredicate(m_i,w_i,view_i) \land R_i \geq \reveal(w_i). \label{eq:generic:knownmsg:by:pi}
      \end{gather}
      As in Case 2, let us consider the following set defined at round $t+1$ at $p_j$:
      $$\weights_j = \{w \in \mathbb{N}^{+} \mid \Exists m \in \knownmessages_{j,t+1}: \delpredicate(m,w,\view_j)\}.$$

      As in Case 2, \WBPVisiProp applies and $\delpredicate(m_i,w_i,\view_i)=\ttrue$ at $p_i$ implies $\delpredicate(m_i,w_i,\view_i)=\ttrue$ at $p_j$ at round $\finalround$.
      This fact and \WBPLocalConspProp further implies $m_i \in \knownmessages_{j,t+1}$, and therefore that $w_i\in \weights_j$ (and as $\weights_j$ is not empty, $p_j$ cannot brb-deliver at line~\ref{line:generic:brbdeliver:bot}).
      This last inclusion yields that $\maxfn(\weights_j)\geq w_i$ at line~\ref{line:generic:wmaxj:computed} of Algorithm~\ref{algo:generic:SBRB:pi}.

      Because $m_j$ is \brb-delivered by $p_j$ at line~\ref{line:generic:brdeliver:finalround}, we have by construction $\delpredicate(m_j,$ $\maxfn(\weights_j),$ $\view_j)\!=\ttrue$ at line~\ref{line:generic:wmaxj:computed} of $p_j$.
      By \WBPWeightMonoProp, $\delpredicate(m_j,$ $\maxfn(\weights_j),$ $\view_j)$ and $\maxfn(\weights_j)\geq w_i$ 
      imply $\delpredicate(m_j,$ $w_i,$ $\view_j)\!=\ttrue$ at $p_j$. Applying \WBPConspiProp, and the fact that $R_i\geq\reveal(w_i)$, this last statement implies that 
 $m_j\in\knownmessages_{i,R_i}$ at round $R_i$ at $p_i$. Combined with (\ref{eq:generic:knownmsg:by:pi}), this leads to $m_j=m_i$, proving the case and concluding the lemma. \qedhere
    \end{itemize}%
  \end{itemize}%
  
\end{proof}

\begin{lemma}
\label{lemma:generic:porig:correct:max:nb:round}
  If the pair $(\delpredicate, \reveal)$ is \lambdagood-BRB-robust, where \lambdagood is an integer that depends on $n$, $t$, and $c$, and if $\psender$ is correct, then correct processes brb-deliver the message $m$ brb-broadcast by $\psender$ in at most $\maxfn\big(2,\lambdagood\big)$ rounds.
\end{lemma}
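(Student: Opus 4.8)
The plan is to bound the delivery round of every correct process separately, distinguishing the sender $\psender$ from the other correct processes. The case of $\psender$ is immediate: by Algorithm~\ref{alg:generic:porig:code}, a correct $\psender$ \brb-delivers $m$ at line~\ref{line:generic:porig:brdeliver} during round $1$, and $1\leq\maxfn(2,\lambdagood)$.

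For a correct process $p_i\neq\psender$, I would first argue that $\knownmessages_{i,R}=\{m\}$ for every round $R\geq 1$ that $p_i$ executes. Since $\psender$ is correct it signs only the chain $m\cryptchain\psender$ (Algorithm~\ref{alg:generic:porig:code}); as signatures are secure and correct processes accept only valid chains---each of which must start with $\psender$'s signature---every chain accumulated in $\view_i$ carries the message $m$. Because $p_i$ receives $\msgm(\{m\cryptchain\psender\})$ during round $1$ by synchrony and records it at line~\ref{line:generic:view:i:R}, we obtain $\messageOf(\view_i)=\{m\}$ and hence $\knownmessages_{i,R}=\{m\}$ from round $1$ onward. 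This reuses the reasoning already employed for \brbVprop in Lemma~\ref{lemma:generic:Vprop}.

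Next I would invoke the liveness property \WBPLiveProp: as $\psender$ is correct, $p_i$ observes $\delpredicate(m,\wgood,\view_i)$ during round $2$, with $\reveal(\wgood)=\lambdagood$; and since $\view_i$ grows monotonically across rounds, \WBPViewMonoProp guarantees that $\delpredicate(m,\wgood,\view_i)$ keeps holding in every round $R\geq 2$. To conclude, I would inspect the delivery test at line~\ref{line:generic:test:knowmessage} at round $R^{\ast}=\maxfn(2,\lambdagood)$, assuming $p_i$ has not already delivered and stopped (line~\ref{line:generic:stop:ready}). At that round $\knownmessages_{i,R^{\ast}}=\{m\}$, the predicate $\delpredicate(m,\wgood,\view_i)$ holds by view monotony, and $R^{\ast}=\maxfn(2,\lambdagood)\geq\lambdagood=\reveal(\wgood)$; the test therefore passes and $p_i$ \brb-delivers $m$ at line~\ref{line:generic:brdeliver:immediate} no later than round $R^{\ast}$. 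As $\reveal$ maps into $[1..t+1]$ we have $\lambdagood\leq t+1$ and thus $R^{\ast}\leq t+1$, so $p_i$ indeed reaches round $R^{\ast}$ if it has not delivered earlier.

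I expect the main subtlety to be the presence of the outer $\maxfn(2,\cdot)$ rather than a bare $\lambdagood$. The liveness guarantee only takes effect at round $2$, so when $\lambdagood\leq 2$ the process cannot be assured to deliver before round $2$ even though its revealing round is $1$ or $2$; conversely, when $\lambdagood>2$, it is \WBPViewMonoProp that transports the round-$2$ observation forward until the revealing round $\lambdagood$ is reached. Reconciling these two regimes into the single bound $\maxfn(2,\lambdagood)$, together with checking that no competing message can ever violate $\knownmessages_{i,R}=\{m\}$ while the sender is correct, is where the care lies; the remaining steps are routine verifications against the code.
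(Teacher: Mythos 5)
Your proof is correct and follows essentially the same route as the paper's: $\psender$ delivers in round $1$, \WBPLiveProp{} gives each correct $p_i\neq\psender$ the observation $\delpredicate(m,\wgood,\view_i)$ in round $2$, \WBPViewMonoProp{} and the monotone growth of $\view_i$ carry it forward, and the security of signatures together with the sender's correctness forces $\knownmessages_{i,R}=\{m\}$, so the test at line~\ref{line:generic:test:knowmessage} fires by round $\maxfn(2,\lambdagood)$. Your additional checks (that $\maxfn(2,\lambdagood)\leq t+1$ since $\reveal$ maps into $[1..t+1]$, and the explicit reconciliation of the two regimes $\lambdagood\leq 2$ versus $\lambdagood>2$) are sound elaborations of points the paper leaves implicit.
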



\begin{proof}
  If $\psender$ is correct and brb-broadcasts a message $m$, it \brb-delivers its own message in round 1 (Algorithm~\ref{alg:generic:porig:code}). By \WBPLiveProp  all other correct processes $p_i\neq\psender$ observe $\delpredicate(m,\wgood,\view_i)$ during round 2. By \WBPViewMonoProp, and given that $\view_i$ can only grow in Algorithm~\ref{algo:generic:SBRB:pi}, $\delpredicate(m,\wgood,\view_i)$ remains true during all subsequent rounds $R\geq 2$ that $p_i$ executes.
  In addition, as $\psender$ is correct and signatures are secure, $\knownmessages_{i,R}$ does not contain any other message than $m$.
  As a result, at the latest in round $\maxfn(2,\reveal(\wgood))=\maxfn(2,\lambdagood)$, the condition of line~\ref{line:generic:test:knowmessage} becomes true for the message $m$ and the weight $\wgood$, and $p_i$ delivers $m$.
\end{proof}

\begin{lemma}\label{lemma:generic:worst:t+1:rounds}
  In the worst case, a correct process executing Algorithm~{\em\ref{algo:generic:SBRB:pi}} \brb-delivers a message in $t+1$ rounds.
\end{lemma}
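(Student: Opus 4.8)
The plan is to argue entirely at the level of Algorithm~\ref{algo:generic:SBRB:pi}'s control flow. Unlike the preceding lemmas, this bound needs none of the \lambdagood-BRB-robustness properties, since it only claims that \emph{some} message is \brb-delivered within $t+1$ rounds, not which one. First I would dispose of the sender: if $p_i=\psender$, then by Algorithm~\ref{alg:generic:porig:code} it \brb-delivers in round~$1$ (line~\ref{line:generic:porig:brdeliver}), well within $t+1$ rounds. So I may assume $p_i\neq\psender$ and reason about Algorithm~\ref{algo:generic:SBRB:pi}, whose main loop (lines~\ref{line:generic:start:synch:round}--\ref{line:generic:end:synch:round}) ranges over $R\in[1..\finalround]$; hence $p_i$ executes at most $t+1$ rounds.

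A correct process leaves the loop in only two ways. Either it toggled $\ready_i$ in some earlier round and quits at line~\ref{line:generic:stop:ready}; but then it had already executed \brbdeliver at line~\ref{line:generic:brdeliver:immediate} in a round $\leq t$, so delivery occurred within $t+1$ rounds. Or it runs all rounds up to $R=t+1$, which is the worst case I would concentrate on. The crux is then to show that whenever $p_i$ reaches round $R=t+1$ without having delivered early, it is \emph{guaranteed} to \brb-deliver during that round.

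For this I would exploit the fact that the early-delivery test (line~\ref{line:generic:test:knowmessage}) and the final-round handling (line~\ref{line:generic:reached:finalround}) form a single \textbf{if}/\textbf{else-if} construct. Since $p_i$ has not delivered early, the guard at line~\ref{line:generic:test:knowmessage} is false at round $t+1$, so control necessarily enters the else-if branch, whose guard $R=\finalround$ holds. Inside that branch the conditional on $\weights_i$ (line~\ref{line:generic:cond:weight:not:empty}) is total: if $\weights_i\neq\emptyset$, $p_i$ \brb-delivers $\choice(\candidatemsg_i)$ at line~\ref{line:generic:brdeliver:finalround}; otherwise it \brb-delivers the default value $\bot$ at line~\ref{line:generic:brbdeliver:bot}. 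Either way a \brbdeliver fires in round $t+1$.

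I expect no genuine obstacle, only control-flow bookkeeping: checking that the else-if branch is entered exactly when early delivery does not occur, and that its inner branching never falls through without reaching a \brbdeliver. I would also state explicitly that \brb-delivering $\bot$ counts as \brb-delivering a message for this bound, so the claim holds uniformly regardless of what $p_i$ knows at round $t+1$. Combining the two exit cases gives that every correct process \brb-delivers within $t+1$ rounds.
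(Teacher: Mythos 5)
Your proposal is correct and takes essentially the same approach as the paper, whose (much terser) proof likewise argues purely from the control flow of Algorithm~\ref{algo:generic:SBRB:pi}: a correct process either \brb-delivers early at line~\ref{line:generic:brdeliver:immediate} or is forced, by the exhaustive \textbf{if}/\textbf{else-if} structure, to \brb-deliver in round $t+1$ at line~\ref{line:generic:brdeliver:finalround} or line~\ref{line:generic:brbdeliver:bot}. One small imprecision worth fixing: the guard at line~\ref{line:generic:test:knowmessage} need not be false in round $t+1$ merely because $p_i$ has not delivered in \emph{earlier} rounds---it may become true during round $t+1$ itself---but in that case delivery still happens at line~\ref{line:generic:brdeliver:immediate} within the round, so the stated bound is unaffected.
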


\begin{proof}
By construction of Algorithm~\ref{algo:generic:SBRB:pi}, processes are guaranteed to \brb-deliver at the latest in round $t+1$ (through the condition at line~\ref{line:generic:reached:finalround}), either at line~\ref{line:generic:brdeliver:finalround} or~\ref{line:generic:brbdeliver:bot} (if they have not brb-delivered early at line~\ref{line:generic:brdeliver:immediate}).
\end{proof}

\begin{lemma}
\label{lemma:generic:GDprop}
  Algorithm~{\em\ref{algo:generic:SBRB:pi}} verifies the 
  {\em \brbGDprop} property.
\end{lemma}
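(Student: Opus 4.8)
The plan is to exploit the fact that \brbGDprop is weaker than it may first appear: it only requires that \emph{every} correct process \brb-delivers \emph{some} message (agreement on \emph{which} message is the concern of \brbNDYprop, proved separately). I would therefore establish the stronger, unconditional statement that every correct process \brb-delivers a message; once this is known, the conclusion of \brbGDprop holds unconditionally and the implication is immediate. Note that, unlike the \brbNDYprop lemma, this argument needs no assumption on $(\delpredicate,\reveal)$: it is a pure control-flow property of Algorithm~\ref{algo:generic:SBRB:pi}.

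First I would dispatch the sender: if $\psender$ is correct it executes Algorithm~\ref{alg:generic:porig:code} and \brb-delivers at line~\ref{line:generic:porig:brdeliver} in round~$1$, so it is covered. For a correct process $p_i\neq\psender$, the key claim is that $p_i$ never halts without having first \brb-delivered. Recalling that a crash counts as a Byzantine fault, a correct $p_i$ faithfully executes every round $R\in[1..\finalround]$ and can leave the loop in only two ways: either the loop terminates after round $\finalround$, or $p_i$ invokes \stopKW at line~\ref{line:generic:stop:ready}. The latter requires $\ready_i=\ttrue$, and $\ready_i$ is set to \ttrue only at line~\ref{line:generic:brdeliver:immediate}, immediately after a \brbdeliver. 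Hence if $p_i$ quits early, it has already delivered.

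It then remains to show that a correct $p_i$ that never triggers the early delivery still delivers by round~$\finalround$. I would argue this from the exhaustiveness of the branching in the computation step: in round $R=\finalround$, either the guard of line~\ref{line:generic:test:knowmessage} holds and $p_i$ delivers at line~\ref{line:generic:brdeliver:immediate}, or it fails and the \textbf{else if} guard $R=\finalround$ (line~\ref{line:generic:reached:finalround}) is entered; inside it, $p_i$ delivers $\choice(\candidatemsg_i)$ at line~\ref{line:generic:brdeliver:finalround} when $\weights_i\neq\emptyset$ (line~\ref{line:generic:cond:weight:not:empty}) and $\bot$ at line~\ref{line:generic:brbdeliver:bot} otherwise. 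These two sub-cases are complementary, so $p_i$ necessarily \brb-delivers in round $\finalround$. Combining the two preceding paragraphs, every correct process \brb-delivers a message, which establishes \brbGDprop.

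I expect no genuine obstacle here; the only thing to be careful about is the bookkeeping of the control flow. Concretely, I must check that the nested \textbf{if}/\textbf{else if}/\textbf{else} structure at round $\finalround$ leaves no execution path on which a correct process exits the loop without reaching a \brbdeliver, and that the early-quit of line~\ref{line:generic:stop:ready} can fire only after a delivery. Both reduce to reading off the pseudo-code rather than to any nontrivial reasoning about the predicate, so the lemma is essentially a structural observation about Algorithm~\ref{algo:generic:SBRB:pi}.
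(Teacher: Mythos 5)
Your proof is correct and follows essentially the same route as the paper: the paper derives \brbGDprop trivially from Lemma~\ref{lemma:generic:worst:t+1:rounds}, which is exactly your unconditional control-flow observation that every correct process \brb-delivers by round $\finalround$ (either early at line~\ref{line:generic:brdeliver:immediate}, or at line~\ref{line:generic:brdeliver:finalround} or~\ref{line:generic:brbdeliver:bot} through the exhaustive branching of round $\finalround$). Your write-up is merely more explicit than the paper's in spelling out the sender's case and the fact that \stopKW can only fire after a delivery.
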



\begin{proof}
This property derives trivially from Lemma~\ref{lemma:generic:worst:t+1:rounds}: all correct processes brb-deliver a message at the latest in round $t+1$. \qedhere


\end{proof}

\subsection{Revisiting Lamport, Shostak, and Pease's algorithm (BRB-LSP)}
\label{sec:revis-lamp-shost}

BRB-LSP~\cite{LSP82} can be expressed in the framework of Algorithm~\ref{algo:generic:SBRB:pi} by choosing the following definitions and values for \delpredicate, \reveal, and \lambdagood:
\newcommand{\LSP}{\mathsf{LSP}}%
\begin{align*}%
  \delpredicate_{\LSP}(m,w,\view) &\isdefinedas \big(m\in\messageOf(\view)\big),\\
  \reveal_{\LSP}(w) &\isdefinedas t+1,\\
  \lambdagood^{\LSP} &\isdefinedas t+1.
\end{align*}%

With the above definition of $\delpredicate$ and $\reveal$, Algorithm~\ref{algo:generic:SBRB:pi} systematically \brb-delivers in round $t+1$ (since the condition at line~\ref{line:generic:test:knowmessage} can only become true in round $t+1$). Furthermore, when $p_i$ only knows one message $m$ in round $t+1$, the first branch of the ``delivery'' if block at line~\ref{line:generic:brdeliver:immediate} is equivalent to its else branch at lines \ref{line:generic:Wi:set}-\ref{line:generic:brbdeliver:bot}.
In all cases, $p_i$ therefore collects in round $t+1$ all the messages it has received, and chooses one of them in a deterministic manner, thus reproducing BRB-LSP.

$\delpredicate_{\LSP}$ and $\reveal_{\LSP}$ correspond to a border example of the use of our generic algorithm, since the weight $w$ plays no part in their definition. However, the pair $(\delpredicate_{\LSP},\reveal_{\LSP})$ does fulfill the prerequisites of $(t+1)$-BRB-robustness required to apply Theorem~\ref{theo:algo:generic:SBRB:pi:works}, as the following theorem shows.

\begin{theorem}\label{the:LSP:pair:t+1:robust}
  The pair $(\delpredicate_{\LSP},\reveal_{\LSP})$ is $(t+1)$-BRB-robust.
\end{theorem}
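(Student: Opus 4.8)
The plan is to verify, block by block, that $(\delpredicate_{\LSP},\reveal_{\LSP})$ satisfies every property required for $(t+1)$-BRB-robustness, after which Theorem~\ref{theo:algo:generic:SBRB:pi:works} applies directly. The organizing observation is that $\delpredicate_{\LSP}(m,w,\view)=\big(m\in\messageOf(\view)\big)$ is independent of the weight $w$, and that $\reveal_{\LSP}$ is the constant function equal to $t+1$. As a result, the entire \emph{Monotony and Local Conspicuity} block and the \emph{Liveness} property will be essentially immediate, and all the real content concentrates in the \emph{Safety} block, and specifically in \WBPConspiProp.

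First I would dispatch the easy properties. \WBPWeightMonoProp and \WBPLocalConspProp hold by inspection, since $\delpredicate_{\LSP}$ ignores its weight argument: the implication $\delpredicate_{\LSP}(m,w,\view)\Rightarrow\delpredicate_{\LSP}(m,w',\view)$ and the equivalence $\delpredicate_{\LSP}(m,1,\view)\Leftrightarrow m\in\messageOf(\view)$ both collapse to the defining identity. \WBPRevealWeightMonoProp holds because $\reveal_{\LSP}(w)=\reveal_{\LSP}(w')=t+1$, so the required inequality is an equality. \WBPViewMonoProp follows from monotonicity of $\messageOf(\cdot)$ under set inclusion. For \WBPLiveProp I would take $\wgood=1$, so that $\reveal_{\LSP}(\wgood)=t+1=\lambdagood^{\LSP}$; a correct $\psender$ broadcasts the valid chain $m\cryptchain\psender$ in round~$1$ (Algorithm~\ref{alg:generic:porig:code}), every correct $p_i\neq\psender$ inserts it into $\view_i$ at line~\ref{line:generic:view:i:R} during the computation step of round~$1$, and hence observes $\delpredicate_{\LSP}(m,1,\view_i)$ during round~$2$.

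The heart of the proof, and the step I expect to be the main obstacle, is \WBPConspiProp. Since $\reveal_{\LSP}(w)=t+1$ for every $w$, it reduces to the classical claim behind BRB-LSP: if some correct $p_i\neq\psender$ ever holds a valid chain $\pi$ with $\messageOf(\pi)=m$, then every correct $p_j$ that executes round~$t+1$ has $m\in\messageOf(\view_j)$ by round~$t+1$. I would let $\ell=|\pi|\leq t+1$ (correct processes accept a length-$R$ chain only in round~$R$, so $p_i$ received $\pi$ in round~$\ell$) and split into three cases. (a) If $p_i\in\pi$, say $p_i$ is the $k$-th signer with $k\geq 2$, then by the protocol $p_i$ signed and broadcast the valid prefix $\truncate_k(\pi)$ in round~$k\leq\ell\leq t+1$. (b) If $p_i\notin\pi$ and $\ell\leq t$, then at line~\ref{line:generic:signing:next:round} $p_i$ forms the valid length-$(\ell+1)$ chain $\pi\cryptchain p_i$ and broadcasts it in round~$\ell+1\leq t+1$ (line~\ref{line:generic:bcast}). (c) If $p_i\notin\pi$ and $\ell=t+1$, then $p_i$ can no longer extend $\pi$, and I instead invoke the pigeonhole principle: the $t+1$ distinct signers of the acyclic chain $\pi$ cannot all be Byzantine, so some signer $p_{i_k}$ is correct and broadcast a valid chain for $m$ in a round $\leq t+1$ (the prefix $\truncate_k(\pi)$ as in case~(a), or $m\cryptchain\psender$ in round~$1$ if $k=1$). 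In every case a correct process broadcasts a valid chain carrying $m$ in some round $\leq t+1$, which reaches $p_j$ over the reliable synchronous network in time; as $\view_j$ only grows, $m\in\messageOf(\view_j)$ at round~$t+1$. Case~(c) is the crux: it is exactly the maximal-length situation where self-extension is impossible and the pigeonhole step becomes indispensable, and it is also where one must double-check the bookkeeping point that a process which has toggled $\ready_i$ still performs its round broadcast (line~\ref{line:generic:bcast}) before the quit test (line~\ref{line:generic:stop:ready}), so the retransmissions used above are genuinely sent.

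Finally, \WBPVisiProp follows at once from the same argument: with $\reveal_{\LSP}(w)=t+1$, \WBPConspiProp already guarantees that $m$ is known to $p_j$ by round~$t+1$, i.e. $m\in\messageOf(\view_j)$ there, which is precisely $\delpredicate_{\LSP}(m,w,\view_j)$ at round~$t+1$. Having established all the listed properties, $(\delpredicate_{\LSP},\reveal_{\LSP})$ is $(t+1)$-BRB-robust, and Theorem~\ref{theo:algo:generic:SBRB:pi:works} confirms that Algorithm~\ref{algo:generic:SBRB:pi} then reproduces BRB-LSP with worst- and good-case latency $t+1$.
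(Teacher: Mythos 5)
Your proof is correct and takes essentially the same route as the paper's own: the monotony and liveness properties are dispatched by the weight-independence of $\delpredicate_{\LSP}$ and the constancy of $\reveal_{\LSP}$, and \WBPConspiProp/\WBPVisiProp are obtained by the classical BRB-LSP forward-or-pigeonhole argument (a correct process re-signs and rebroadcasts any chain it can still extend, while a valid chain of length $t+1$ must contain a correct signer who already propagated $m$). Your three-way case split is just a finer-grained version of the paper's two cases (first reception before vs.\ during round $t+1$), and your remark that the broadcast at line~\ref{line:generic:bcast} precedes the quit test at line~\ref{line:generic:stop:ready} correctly settles a detail the paper leaves implicit.
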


\begin{proof}
  \WBPWeightMonoProp and \WBPRevealWeightMonoProp trivially follow from the fact that neither $\delpredicate_{\LSP}$ nor $\reveal_{\LSP}$ depend on $w$.
  Similarly, \WBPViewMonoProp and \WBPLocalConspProp directly result from the definition of $\delpredicate_{\LSP}$.

  \WBPConspiProp and \WBPVisiProp hinge on the central intuition underpinning BRB-LSP. First note that $\delpredicate_{\LSP}(m,w,\view_i)=\ttrue$ for some process $p_i$, message $m$, and weight $w$ is equivalent to stating that $p_i$ knows message $m$, or equivalently that $p_i$ has received some valid chain containing $m$.
  
  Consider a correct process $p_i\neq\psender$ that has received some chain containing $m$. If $p_i$ received $m$ for the first time before round $t+1$, then by construction of Algorithm~\ref{algo:generic:SBRB:pi}, it has sent it to all other correct processes, which must therefore also know $m$. If $p_i$ received $m$ for the first time during round $t+1$, then the chain carrying $m$ must contain $t+1$ distinct processes (to be valid), and must therefore contain at least one correct process $p_k$ that must have sent $m$ to all other correct processes. These observations yield the \WBPConspiProp and \WBPVisiProp properties.

  Finally, \WBPLiveProp is trivially fulfilled by definition of $\delpredicate_{\LSP}$.
\end{proof}

%

\section{A Deterministic Good-Case BRB Latency in \texorpdfstring{$\maxfn\big(2,t+3-w\big)$}{max(2,t+3-w)} Rounds}
\label{sec:descr-algor}

\subsection{Overview}
\label{algo-overview}

Theorem~\ref{theo:algo:generic:SBRB:pi:works} states that a deterministic synchronous BRB algorithm that exhibits a good case latency of $\maxfn\big(2,t+3-w\big)$ can be obtained simply by finding a weight-base predicate and revealing round function that are $\maxfn\big(2,t+3-w\big)$-BRB-robust. Constructing such a pair is however not immediately obvious. We present such a predicate in this section (called \emph{weight-based predicate with good-case latency}, or \GCLpredicate for short) by exploiting a pattern revolving around what we have called a ``revealing chain''.

\paragraph{Weights and revealing chains} The weight-based predicate we propose counts the number of processes whose signature appears within the first two positions of the valid chains a process $p_i$ has received. These processes are said to be \textit{backing} $m$ in $p_i$'s view, and their number is the predicate's weight, $w$.


Just counting and propagating the round-$1$ or -$2$ signatures that correct processes observe is, however, not enough, as it does not prevent Byzantine processes from selectively revealing some round-2 signatures at the very last moment (round $t+1$ in our case), thus preventing correct processes from \brb-delivering earlier using this information only.
The predicate we use therefore adds an additional constraint that limits the disruption power of Byzantine processes, and provides the conspicuity property required by Algorithm~\ref{algo:generic:SBRB:pi}: a predicate of weight $w$ must contain a ``\emph{revealing chain}'' $m\cryptchain \cryptchain \gamma$  whose makeup must ``differ sufficiently'' from the backing processes documented by the predicate. ``Differ sufficiently'' means that the processes from position~$3$ until position $t+3-w$ of this revealing chain (shown in red in Figure~\ref{fig:GCL_predicate:pattern}) should not be backing processes.

This constraint limits what Byzantine processes can do when the sender is Byzantine and ensure this predicate is both conspicuous and finally visible, 
and thus usable within Algorithm~\ref{algo:generic:SBRB:pi}.
The revealing chain does not prevent Byzantine processes from colluding to forge competing predicates for different messages in bad cases (i.e. when $\psender$ is Byzantine). However, Byzantine processes can only use up to $t$ signatures and must decide whether to invest these $t$ signatures in the backing part of each predicate (thus increasing the predicate's weight) or in their revealing chain (thus delaying the time at which the message of a forged predicate must be revealed to correct processes, but reducing the predicate's weight).

\paragraph{Predicate conspicuity} 
The position $t+3-w$ of the revealing chain enforces the conspicuity of the predicate and yields the good-case latency $\maxfn(2,t+3-c)$. This is because the signatures from positions $3$ to $t+3-w$ correspond to $(t+3-w)-3+1=t+1-w$ processes. Added to the $w$ processes backing the predicate ($W_{i,R}$ in Figure~\ref{fig:GCL_predicate:pattern}), this represents $t+1-w+w=t+1$ processes. These $t+1$ processes must contain a correct process; therefore, Byzantine processes that seek to forge a predicate for a message $m$ must include the signature of a correct process at the latest in round $t+3-w$. This ensures $m$ become visible to all other correct processes by round $t+3-w$, the \emph{revealing round} of the predicate in the terms of Section~\ref{sec:generic-weight-based}. 



\begin{figure}[tb]
  \centering
\begin{tikzpicture}[x=1cm,y=0.8cm]
    \newcommand{\sigDot}[3]{
    \node[circle,draw,fill=black,inner sep=0cm,minimum size=.16cm] (#1) at (#2,#3) {};
    }
    
    \tikzmath{
    \xR2 = .8; 
    \wWS = 1.3; 
    \hWS = 2; 
    \interWS = .45; 
    \interSig = .5;
    \wTrunc = 1.4;
    \hTrunc = .4;
    \xTrunc = \xR2 + \wTrunc/2 + .4;
    \xLLabels = -2.1;
    \xRLabels = 6.5;
    }
    
    \def\colorSender{black}
    \def\colorRII{Green}
    \def\colorTrunc{red}
    
    
    \sigDot{snd}{0}{0}


    \foreach \nChain/\yChain/\nSigs in {1/2*\interWS/5, 2/\interWS/6, 3/0/4, 4/-\interWS/3, 5/-2*\interWS/5}
    {
        \sigDot{rcv\nChain}{\xR2}{\yChain}
        \foreach \iSig in {1,...,\nSigs}
            \sigDot{chn\nChain sig\iSig}{.1+\xR2+\iSig*\interSig}{\yChain};
        \draw[thick] (snd) -- (\xR2,\yChain) -- (chn\nChain sig\nSigs);
    }

    
    \node[draw=\colorRII,rounded corners,very thick,minimum width=\wWS cm,minimum height=\hWS cm] (WS) at (\xR2/2,0) {};
    
    
    \node[draw=\colorTrunc,very thick,minimum width=\wTrunc cm,minimum height=\hTrunc cm] (trunc) at (\xTrunc,-2*\interWS) {};

    
    \draw[dashed] (\xTrunc+\wTrunc/2+.1,3.2*\interWS) -- (\xTrunc+\wTrunc/2+.1,-3.2*\interWS);

    \node[align=center] (wsLabel) at (\xTrunc+\wTrunc/2+.15,4*\interWS) {Revealing round};
    
    
    \node[align=center] (sndLabel) at (\xLLabels,.8) {Sender's signature\\($m \cryptchain p_\orig$)};
    \draw[dashed,->] (sndLabel.east) to[bend left] (snd);
    
    \node[\colorRII,align=center] (wsLabel) at (\xLLabels,-.8) {Set of weight\\signatures $W$};
    \draw[dashed,->,\colorRII] (wsLabel) to[bend right=10] (WS);


    \node (revChainLabel) at (\xRLabels,0) {Revealing chain $\gamma$};
    \draw[dashed,->] (revChainLabel.west) to[in=70,out=170] ([xshift=-.25cm]chn5sig5);
    
    \node[\colorTrunc] (subchainLabel) at (\xRLabels,-1) {$\subchain(\gamma,3,t+3-|W|)$};
    \draw[dashed,->,\colorTrunc] (subchainLabel.south west) to[bend left=50] (trunc.south);
\end{tikzpicture}
  
  \caption{The pattern of signature chains forming a \GCLpredicate predicate of weight $|W|=6$ at round $R$ for message $m$ at $p_i$ in a setting with $t=8$. The predicate must verify $\subchain(\gamma,3,t+3-|W|) \cap W=\emptyset$, which ensures its conspicuity (Lemma~\ref{lemma:GCL:WBPConspiProp}).}
  \label{fig:GCL_predicate:pattern}
\end{figure}

\paragraph{An example of \GCLpredicate} Figure~\ref{fig:GCL_predicate:pattern} shows a $\GCLpredicate$ predicate of weight $w=6$ for a message $m$ observed by $p_i$ at round $R$ in a setting with $t=8$: each horizontal line represents a chain of signatures that starts with $m\cryptchain \psender$, the green rectangle ($W$) represents processes that have signed $m$ in round 1 or 2 (and are therefore backing $m$), and $m\cryptchain \gamma$ is the revealing chain, such that the process appearing from position $3$ to $t+3-w$ ($= t-3$ here) in $m\cryptchain \gamma$ do not appear in $W$.
The corresponding revealing round is defined as $\revealGCL(w) = \maxfn(2,t+3-w)$.
Because $w=6$ and $t=8$, the revealing round is 5 in this example. 
This function is discussed in more detail in Section~\ref{sec:wbp-gcl-predicate}.

It is important to distinguish the round upon which the \GCLpredicate is observed, from the revealing round and the round where the message is brb-delivered.
These can be three entirely different rounds.
In this example, the \GCLpredicate is observed at round 8 when $p_i$ learns about the second chain from the top, but the revealing round is $\revealGCL(6)=t+3-6=5$.
However, it is possible to have the inverse scenario, where the \GCLpredicate is observed before the revealing round, in which case the process must wait in order to brb-deliver the message.
And in the case where multiple distinct messages are observed before the revealing round is attained, the process must wait for round $t+1$ to brb-deliver one of these messages.


\paragraph{A special case: delivery in round 2}
A special case occurs when the weight of a predicate reaches $w=t+1$.
When this happens, any process $p_i$ observing the predicate knows that one of the processes of $W$ is correct and, therefore, that all correct processes must have received a chain containing $m$ by the end of round 2 (the revealing round for the weight $t+1$).
Conversely, if $p_i$ has not received any chain containing a message $m'$ by round 2, $p_i$ knows that a predicate of weight $t+1$ cannot possibly exist for $m'$.
As a result, a correct process that observes a predicate a weight $t+1$ for $m$ and is not aware of any other message $m'\neq m$ by round 2 can safely \brb-deliver $m$, as no other message will be able to ``beat'' $m$ with a heavier predicate, even if the sender $\psender$ is Byzantine.

\paragraph{Weak non-intersecting quorums}
The reasoning for $w=t+1$ mirrors the mechanism of intersecting quorums used in asynchronous systems and requires a majority of correct processes (or $n>2t$) to be guaranteed to occur when the sender is correct.
The proposed predicate mechanism leverages the additional guarantees that a synchronous system brings to generalize this idea to weaker non-intersecting ``quorums'', whose ability to trigger a \brb-delivery decision requires additional temporal information (waiting until the revealing round $t+3-w$).

\subsection{The \texorpdfstring{\GCLpredicate}{WBP-GCL} Predicate}
\label{sec:wbp-gcl-predicate}

\begin{algorithm}[tb]
\DontPrintSemicolon
  \IsDefinedAs{$\certificate(m,w,\view)$}{
    $\Exists m\cryptchain\gamma\in\view \text{ such that, when noting }$\label{line:m:gamma:in:view}\Comment*{$m\cryptchain\gamma$ is the revealing chain.}

    \hspace{2em}$S\;\isdefinedas\big\{q\in\Pi\mid q\text{ backs $m$ in round $1$ or $2$ in $\view$}\big\},$ and\label{line:certif:set:S}

    \hspace{2em}$W\isdefinedas S\setminus \set\big(\subchain(\gamma,3,t+3-w)\big),$
    the following holds
    
    \hspace{2em}$|W|\geq w$. \label{line:certif:threshold}\Comment*{At least $w$ processes back $m$ in the first 2 rounds in \view.}
  }
  \smallskip
  
  \lIsDefinedAs{$\revealGCL(w)$}{$\maxfn(2,t+3-w)$.}
  
  \caption{The weight-based predicate $\certificate$ and its associated revealing function $\revealGCL$. The pair $(\certificate,\revealGCL)$ ensures delivery in $\maxfn\big(2,t+3-w\big)$ rounds in good cases when used in Algorithm~\ref{algo:generic:SBRB:pi}.}
  \label{algo:GCL:predicate}
\end{algorithm}

This paper's main contribution, a synchronous deterministic Byzantine Reliable Broadcast algorithm with a good case latency of $\maxfn\big(2,t+3-w\big)$, is obtained by injecting into the generic code of Algorithm~\ref{algo:generic:SBRB:pi} the predicate and reveal function $(\GCLpredicate,\revealGCL)$ shown in Algorithm~\ref{algo:GCL:predicate}.


$\GCLpredicate(m,w,\view)$ first considers $S$, the set of all processes that have signed\footnote{For simplicity, we say that a process $p$ signed $m$ during round $r$ in \view to mean that \view contains a chain backing $m$ in which $p$'s signature appears in $r$\textsuperscript{th} position.
Formally, both formulations are equivalent for correct processes, but they are not for Byzantine processes, as they can sign chains whenever they wish or even use the signature of other Byzantine processes.} the message $m$ either during round $1$ or $2$. 
$\GCLpredicate(m,w,\view)$ is true if $\view$ contains a ``revealing chain'' backing $m$, (noted $m\cryptchain \gamma$) such that after removing from $S$ all processes appearing between the position $3$ and $t+3-w$ of $\gamma$, the resulting set $W$ contains at least $w$ distinct processes that have signed $m$ in round $1$ or $2$ (see Figure~\ref{fig:GCL_predicate:pattern} and 
Section~\ref{algo-overview}.)

The existence of $\gamma$ (the revealing chain) is essential to provide the \WBPConspiProp property of $(\GCLpredicate,\revealGCL)$. Intuitively, the reason lies in the choice of the boundaries of the subchain $\subchain(\gamma,3,t+3-w)$.
\begin{itemize}
\item If $\gamma$ contains strictly less than $t+3-w$ processes and appears in the view $\view_i$ of a correct process $p_i$, then $p_i$ must have known $m$ before round $t+3-w$, and will have informed all other correct processes of $m$'s existence at the latest by round $t+3-w$.
\item If, on the other hand, $\gamma$ contains $t+3-w$ processes or more, then $\subchain(\gamma,3,t+3-w)$ will contain $(t+3-w)-3+1=t+1-w$ distinct processes. In that case, line~\ref{line:certif:threshold} ensures that $W\cup \subchain(\gamma,3,t+3-w)$ contains more than $w+(t+1-w)=t+1$ processes, i.e. does contain at least one correct process. By construction, this correct process will have signed and propagated a chain backing $m$ at the latest by round $t+3-w$.
\end{itemize}

\section{Proof of Correctness}

This Section proves the following Theorem and Corollary.

\begin{theorem}\label{theo:GCL:t+3-c:BRB:robust}
  The pair $(\GCLpredicate,\revealGCL)$ is $\maxfn(2,t+3-c)$-BRB-robust, where $c$ is the effective number of correct processes.
\end{theorem}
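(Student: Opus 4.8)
The plan is to instantiate the definition of \lambdagood-BRB-robustness with $\lambdagood = \maxfn(2,t+3-c)$ and witness weight $\wgood = c$ (so that $\revealGCL(\wgood)=\maxfn(2,t+3-c)=\lambdagood$), and to verify the seven required properties in turn. The four \emph{Monotony and Local Conspicuity} properties I expect to dispatch quickly. \WBPRevealWeightMonoProp is immediate since $w\mapsto\maxfn(2,t+3-w)$ is non-increasing. \WBPViewMonoProp follows because enlarging $\view$ can only enlarge the backing set $S$ while leaving the chosen revealing chain $m\cryptchain\gamma$ available, so $W$ can only grow. \WBPLocalConspProp holds because $\subchain(\gamma,3,\cdot)$ starts at position $3$, so $\psender$ (always at position $1$ of any valid chain backing $m$) survives into $W$, giving $|W|\geq 1$ exactly when $m\in\messageOf(\view)$. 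For \WBPWeightMonoProp the only thing to check is a counting balance: lowering the weight from $w$ to $w'\leq w$ lengthens the forbidden window $\subchain(\gamma,3,t+3-w)$ by at most $w-w'$ positions, hence removes at most $w-w'$ extra processes from $S$, so $|W|$ drops by at most $w-w'$ while the threshold drops by exactly $w-w'$; thus $|W|\geq w$ propagates to $|W'|\geq w'$.

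For \WBPLiveProp I would argue directly. When $\psender$ is correct it broadcasts $m\cryptchain\psender$ in round $1$; every correct $p_i\neq\psender$ receives it, re-signs it, and broadcasts $m\cryptchain\psender\cryptchain p_i$ in round $2$. Hence by the computation step of round $2$ each correct $p_i$ has in $\view_i$ the chain $m\cryptchain\psender$ together with the $c-1$ round-$2$ chains of the other correct processes, so the set $S$ of round-$1$/round-$2$ backers has size at least $c$. Taking the length-$1$ revealing chain $m\cryptchain\psender$ makes $\subchain(\gamma,3,t+3-c)$ empty, so $W=S$ and $|W|\geq c$; thus $\GCLpredicate(m,c,\view_i)$ holds in round $2$, which is exactly the claim with $\wgood=c$.

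The core of the work lies in the two Safety properties, for which I would first isolate a single propagation tool: if a correct $p_i$ holds a valid chain $\pi$ backing $m$, then any correct process at position $k$ in $\pi$ broadcast the length-$k$ prefix $\truncate_k(\pi)$ in round $k$, so every correct $p_j$ active in round $k$ has that prefix by round $k$; and if $|\pi|\leq t$ with $p_i\notin\pi$, then $p_i$ re-signs, making all of $\pi$ visible to $p_j$ by round $|\pi|+1\leq t+1$. For \WBPConspiProp (where it suffices to treat $w\geq 2$, since $\revealGCL(1)=t+2$ makes the hypothesis vacuous) I would split on $|\gamma|$. If $|\gamma|<t+3-w$, then $p_i$ received $\gamma$ by round $|\gamma|\leq t$ and the tool makes $m$ visible to $p_j$ by round $|\gamma|+1\leq t+3-w=\revealGCL(w)$. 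If $|\gamma|\geq t+3-w$, then $\subchain(\gamma,3,t+3-w)$ contributes exactly $t+1-w$ distinct (acyclic) processes, disjoint from $W$, so $W\cup\set(\subchain(\gamma,3,t+3-w))$ has at least $w+(t+1-w)=t+1$ processes and therefore contains a correct one $q$; this $q$ is either a round-$1$/round-$2$ backer or sits at a position in $[3,t+3-w]$ of $\gamma$, and in both cases the tool makes $m$ visible to $p_j$ by round $\revealGCL(w)$.

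\WBPVisiProp I expect to be the main obstacle, because unlike conspicuity it must reconstruct the full weight $w$ at $p_j$ by round $t+1$, not merely the message. Using $\psender$ Byzantine (so every correct process sits at position $\geq 2$ in any valid chain), I would prove two facts at round $t+1$: (II) $W\subseteq S_j$, i.e. every round-$1$/round-$2$ backer of $W$ is seen by $p_j$ as such — for a round-$2$ backer $q$, its witnessing chain either has $q$ correct (broadcasting $m\cryptchain\psender\cryptchain q$ in round $2$) or contains a later correct process whose propagated prefix still carries $q$ at position $2$; and (I) $p_j$ owns a revealing chain $\gamma'$ with $\set\big(\subchain(\gamma',3,t+3-w)\big)\cap W=\emptyset$. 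For (I) I would take $\gamma'$ to be the prefix of $\gamma$ up to its first correct process (which $p_j$ receives via the tool); shortening the revealing chain only shrinks the forbidden window, so it stays disjoint from $W$. The one delicate sub-case is when $\gamma$ is entirely Byzantine and must be extended by $p_i$ itself: if the appended $p_i$ lands inside the window and $p_i\in W$, I would instead use the length-$2$ chain $m\cryptchain\psender\cryptchain p_i$ that $p_i$ (being a round-$2$ backer) broadcast in round $2$, whose window is empty. Combining (I) and (II) gives $W_j = S_j\setminus\set(\subchain(\gamma',3,t+3-w))\supseteq W$, hence $|W_j|\geq w$ and $\GCLpredicate(m,w,\view_j)$ at round $t+1$, completing \WBPVisiProp and the theorem.
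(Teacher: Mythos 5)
Your proposal is correct and follows essentially the same route as the paper: the same witness $\wgood=c$, the same counting argument for \WBPWeightMonoProp, the same two-case split on $|\gamma|$ versus $t+3-w$ for \WBPConspiProp, and, for \WBPVisiProp, the same three-pronged construction of a substitute revealing chain at $p_j$ (prefix up to the first correct signer, self-extension by $p_i$, and fallback to $m\cryptchain\psender\cryptchain p_i$ when $p_i$ is itself in $W$) --- the paper merely formalizes your propagation ``tool'' as the $T_{2,i}[R]$ accumulator lemmas and states its substitute-chain sublemma with an intersection against $S_{i,R}$ rather than $W$. One cosmetic repair: in your second conspicuity case the window contributes $\maxfn(0,t+1-w)$ (not ``exactly $t+1-w$'') processes, since for $w>t+1$ the window is empty, but there $|W|\geq w\geq t+2$ already exceeds $t+1$, so the conclusion stands, exactly as the paper's $\maxfn(0,\cdot)$ guard arranges.
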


\begin{corollary}\label{coro:GCL:BRB:goodcase}
  The use of $(\GCLpredicate,\revealGCL)$ in Algorithm~\ref{algo:generic:SBRB:pi} implements a synchronous deterministic Byzantine Reliable Broadcast algorithm with a worst-case latency of $t+1$ and a good case latency of $\maxfn\big(2,t+3-w\big)$.
\end{corollary}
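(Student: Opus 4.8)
The plan is to treat Corollary~\ref{coro:GCL:BRB:goodcase} as an immediate consequence of Theorem~\ref{theo:GCL:t+3-c:BRB:robust} fed into the generic Theorem~\ref{theo:algo:generic:SBRB:pi:works}: once $(\GCLpredicate,\revealGCL)$ is shown $\maxfn(2,t+3-c)$-BRB-robust, instantiating Algorithm~\ref{algo:generic:SBRB:pi} with it yields a BRB object of worst-case latency $t+1$ and good-case latency $\maxfn(2,\lambdagood)=\maxfn(2,t+3-c)$. So the whole effort reduces to Theorem~\ref{theo:GCL:t+3-c:BRB:robust}, i.e.\ verifying the seven robustness properties for $(\GCLpredicate,\revealGCL)$ with $\lambdagood=\maxfn(2,t+3-c)$. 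Throughout I would fix the witness weight $\wgood=c$, which is natural since $\revealGCL(c)=\maxfn(2,t+3-c)=\lambdagood$.

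First I would dispatch the five ``routine'' properties. \WBPRevealWeightMonoProp is immediate as $w\mapsto\maxfn(2,t+3-w)$ is non-increasing. \WBPWeightMonoProp follows from the counting observation that lowering the weight from $w$ to $w'\le w$ lengthens the excluded sub-sequence $\subchain(\gamma,3,t+3-w)$ by exactly $(t+3-w')-(t+3-w)=w-w'$ positions, hence removes at most $w-w'$ extra processes, so $|W|\ge w$ still gives $|W'|\ge w'$ with the same revealing chain $\gamma$. \WBPViewMonoProp holds because enlarging $\view$ can only enlarge the backing set $S$ while leaving the chosen $\gamma$ and its sub-chain untouched, so $W$ can only grow. \WBPLocalConspProp reduces to the remark that $\psender$ always occupies position $1$ of any valid revealing chain, hence lies outside $\subchain(\gamma,3,t+2)$, so $m\in\messageOf(\view)$ forces $|W|\ge1$, and the converse is trivial. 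Finally \WBPLiveProp is where $\wgood=c$ is used: if $\psender$ is correct, every correct $p_i\neq\psender$ has received $m\cryptchain\psender$ in round~$1$ and, in round~$2$, the chains $m\cryptchain\psender\cryptchain q$ broadcast by all $c$ correct processes (its own signature included), so its backing set contains all $c$ correct processes; picking a length-$2$ revealing chain, for which $\subchain(\gamma,3,\cdot)=\emptyset$, gives $W=S$ and $\GCLpredicate(m,c,\view_i)$ at round~$2$.

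Next I would prove \WBPConspiProp, the heart of the construction, by a case split on the length of the revealing chain $m\cryptchain\gamma$ witnessing $\GCLpredicate(m,w,\view_i)$. If $|\gamma|<t+3-w$, then $p_i$ received this chain in round $|\gamma|$ and re-broadcasts it in round $|\gamma|+1\le t+3-w\le\revealGCL(w)$, so every correct $p_j$ learns $m$ by then (here I rely on the round-$R$ broadcast preceding the $\ready_i$ quit-check in Algorithm~\ref{algo:generic:SBRB:pi}, so early stopping never suppresses the relay). If $|\gamma|\ge t+3-w$, then $\subchain(\gamma,3,t+3-w)$ contains exactly $t+1-w$ distinct processes, disjoint from $W$ by construction; together with the $\ge w$ processes of $W$ this exhibits $\ge t+1$ distinct processes, hence a correct $p_c$. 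Whether $p_c$ lies in $W$ (so it backs $m$ in round $1$ or $2$) or in the sub-chain at some position $k\le t+3-w$, $p_c$ broadcasts a chain backing $m$ to all correct processes by round $\revealGCL(w)$, giving the conclusion. The borderline weight $w=1$, where $\revealGCL(1)=t+2$ exceeds the last round, is vacuous, since no $p_j$ ever executes round $t+2$.

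The main obstacle will be \WBPVisiProp. Assuming $\psender$ Byzantine and $p_i$ observing $\GCLpredicate(m,w,\view_i)$ via $\gamma$ and $W$, the goal is that $p_j$ re-derives a weight-$w$ predicate by round $t+1$, despite Byzantine processes possibly revealing witnessing chains to $p_i$ only in the final round. I would first show $W\subseteq S_j$ at round $t+1$: a \emph{correct} backer broadcasts its length-$\le2$ chain to everyone in round~$2$; a \emph{Byzantine} backer witnessed by a chain of length $<t+1$ is relayed by $p_i$ in time; and a Byzantine backer witnessed by a length-$(t+1)$ chain forces, since that chain holds $t+1$ distinct processes with at most $t$ Byzantine, a correct process whose broadcast makes the backing of $m$ visible to $p_j$ by round $t+1$. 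I would then furnish $p_j$ with a revealing chain keeping the invariant $\set(\subchain(\gamma',3,t+3-w))\cap W=\emptyset$: if $p_i$ itself backs $m$, or a correct process occurs in $\set(\subchain(\gamma,3,t+3-w))\cup W$, that process hands $p_j$ a chain whose positions $3,\dots,t+3-w$ stay disjoint from $W$; otherwise $\gamma$ is short and $p_i$ does not back $m$, so $p_i$ relays $\gamma\cryptchain p_i$ and, as then $p_i\notin W$, appending $p_i$ does not meet $W$. In every branch $W\subseteq S_j\setminus\set(\subchain(\gamma',3,t+3-w))$, so $p_j$ observes $\GCLpredicate(m,w,\view_j)$ at round $t+1$. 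Maintaining this disjointness invariant across all branches—while simultaneously tracking chain lengths against the $t+1$ horizon and the early-stopping behaviour—is the delicate bookkeeping I expect to dominate the proof; the remaining properties, together with Theorem~\ref{theo:algo:generic:SBRB:pi:works}, then close Corollary~\ref{coro:GCL:BRB:goodcase}.
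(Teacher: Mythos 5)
Your global architecture is exactly the paper's: you obtain Corollary~\ref{coro:GCL:BRB:goodcase} by feeding Theorem~\ref{theo:GCL:t+3-c:BRB:robust} into Theorem~\ref{theo:algo:generic:SBRB:pi:works}, and you prove Theorem~\ref{theo:GCL:t+3-c:BRB:robust} property by property with the witness weight $\wgood=c$, so that $\revealGCL(c)=\maxfn(2,t+3-c)=\lambdagood$. Your treatments of \WBPWeightMonoProp (the excluded subchain grows by at most $w-w'$ positions), \WBPViewMonoProp, \WBPRevealWeightMonoProp, \WBPLocalConspProp (the acyclicity argument placing $\psender$ in $W$), \WBPLiveProp (a length-2 revealing chain with empty excluded subchain, so $W=S$ contains the $c$ correct processes), and \WBPConspiProp (case split on $|\gamma|$ against $t+3-w$, with the $w+(t+1-w)=t+1$ counting forcing a correct signer by the revealing round) coincide in all essentials with the paper's Lemmas~\ref{lemma:GCL:WBPWeightMonoProp}, \ref{lemma:GCK:WBPViewMonoProp}, \ref{lemma:GCL:WBPConspiProp}, and~\ref{lemma:GCL:WBPLiveProp}.

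Where you genuinely deviate is \WBPVisiProp (the paper's Lemma~\ref{lemma:GCL:WBPVisiProp}). The paper factors the propagation of round-1/2 backing into accumulator machinery ($\accumulatorTwo{i}[R]$, Lemma~\ref{lemma:accum:i:R:in:j:RpOne}, Lemma~\ref{lemma:T2:equal:final:round}, Corollary~\ref{coro:T2:i:r:in:T2:j:final}), proving the stronger $S_{i,R}\subseteq S_{j,t+1}$, and then constructs $\gamma_j$ via a case split keyed on $p_i\in S_{i,R}$, on $|\gamma_i|=t+1$, and on the remaining short-chain case. Your direct per-backer propagation of $W\subseteq S_{j,t+1}$ is a sound, lighter substitute (though note the sub-case where $p_i$ has already signed the backer's witnessing chain: it does not relay that chain, but its own earlier extension still carries the relevant length-2 prefix), and your disjointness invariant $\set\big(\subchain(\gamma',3,t+3-w)\big)\cap W=\emptyset$ is equivalent in effect to the paper's Sublemma~\ref{sublemma:exists:gamma:i}. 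The one soft spot is that your middle branch fires only when a correct process lies in $\set\big(\subchain(\gamma,3,t+3-w)\big)\cup W$, and you then assert ``otherwise $\gamma$ is short'' without proof. The implication is true but not free: for $w\geq 2$ it follows from the conspicuity count ($|\gamma|\geq t+3-w$ forces $|W\cup\set(\subchain(\gamma,3,t+3-w))|\geq t+1$, hence a correct member, contradicting the otherwise-hypothesis), but for $w=1$ with $|\gamma|=t+1$ the naive count yields only $t$, and one must additionally observe that $W$ automatically contains both $\psender$ and the position-2 signer of $\gamma$ (both outside the excluded subchain by acyclicity) to reach $t+1$; you also owe a word on why $p_i\notin\gamma$ before appending $p_i$ in the relay. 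The paper sidesteps all of this by keying its Case~2 on $|\gamma_i|=t+1$ and taking a correct process \emph{anywhere} in $\gamma_i$: the prefix of $\gamma_i$ ending at that process has its positions $3$ to $t+3-w$ inside $\set(\subchain(\gamma_i,3,t+3-w))$, hence disjoint from $W$. Widening your middle branch accordingly, to ``a correct process appears anywhere in $\gamma$ or in $W$'', repairs the sketch at a stroke: in the otherwise-branch every signer of $\gamma$ is Byzantine, so $|\gamma|\leq t$ and $p_i\notin\gamma$, and your relay argument closes the proof.
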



\begin{description}
\item[Remark]
Note that if $n>2t$, then because $c\geq n-t$, we have $c\geq t+1$, and $\maxfn(2,t+3-c)=2$, all correct processes deliver in at most $2$ rounds when the sender is correct.
\end{description}



\subsection{Preliminary lemmas}

\begin{lemma}\label{lemma:GCL:WBPWeightMonoProp}
  The predicate \GCLpredicate fulfills the \WBPWeightMonoProp property defined in Section~{\em\ref{sec:lambd-brb-robustn}}.
\end{lemma}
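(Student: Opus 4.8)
The plan is to reuse the very same witness produced by the stronger predicate. Assume $\certificate(m,w,\view)$ holds with $w\geq w'$, and fix a revealing chain $m\cryptchain\gamma\in\view$ witnessing it, so that, writing $S=\{q\in\Pi\mid q\text{ backs }m\text{ in round }1\text{ or }2\text{ in }\view\}$ and $W_w=S\setminus\set(\subchain(\gamma,3,t+3-w))$, we have $|W_w|\geq w$. The first observation is that $S$ depends only on $m$ and $\view$, not on the weight; hence the same $\gamma$ and the same $S$ can be fed into the test $\certificate(m,w',\view)$, and only the truncation endpoint of the revealing subchain changes. I would therefore set $W_{w'}=S\setminus\set(\subchain(\gamma,3,t+3-w'))$ and aim to prove $|W_{w'}|\geq w'$, which makes $\gamma$ a witness for $\certificate(m,w',\view)$.

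The subtlety, which I expect to be the main obstacle, is that the obvious monotonicity runs the wrong way: since $w\geq w'$ gives $t+3-w\leq t+3-w'$, the subchain for $w'$ is \emph{longer}, so $\set(\subchain(\gamma,3,t+3-w))\subseteq\set(\subchain(\gamma,3,t+3-w'))$ and $W_{w'}\subseteq W_w$, yielding only $|W_{w'}|\leq|W_w|$, the wrong direction. The right move is to bound precisely how many extra processes get subtracted when passing from the $w$-truncation to the $w'$-truncation. Because $m\cryptchain\gamma$ is a valid chain it is acyclic, so distinct positions carry distinct processes and $|\set(\subchain(\gamma,3,k))|$ equals the number of positions, namely $\maxfn\big(0,\minfn(|\gamma|,k)-2\big)$, which is a non-decreasing, $1$-Lipschitz function of the endpoint $k$. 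Consequently the number of extra processes removed is $|\set(\subchain(\gamma,3,t+3-w'))|-|\set(\subchain(\gamma,3,t+3-w))|\leq (t+3-w')-(t+3-w)=w-w'$, uniformly across all regimes of $w,w'$ relative to $t+1$ (including large weights $w\geq t+1$, for which the subchain is empty and no separate treatment is needed).

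With this bound in hand the conclusion is immediate: $|W_{w'}|\geq|W_w|-(w-w')\geq w-(w-w')=w'$, so the same revealing chain $\gamma$ witnesses $\certificate(m,w',\view)$, establishing \WBPWeightMonoProp. I would close by stressing that the argument never reconstructs a new witness but merely re-reads the existing one against a longer truncation threshold; the only real content is the acyclicity-based Lipschitz count, which is exactly where the weight gap $w-w'$ is absorbed.
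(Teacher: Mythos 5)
Your proof is correct and takes essentially the same route as the paper's: both reuse the same revealing chain $\gamma$ and set $S$, note that the longer truncation $\subchain(\gamma,3,t+3-w')$ removes at most $(t+3-w')-(t+3-w)=w-w'$ additional processes, and conclude $|W'|\geq |W|-(w-w')\geq w'$. Your explicit $1$-Lipschitz count of $\bigl|\set\bigl(\subchain(\gamma,3,k)\bigr)\bigr|$ merely spells out, in slightly more detail than the paper, the key inequality $\bigl|\set(\gamma_{w'})\setminus\set(\gamma_{w})\bigr|\leq w-w'$.
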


\begin{proofoverview}
  The lemma follows from the fact that the weight $w$ of a predicate $\GCLpredicate(m,w,\view)$ counts processes that back $m$ in round $1$ or $2$. A (small) technical difficulty is caused by the revealing chain ($m\cryptchain\gamma$ in Algorithm~\ref{algo:GCL:predicate}), which grows when considering a smaller weight $w'\leq w$.
  This growth is, however, bounded by the weight difference $w-w'$, which yields the lemma.
\end{proofoverview}

\begin{detailedproof}
Assume $\GCLpredicate(m,w,\view)$ holds, and consider $\gamma$, $S$ and $W$ the chain and the sets of processes that render $\GCLpredicate(m,w,\view)$ true according to Algorithm~\ref{algo:GCL:predicate}.
Consider $w'\in\mathbb{N}^{+}$ such that $w'\leq w$.
Let us note
\begin{align*}
    \gamma_{w}&=\subchain(\gamma,3,t+3-w),\\
    \gamma_{w'}&=\subchain(\gamma,3,t+3-w').
\end{align*}
By definition of \subchain, and because $w'\leq w$, $\gamma_{w}$ is a prefix of $\gamma_{w'}$, which implies
\begin{gather}
    \set(\gamma_{w'})\supseteq \set(\gamma_{w}),\label{eq:gamma:w':gamme:w:inclusion}\\
    \Big|\set(\gamma_{w'})\setminus \set(\gamma_{w})\Big|\leq
    (w-w').\label{eq:diff:subchains:w':w}
\end{gather}
\ta{this part must be better explained}
Define $W'\isdefinedas S\setminus \set(\gamma_{w'})$.
The following holds
\begin{align}
    W'&= \Big(S\setminus \set(\gamma_{w})\Big)\setminus \Big(\set(\gamma_{w'})\setminus \set(\gamma_{w})\Big),\tag{using (\ref{eq:gamma:w':gamme:w:inclusion})}\\
    |W'|&\geq \big|S\setminus \set(\gamma_{w})\big| - (w-w'),\tag{using (\ref{eq:diff:subchains:w':w})}\\
    |W'|&\geq \big|W\big| - (w-w'),\tag{by definition of $W$}\\
    |W'|&\geq w - (w-w') = w'.\tag{since $\GCLpredicate(m,w,\view)$ holds}
\end{align}
This last equation shows that $\GCLpredicate(m,w',\view)$ holds using $\gamma$, $S$, and $W'$ in Algorithm~\ref{algo:GCL:predicate}, concluding the proof.
\end{detailedproof}

\begin{lemma}\label{lemma:GCK:WBPViewMonoProp}
  The pair $(\GCLpredicate,\revealGCL)$ fulfills the \WBPViewMonoProp, \WBPRevealWeightMonoProp, and \WBPLocalConspProp properties defined in Section~{\em\ref{sec:lambd-brb-robustn}}.
\end{lemma}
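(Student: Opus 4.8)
The plan is to verify the three properties independently, in each case reusing the set-difference structure of $\GCLpredicate$ from Algorithm~\ref{algo:GCL:predicate}. Throughout, given a view, a weight $w$, and a revealing chain $m\cryptchain\gamma$, I write $S$ for the set of processes backing $m$ in round $1$ or $2$ and $W=S\setminus\set(\subchain(\gamma,3,t+3-w))$, exactly as in the predicate's definition. I would dispatch \WBPRevealWeightMonoProp first, as it is immediate: from $\revealGCL(w)=\maxfn(2,t+3-w)$, any $w>w'$ gives $t+3-w<t+3-w'$, and since $\maxfn(2,\cdot)$ is non-decreasing we get $\revealGCL(w)\leq\revealGCL(w')$.

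For \WBPViewMonoProp, I would fix $\view\subseteq\view'$ and assume $\GCLpredicate(m,w,\view)$, witnessed by a revealing chain $m\cryptchain\gamma\in\view$ and the associated sets $S,W$. Since $\view\subseteq\view'$, the same chain $m\cryptchain\gamma$ lies in $\view'$ and remains an admissible revealing chain; moreover every chain of $\view$ is a chain of $\view'$, so the backing set $S'$ computed from $\view'$ satisfies $S\subseteq S'$. Keeping the same $\gamma$ and $w$ leaves $\set(\subchain(\gamma,3,t+3-w))$ unchanged, so $W=S\setminus\set(\subchain(\gamma,3,t+3-w))\subseteq S'\setminus\set(\subchain(\gamma,3,t+3-w))=W'$, whence $|W'|\geq|W|\geq w$. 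This establishes $\GCLpredicate(m,w,\view')$ with the same witness chain.

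For \WBPLocalConspProp, the forward implication is direct: if $\GCLpredicate(m,1,\view)$ holds, its revealing chain $m\cryptchain\gamma\in\view$ immediately yields $m\in\messageOf(\view)$. The backward implication is where the only subtlety lies, and is the step I expect to require the most care. Assuming $m\in\messageOf(\view)$, I would pick any chain $m\cryptchain\gamma\in\view$ and use it as the revealing chain with $w=1$. Because correct processes handle only \emph{valid} chains, $m\cryptchain\gamma$ begins with $\psender$'s signature in position $1$, so $\psender\in S$; and because valid chains are acyclic, $\psender$ cannot reappear in positions $3$ to $t+2=t+3-1$, so $\psender\notin\set(\subchain(\gamma,3,t+2))$ (this also holds trivially when $\gamma$ is too short, as the subchain is then empty). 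Hence $\psender\in W$, giving $|W|\geq 1$ and therefore $\GCLpredicate(m,1,\view)$. The main thing to watch is precisely that the sender's position-$1$ signature is never stripped away by the subchain, which the acyclicity of valid chains guarantees; everything else in the lemma is routine bookkeeping on set inclusions and the monotonicity of $\maxfn$.
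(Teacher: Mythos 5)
Your proof is correct and follows essentially the same route as the paper's: the identical witness-chain argument ($S\subseteq S'$ and an unchanged subchain, hence $W\subseteq W'$) for \WBPViewMonoProp, the same direct reading of $\revealGCL(w)=\maxfn(2,t+3-w)$ for \WBPRevealWeightMonoProp, and the same use of chain validity ($\psender$ signs first) plus acyclicity ($\psender$ cannot reappear in positions $3$ to $t+2$) for the backward direction of \WBPLocalConspProp. Your only additions—instantiating $w=1$ explicitly and noting the trivially empty subchain for short chains—are details the paper leaves implicit.
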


\begin{proof}~
  \begin{itemize}
  \item \WBPViewMonoProp: Assume $\GCLpredicate(m,w,\view)$ holds, and consider $\gamma$, $S$ and $W$ the chain and the sets of processes that render $\GCLpredicate(m,w,\view)$ true according to Algorithm~\ref{algo:GCL:predicate}.
  If $\view'\supseteq\view$, then we still have $m\cryptchain\gamma\in\view'$.
  If we note
    \begin{align*}
      S'&\isdefinedas\big\{q\in\Pi\mid q\text{ backs $m$ in round $1$ or $2$ in $\view'$}\big\},\\
      W'&\isdefinedas S'\setminus \set\big(\subchain(\gamma,3,t+3-w)\big),
    \end{align*}
    $\view'\supseteq\view$ implies that $S'\supseteq S$ and $W'\supseteq W$, and therefore $|W'|\geq |W|\geq w$, proving the \WBPViewMonoProp property of \GCLpredicate.

\item \WBPRevealWeightMonoProp is trivially fulfilled by the definition of \revealGCL.

\item \WBPLocalConspProp: Assume $\GCLpredicate(m,1,\view)$ holds, and consider $\gamma$, $S$ and $W$ the chain and the sets of processes that render $\GCLpredicate(m,1,\view)$ true according to Algorithm~\ref{algo:GCL:predicate}.
Trivially, $m\cryptchain\gamma\in\view$ implies $m\in\messageOf(view)$.

Conversely, assume $m\in\messageOf(view)$.
There exists a chain $\gamma_m$ such that $m\cryptchain\gamma_m\in\view$.
As all chains of $\view$ are valid by assumption, $\gamma_m$ must start with $\psender$'s signature.
Define the sets $S$ and $W$ as in Algorithm~\ref{algo:GCL:predicate}, but using $\gamma_m$.
By construction, $\psender\in S$.
As valid chains are acyclic, $\psender\not\in\set\big(\subchain(\gamma_m,3,t+3-w)\big)$, and therefore $\psender\in W$, and $|W|\geq 1$.
This last statement implies that $\GCLpredicate(m,1,\view)$ holds.
With the previous paragraph, this proves that \GCLpredicate fulfills the \WBPLocalConspProp, and concludes the lemma.
\qedhere
\end{itemize}
\end{proof}

\begin{lemma}\label{lemma:GCL:WBPConspiProp}
  The pair $(\GCLpredicate,\revealGCL)$ fulfills the \WBPConspiProp property.
\end{lemma}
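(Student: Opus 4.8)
The plan is to verify \WBPConspiProp directly from the anatomy of the revealing chain, splitting on whether that chain is short or long. First I would dispose of the degenerate case: if $\revealGCL(w)=\maxfn(2,t+3-w)>t+1$ (which for $\GCLpredicate$ happens exactly when $w=1$), then no correct process ever executes round $\revealGCL(w)$, so the hypothesis ``$p_j$ executes round $\reveal(w)$'' is never met and the property holds vacuously. Hence I may assume $\revealGCL(w)\le t+1$, i.e. $w\ge 2$ and $t+3-w\le t+1$. Since $p_i$ observes $\GCLpredicate(m,w,\view_i)$, I fix a revealing chain $m\cryptchain\gamma\in\view_i$ together with the sets $S$ and $W=S\setminus\set(\subchain(\gamma,3,t+3-w))$ of Algorithm~\ref{algo:GCL:predicate}, so that $|W|\ge w$.

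Next I would isolate the single propagation fact that both cases rely on: \emph{if a correct process holds in its view a valid chain backing $m$ of length $\ell\le t$, then every correct process that executes round $\ell+1$ knows $m$ by round $\ell+1$.} This follows from the sign-and-retransmit discipline of Algorithm~\ref{algo:generic:SBRB:pi}: such a chain was received at round $\ell$; if the holder's signature is absent from it, the holder appends it and broadcasts the extended (length-$\ell+1$) chain at round $\ell+1$ (and a process about to stop still performs its round's broadcast at line~\ref{line:generic:bcast} before executing \stopKW at line~\ref{line:generic:stop:ready}); if the holder's signature already occurs, say at position $k\le\ell$, then by validity it broadcast the length-$k$ prefix at round $k\le\ell+1$. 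Either way a chain backing $m$ is broadcast by round $\ell+1\le t+1$ and delivered by the synchronous network in that very round.

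For \textbf{Case 1} ($|\gamma|<t+3-w$), the chain $m\cryptchain\gamma$ has length $|\gamma|\le t+2-w\le t$, so the propagation fact applies with $\ell=|\gamma|$ and yields that every correct process, and in particular $p_j$, knows $m$ by round $|\gamma|+1\le t+3-w\le\revealGCL(w)$. For \textbf{Case 2} ($|\gamma|\ge t+3-w$), the sub-sequence $\subchain(\gamma,3,t+3-w)$ is not truncated and, since valid chains are acyclic, consists of exactly $(t+3-w)-3+1=t+1-w$ distinct processes. By the very definition of $W$ these processes are disjoint from $W$, hence $|W\cup\set(\subchain(\gamma,3,t+3-w))|\ge |W|+(t+1-w)\ge w+(t+1-w)=t+1$; as at most $t$ processes are Byzantine, this union contains a correct process $p_c$. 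If $p_c\in W\subseteq S$, then $p_c$ backs $m$ in round $1$ or $2$ and, being correct, broadcast a chain backing $m$ by round $2\le\revealGCL(w)$; if instead $p_c$ appears at some position $3\le k\le t+3-w$ of $\gamma$, then $p_c$ (correct) broadcast the length-$k$ prefix of $m\cryptchain\gamma$ at round $k\le t+3-w\le\revealGCL(w)$. In both sub-cases $p_j$ receives a chain backing $m$ no later than round $\revealGCL(w)$.

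Combining the two cases gives $m\in\knownmessages_{j,\revealGCL(w)}$ whenever $p_j$ executes round $\revealGCL(w)$, which is exactly \WBPConspiProp. I expect the counting step of Case~2 to be the main obstacle: one must notice that the threshold $t+3-w$ was chosen precisely so that the $t+1-w$ ``fresh'' positions of the revealing chain, together with the $w$ backing processes of $W$, force a set of size $t+1$ and hence a correct witness, and one must then do the round bookkeeping carefully, checking that this witness broadcasts a chain backing $m$ within both the protocol's horizon $t+1$ and the target round $\revealGCL(w)=\maxfn(2,t+3-w)$, using the broadcast-before-stop behaviour noted above. The short-chain case is comparatively routine once the propagation fact is in hand.
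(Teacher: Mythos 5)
Your proof is correct and follows essentially the same route as the paper's: the same case split on the length of the revealing chain relative to the revealing round, the same sign-and-retransmit propagation argument in the short-chain case (including the observation that the broadcast at line~\ref{line:generic:bcast} precedes the \stopKW at line~\ref{line:generic:stop:ready}), and the same counting argument that $|W\cup\set(\subchain(\gamma,3,t+3-w))|\geq w+(t+1-w)=t+1$ forces a correct witness who has broadcast a chain backing $m$ by round $\maxfn(2,t+3-w)$. The only cosmetic deviations are that you dispose of the vacuous case $\revealGCL(w)>t+1$ explicitly (the paper handles it implicitly), and that your claim of ``exactly $t+1-w$'' distinct processes in the sub-sequence should read $\maxfn(0,t+1-w)$ to cover $w>t+1$ --- a regime where $|W|\geq w\geq t+2$ makes the conclusion immediate anyway, which is precisely why the paper writes the count with $\maxfn(0,\cdot)$.
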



\begin{proofoverview}
  Let us note $\sequencei$ the chain that renders true $\GCLpredicate(m,w,\view_i)$ for $p_i$ (Alg.~\ref{algo:GCL:predicate}). The proof depends on whether $|\sequencei|=\rmwi$ (the round in which $p_i$ observes the revealing chain $m\cryptchain\sequencei$, cf. Alg.~\ref{algo:GCL:predicate}) occurs before or after the round $\revealGCL(w)\isdefinedas \maxfn(2,t+3-w)$, the round during which we seek to prove that all correct processes that have not stopped earlier are aware of $m$.
  If $\rmwi<\maxfn(2,t+3-w)$, because $p_i$ forwards all chains it has not signed yet, all correct processes observe a chain containing $m$ at the latest by round $\rmwi+1\leq \maxfn(2,t+3-w)$.
  If $\rmwi\geq \maxfn(2,t+3-w)$, the construction of $\revealGCL$ implies that at least $t+1$ processes have signed chains containing $m$ during the first $\maxfn(2,t+3-w)$ rounds of the protocol. One of them must be correct, yielding the lemma.
\end{proofoverview}

\begin{detailedproof}
  Assume a correct process $p_i$ observes 
  $\GCLpredicate(m,w,\view_i)=\ttrue$ during its execution.
  In the following, $\sequencei$ denotes a chain that renders $\GCLpredicate$ true in Algorithm~\ref{algo:GCL:predicate}, and $\rmwi=|\sequencei|$ its length.
  As $p_i$ is correct, $m\cryptchain\sequencei\in\view_i$ (line~\ref{line:m:gamma:in:view} of Algorithm~\ref{algo:GCL:predicate}) implies that $p_i$ receives $m\cryptchain\sequencei$ during round $\rmwi=|\sequencei|$.
  The remainder of the proof distinguishes two cases, depending on whether $\rmwi< \revealGCL(w)\isdefinedas \maxfn(2,t+3-w)$ or not.
  \begin{itemize}
  \item Case $\rmwi< \maxfn(2,t+3-w)$: As just mentioned, $p_i$ receives $m\cryptchain \sequencei$ during the communication step of round $\rmwi$ (line~\ref{line:generic:view:i:R} of Algorithm~\ref{algo:generic:SBRB:pi}).
  If $p_i\not\in\sequencei$, $p_i$ signs the chain (line~\ref{line:generic:signing:next:round}, Alg.~\ref{algo:generic:SBRB:pi}), and broadcasts it during the communication step of round $\rmwi+1 \leq \maxfn(2,t+3-w)$ (line~\ref{line:generic:bcast} of the same algorithm).
  If $p_i\in\sequencei$, $p_i$ has signed a chain $m\cryptchain \seq'$ earlier, and broadcast the result before or during round $\rmwi$.

    In both cases, any correct process $p_j$ that executes round  $\maxfn(2,t+3-w)$ receives some chain $m\cryptchain \sequencei' \cryptchain p_i$ either during or before round $\maxfn(2,t+3-w)$, and therefore $m$ is known to $p_j$ at the latest by round $\maxfn(2,t+3-w)$.
    
  \item Case $\rmwi\geq \maxfn(2,t+3-w)$:
    Let us note $S$ and $W$ the set of processes which together with $\gamma_i$ render $\GCLpredicate(m,w,\view_i)$ true in Algorithm~\ref{algo:GCL:predicate}.

    \newcommand{\subgamma}{\gamma_{3..t+3-w}}
    Let us note $\subgamma\isdefinedas\subchain(\gamma_i,3,t+3-w)$. Since $|\gamma_i|=\rmwi\geq \maxfn(2,t+3-w) \geq t+3-w$, $\gamma_i$ contains more than $t+3-w$ processes. As a result, by definition of the function $\subchain$, and because $\gamma_i$ is valid, and hence acyclic, $\subgamma$ contains exactly $\maxfn(0,(t+3-w)-3+1)=\maxfn(0,t+1-w)$ distinct processes:
    \begin{equation}
      \big|\set(\subgamma\big)|=\maxfn(0,t+1-w). \label{eq:subgamma:t+1-w}
    \end{equation}

    By construction, $W$ and $\set(\subgamma)$ have no element in common, which yields
    \begin{align}
      |W\cup \set(\subgamma)|&=|W|+|\set(\subgamma)|,\\
      &=|W|+\maxfn(0,t+1-w),\tag{using (\ref{eq:subgamma:t+1-w})}\\
      &\geq \maxfn(w,w+t+1-w),\tag{by definition of \GCLpredicate}\\
      |W\cup \set(\subgamma)|&\geq \maxfn(w,t+1) \geq t+1.
    \end{align}

    The set $W\cup \set(\subgamma)$ therefore contains at least one correct process $p_k$. If $p_k\in W$, $p_k$'s signature appears in the first or second position of one of the chains of $\view_i$. If $p_k\in\set(\subgamma)$, $p_k$'s signature appears before position $t+3-w$ in $\gamma_i\in\view_i$. Both cases imply that $p_k$ has signed a chain with message $m$ and has broadcast this chain to all processes that have not stopped earlier during or before round $\maxfn(2,t+3-w)$, proving the lemma.
    \qedhere
    \end{itemize}
  \end{detailedproof}

In the following, we use the notation 
$\view_i[r_1..r_2]$ the chains of $\view_1$ that contain signatures between $r_1$ and $r_2$.
In the same way $\view_i[r]$ represents the chains received by $p_i$ during round $r$, $\view_i[r_1..r_2]$ contains the chains received between rounds $r_1$ and $r_2$:
\begin{align*}
  \view_i[r_1..r_2] &\isdefinedas \{\gamma\in\view_i: r_1\leq|\gamma|\leq r_2\}.
\end{align*}

Building upon the notation $\view_i[r_1..r_2]$, we introduce the quantity $\accumulatorTwo{i}[R]$ to prove the \WBPVisiProp of $(\GCLpredicate,\revealGCL)$ (Lemma~\ref{lemma:GCL:WBPVisiProp}). $\accumulatorTwo{i}[R]$ is defined using process $p_i$'s view $\view_i$ at round $R$ as follows
\begin{equation}
  \accumulatorTwo{i}[R] \isdefinedas \truncate_2\big(\view_i[2..R]\big). \label{eq:def:accumulator}
\end{equation}
$\accumulatorTwo{i}[R]$ contains all length-2 prefixes $m\cryptchain p_{\orig} \cryptchain q$ observed by $p_i$ by round $R$, i.e. 
$p_i$'s knowledge during round $R$ of the  processes that have signed $m$ by the end of round 2.

The following lemma states that all length-2 prefixes known at round $R$ by a correct process $p_i\neq p_{\orig}$ executing Algorithm~\ref{algo:generic:SBRB:pi} are known by all other correct processes by round $R+1$.

\begin{restatable}{lemma}{lemmaAccumIRInJRpOne}
\label{lemma:accum:i:R:in:j:RpOne}
  Let  $p_i$ and $p_j\neq p_i$ be two correct processes,
  such that $p_i$ executes the computation step (lines~{\em\ref{line:generic:start:comm:step}-\ref{line:generic:end:comm:step}}) of at least the $R\leq t$ first rounds, and $p_j$ executes the communication step of at least the first $R+1$ rounds. Then we have 
    $\forall R\in[1..t]$, $\accumulatorTwo{i}[R]\subseteq \accumulatorTwo{j}[R+1]$.
\end{restatable}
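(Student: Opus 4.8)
The plan is to prove the inclusion $\accumulatorTwo{i}[R]\subseteq\accumulatorTwo{j}[R+1]$ element-wise, for a fixed $R\in[1..t]$ under the stated execution hypotheses. I would fix an arbitrary length-$2$ prefix $m\cryptchain p_{\orig}\cryptchain q\in\accumulatorTwo{i}[R]$. By the definition $\accumulatorTwo{i}[R]\isdefinedas\truncate_2(\view_i[2..R])$, there is a chain $\pi\in\view_i$ with $r\isdefinedas|\pi|\in[2..R]$ and $\truncate_2(\pi)=m\cryptchain p_{\orig}\cryptchain q$; since correct processes only accept length-$r$ chains in round $r$, process $p_i$ received $\pi$ during round $r\leq R$. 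It then suffices to exhibit \emph{some} valid chain received by $p_j$ no later than round $R+1$ that carries this same length-$2$ prefix, since truncating it to its first two signatures places $m\cryptchain p_{\orig}\cryptchain q$ in $\accumulatorTwo{j}[R+1]$. The useful invariant throughout is that appending a signature, or truncating at any index $k\geq 2$, never alters the first two signatures, so the prefix is preserved.

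The natural case split is on whether $p_i$'s own signature occurs in $\pi$. If $p_i\in\pi$, then validity (the chain starts with $p_{\orig}$ and is acyclic) forces $p_i$ to occupy a unique position $k\in[2..r]$, with $k\geq 2$ because $p_i\neq p_{\orig}$ (both run Algorithm~\ref{algo:generic:SBRB:pi}). By unforgeability of signatures, a correct $p_i$ can only have produced that signature via line~\ref{line:generic:signing:next:round}, i.e. it received $\truncate_{k-1}(\pi)$ in round $k-1$ and broadcast $\truncate_k(\pi)$ in round $k\leq r\leq R$ (line~\ref{line:generic:bcast}). As $p_j$ executes the communication step of round $k\leq R+1$, it receives $\truncate_k(\pi)$, whose prefix is $m\cryptchain p_{\orig}\cryptchain q$ (since $k\geq 2$), giving the prefix in $\accumulatorTwo{j}[R+1]$.

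The complementary case $p_i\notin\pi$ is where I expect the only genuine difficulty. Here $\pi\in\view_i[r]$ with $p_i\notin\pi$, so during the computation step of round $r$ (line~\ref{line:generic:signing:next:round}) process $p_i$ inserts $\pi\cryptchain p_i$ into $\toBeBcast_{i,r+1}$ and broadcasts it in round $r+1$. When $r<R$ this broadcast round is $\leq R$ and $p_j$ trivially receives the chain in time. The delicate sub-case is $r=R$, where the forward happens in round $R+1$ and $p_i$ may already have toggled $\ready_i$ and be about to stop. The resolution is the broadcast-before-quit ordering of Algorithm~\ref{algo:generic:SBRB:pi}: the broadcast at line~\ref{line:generic:bcast} precedes the $\stopKW$ at line~\ref{line:generic:stop:ready}. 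Since, by hypothesis, the correct process $p_i$ completes the computation step of round $R\leq t$, it did not quit during round $R$ (the quit test lies in the earlier communication step), so it enters round $R+1\leq t+1$ and performs its broadcast of $\pi\cryptchain p_i$ before any possible termination. As $p_j$ executes the communication step of round $r+1\leq R+1$, it receives $\pi\cryptchain p_i$, whose prefix is again $m\cryptchain p_{\orig}\cryptchain q$, concluding the inclusion.

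I expect the crux to be exactly this last sub-case ($r=R$, $p_i\notin\pi$): it is the one place where the asymmetry of the hypotheses matters, pairing ``$p_i$ executes the computation step of the first $R$ rounds'' with ``$p_j$ executes the communication step of the first $R+1$ rounds'', and it is the only point that relies on the relative order of lines~\ref{line:generic:bcast} and~\ref{line:generic:stop:ready}. All remaining cases reduce to the elementary fact that a correct process either re-broadcasts, in the immediately following round, any chain it has not yet signed, or, if it already appears in the chain, has necessarily broadcast the matching truncation at the round in which it signed.
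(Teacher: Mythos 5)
Your proof is correct and follows essentially the same route as the paper's: the same two-case analysis on whether $p_i$'s signature already appears in the chain, the same forwarding argument via line~\ref{line:generic:signing:next:round} and line~\ref{line:generic:bcast}, and the same crucial observation (made parenthetically in the paper's Case~2) that completing the computation step of round $R\leq t$ forces $p_i$ to execute the broadcast of round $R+1$ before any quit at line~\ref{line:generic:stop:ready}. The only difference is presentational: you chase an arbitrary length-$2$ prefix directly through a chain of any length $r\in[2..R]$, whereas the paper packages the same argument as an induction on $R$ that isolates the newly received chains $\view_i[R+1]$ at each step.
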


\begin{proof}
Note that since $p_i$ and $p_j$ execute Algorithm~\ref{algo:generic:SBRB:pi}, they are both different from $\psender$. We prove the lemma by induction.
\begin{itemize}
  \item Case $R=1$: $\view_i[2..1]=\emptyset$, and therefore $\accumulatorTwo{i}[1]=\emptyset$, trivially proving the case.
  \item Induction case: Let us assume $\accumulatorTwo{i}[R]\subseteq \accumulatorTwo{j}[R+1]$ for some $R\in[1..t-1]$.
  \begin{align*}
      \accumulatorTwo{i}[R+1]&= \truncate_2\big(\view_i[2..R+1]\big),\\
      &=\truncate_2\big(\view_i[2..R]\cup\view_i[R+1]\big),\\
      &=\truncate_2\big(\view_i[2..R]\big)\cup\truncate_2\big(\view_i[R+1]\big),\\
      &=\accumulatorTwo{i}[R]\cup \truncate_2\big(\view_i[R+1]\big),\\
      &\subseteq\accumulatorTwo{j}[R+1]\cup \truncate_2\big(\view_i[R+1]\big), \tag{by case assumption}\\
      &\subseteq\accumulatorTwo{j}[R+2]\cup \truncate_2\big(\view_i[R+1]\big). \tag{as $\accumulatorTwo{j}[R+1]\subseteq\accumulatorTwo{j}[R+2]$}
    \end{align*}
    We now need to show that $\truncate_2\big(\view_i[R+1]\big)\subseteq \accumulatorTwo{j}[R+2]$ to complete the proof. Consider $m\cryptchain p_{\orig} \cryptchain \seq \in \view_i[R+1]$, with $\seq\in \Pi^{R}$. By assumption, $p_i\neq p_{\orig}$, 
    we must therefore distinguish two cases depending whether $p_i$ appears in $\seq$ or not.
    \begin{itemize}
    \item Case 1: If $p_i\in \seq$, $p_i$ has signed a chain $m\cryptchain p_{\orig} \cryptchain \seq'$ at line~\ref{line:generic:signing:next:round} of Alg.~\ref{algo:generic:SBRB:pi} during a round $R'<R+1$ (where $\seq'\cryptchain p_i$ is a prefix of $\seq$), and $p_i$ has broadcast the chain $m\cryptchain p_{\orig} \cryptchain \seq' \cryptchain p_i$ to all processes (since $p_i$ is correct) at line~\ref{line:generic:bcast} during the communication step of the following round $R'+1\leq R+1$.
    Therefore $m\cryptchain p_{\orig} \cryptchain \seq' \cryptchain p_i\in \view_j[R'+1]$, which implies
    \begin{align*}
      \truncate_2(m\cryptchain p_{\orig} \cryptchain \seq)&=\truncate_2(m\cryptchain p_{\orig} \cryptchain \seq' \cryptchain p_i)\\
      &\in\truncate_2(\view_j[R'+1])\subseteq\accumulatorTwo{j}[R+2].
      \end{align*}
      \item Case 2: If $p_i\not\in \seq$, $p_i$ signs $m\cryptchain p_{\orig} \cryptchain \seq$ during round $R+1$ and as above broadcasts $m\cryptchain p_{\orig} \cryptchain \seq \cryptchain p_i$ at round $R+2$ to all processes. (By construction, the fact that $p_i$ executes the computation step of round $R+1\leq t$ implies that it executes the communication step of round $R+2$.) This similarly implies
        \begin{align*}
          \truncate_2(m\cryptchain p_{\orig} \cryptchain \seq)
          &=\truncate_2(m\cryptchain p_{\orig} \cryptchain \seq \cryptchain p_i)\\
          &\in\truncate_2(\view_j[R+2]) \subseteq\accumulatorTwo{j}[R+2].
        \end{align*}
    \end{itemize}
  \end{itemize}
  These two cases show that $\truncate_2\big(\view_i[R+1]\big)\subseteq \accumulatorTwo{j}[R+2]$, which concludes the proof of the lemma.
\end{proof}

The following lemma shows that if $p_{\orig}$ is Byzantine then all correct processes agree on the length-2 prefixes they have observed by round $t+1$.

\begin{restatable}{lemma}{lemmaTTwoequalFinalRound}\label{lemma:T2:equal:final:round}
  Let $p_{\orig}$ be Byzantine, and $p_i$ and $p_j$ be two correct processes that execute the communication step of round $\finalround$, then $\accumulatorTwo{i}[\finalround]=\accumulatorTwo{j}[\finalround]$.
\end{restatable}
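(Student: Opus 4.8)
The plan is to prove the two inclusions $\accumulatorTwo{i}[\finalround]\subseteq\accumulatorTwo{j}[\finalround]$ and $\accumulatorTwo{j}[\finalround]\subseteq\accumulatorTwo{i}[\finalround]$ separately; since the hypotheses are symmetric in $p_i$ and $p_j$, it suffices to establish one of them, say $\accumulatorTwo{i}[\finalround]\subseteq\accumulatorTwo{j}[\finalround]$, and to conclude by swapping the two processes. First I would observe that, because each of $p_i$ and $p_j$ executes the communication step of round $\finalround=t+1$ by hypothesis, neither has stopped early, so each executes every round $1,\dots,t+1$; in particular the preconditions of Lemma~\ref{lemma:accum:i:R:in:j:RpOne} are satisfied for both orderings of the pair.

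The next step is to split $\accumulatorTwo{i}[\finalround]$ into a part covered by the induction lemma and a residual part. Since $\truncate_2$ commutes with unions and $\view_i[2..t+1]=\view_i[2..t]\cup\view_i[t+1]$, we obtain the decomposition $\accumulatorTwo{i}[t+1]=\accumulatorTwo{i}[t]\cup\truncate_2\big(\view_i[t+1]\big)$. The first term is handled at once: applying Lemma~\ref{lemma:accum:i:R:in:j:RpOne} with $R=t$ gives $\accumulatorTwo{i}[t]\subseteq\accumulatorTwo{j}[t+1]=\accumulatorTwo{j}[\finalround]$. It then remains to show $\truncate_2\big(\view_i[t+1]\big)\subseteq\accumulatorTwo{j}[\finalround]$, i.e. that every length-$2$ prefix carried by a chain that $p_i$ learns for the first time in the very last round is already known to $p_j$ by round $t+1$.

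The heart of the argument, and the step I expect to be the main obstacle, is this residual case. Consider any chain $m\cryptchain p_{\orig}\cryptchain\seq\in\view_i[t+1]$ with $\seq\in\Pi^{t}$; being valid, it is acyclic and contains $t+1$ distinct signers. As $p_{\orig}$ is Byzantine and at most $t$ processes are Byzantine, at least one signer $p_k$ is correct, and $p_k\neq p_{\orig}$, so $p_k$ occupies some position $s$ with $2\leq s\leq t+1$ in the chain. By construction of Algorithm~\ref{algo:generic:SBRB:pi}, a correct process whose signature is in position $s$ received the length-$(s-1)$ prefix in round $s-1$ and broadcast the length-$s$ prefix in round $s$; hence every correct process that executes round $s$ --- including $p_j$, since $s\leq t+1$ --- receives this length-$s$ prefix in round $s$. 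Because $s\geq 2$, the first two signatures of this prefix coincide with those of the full chain, so $\truncate_2(m\cryptchain p_{\orig}\cryptchain\seq)=\truncate_2(\text{length-}s\text{ prefix})\in\truncate_2\big(\view_j[s]\big)\subseteq\accumulatorTwo{j}[\finalround]$. This establishes $\truncate_2\big(\view_i[t+1]\big)\subseteq\accumulatorTwo{j}[\finalround]$, and together with the first term yields $\accumulatorTwo{i}[\finalround]\subseteq\accumulatorTwo{j}[\finalround]$. Swapping the roles of $p_i$ and $p_j$ gives the reverse inclusion, and hence the claimed equality.
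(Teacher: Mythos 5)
Your proof is correct and takes essentially the same route as the paper's: the identical decomposition $\accumulatorTwo{i}[\finalround]=\accumulatorTwo{i}[t]\cup \truncate_2\big(\view_i[\finalround]\big)$, the same use of Lemma~\ref{lemma:accum:i:R:in:j:RpOne} for the first term, the same counting argument that a valid round-$(\finalround)$ chain carries $t+1$ distinct signers and hence (since $p_{\orig}$ is Byzantine) a correct signer $p_k$ in some position $s\geq 2$ whose broadcast of the length-$s$ prefix places the length-$2$ prefix in $\accumulatorTwo{j}[\finalround]$, and the same concluding symmetry step. No gaps.
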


\begin{proofoverview}
  The proof uses the fact that the length-2 prefixes that $p_i$ receives in round $t+1$ have been propagated by $t+1$ processes. One of these processes must be correct, and because $p_{\orig}$ is Byzantine, it must be a process that signed the chain at the earliest in round $2$, implying that the length-2 prefix is also known to $p_j$.
  This observation, together with Lemma~\ref{lemma:accum:i:R:in:j:RpOne} yields the proof. 
\end{proofoverview}


\begin{detailedproof}
  By definition
  \begin{align}
    \accumulatorTwo{i}[t+1]
    &=\truncate_2\left(\view_i[2..t+1]\right)\nonumber\\
    &=\truncate_2\left(\view_i[2..t]\cup\view_i[t+1]\right),\nonumber\\
    &=\accumulatorTwo{i}[t]\cup \truncate_2\big(\view_i[t+1]\big).\nonumber
  \end{align}
  Applying Lemma~\ref{lemma:accum:i:R:in:j:RpOne} we have $\accumulatorTwo{i}[t]\subseteq \accumulatorTwo{j}[t+1]$, which with the previous equality yields
  \begin{align}
    \accumulatorTwo{i}[t+1] &\subseteq \accumulatorTwo{j}[t+1]\cup \truncate_2\big(\view_i[t+1]\big).\label{eq:T2i:decomp}
  \end{align}

  We now prove that $\truncate_2\big(\view_i[t+1]\big)\subseteq  \accumulatorTwo{j}[t+1]$. Consider $m\cryptchain p_{\orig} \cryptchain \seq \in \view_i[t+1]$. As $p_i$ is correct, it only accepts acyclic signature chains, and $p_{\orig}\not\in\seq$.
  This implies $|\{p_{\orig}\} \cup \set(\seq)|=|\{p_{\orig}\}|+|\set(\seq)|=1+t$.
  So $\{p_{\orig}\} \cup \set(\seq)$ therefore contains at least one correct process, $p_k$. As $p_{\orig}$ is Byzantine by lemma assumption, $p_k\in \seq$, and $p_k$ therefore has signed a chain $m\cryptchain p_{\orig} \cryptchain \seq'$ at line~\ref{line:generic:signing:next:round} of Alg.~\ref{algo:generic:SBRB:pi} before or during round $t$, where $\seq'\cryptchain p_k$ is a prefix of $\seq$. As a result, $p_k$ has broadcast the resulting chain $m\cryptchain p_{\orig} \cryptchain \seq' \cryptchain p_k$ to all other processes during the following round $R'\leq t+1$. This implies $m\cryptchain p_{\orig} \cryptchain \seq' \cryptchain p_k\in \view_j[R']$, and hence
  \begin{align*}
    \truncate_2(m\cryptchain p_{\orig} \cryptchain \seq)&=\truncate_2(m\cryptchain p_{\orig} \cryptchain \seq' \cryptchain p_k)\\
    &\in\truncate_2(\view_j[R']) \subseteq\accumulatorTwo{j}[t+1].
  \end{align*}
  This last equation shows that $\truncate_2(\view_i[t+1])\subseteq\accumulatorTwo{j}[t+1]$, which injected in (\ref{eq:T2i:decomp}) yields $\accumulatorTwo{i}[t+1]\subseteq \accumulatorTwo{j}[t+1]$. 
  By inverting $p_i$ and $p_j$, by the same reasoning we obtain $\accumulatorTwo{j}[t+1]\subseteq \accumulatorTwo{i}[t+1]$, which concludes the Lemma's proof.
\end{detailedproof}

\begin{corollary}\label{coro:T2:i:r:in:T2:j:final}
Let $p_{\orig}$ be Byzantine, $p_i$ and $p_j$ be two correct processes, such that $p_i$ executes the computation step of at least the first $r\in[1..\finalround]$ rounds, and $p_j$ executes the communication step of all $\finalround$ rounds. Then we have $\accumulatorTwo{i}[r]\subseteq \accumulatorTwo{j}[\finalround]$.
\end{corollary}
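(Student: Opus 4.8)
The plan is to combine Lemma~\ref{lemma:accum:i:R:in:j:RpOne} and Lemma~\ref{lemma:T2:equal:final:round}, using the monotonicity of $\accumulatorTwo{j}[\cdot]$ in its round argument. First I would record the elementary observation that, for any correct process $p$ and any $r_1\leq r_2$ for which $p$ reaches round $r_2$, one has $\accumulatorTwo{p}[r_1]\subseteq\accumulatorTwo{p}[r_2]$. This is immediate from the definition $\accumulatorTwo{p}[R]\isdefinedas\truncate_2(\view_p[2..R])$ together with the fact that $\view_p$ only grows across rounds, so that $\view_p[2..r_1]\subseteq\view_p[2..r_2]$ and hence their length-$2$ truncations are nested.

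The argument then splits on the value of $r$. If $r\leq t$, I would apply Lemma~\ref{lemma:accum:i:R:in:j:RpOne} directly: $p_i$ executes the computation step of the first $r\leq t$ rounds, and $p_j$ executes the communication step of all $\finalround=t+1\geq r+1$ rounds, so the hypotheses are met and $\accumulatorTwo{i}[r]\subseteq\accumulatorTwo{j}[r+1]$. Since $r+1\leq t+1=\finalround$, monotonicity yields $\accumulatorTwo{j}[r+1]\subseteq\accumulatorTwo{j}[\finalround]$, and chaining the two inclusions gives the claim.

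If instead $r=\finalround=t+1$, Lemma~\ref{lemma:accum:i:R:in:j:RpOne} no longer applies (it is stated only for $R\leq t$), and I would invoke Lemma~\ref{lemma:T2:equal:final:round} in its place. Here I must first check its hypotheses: executing the computation step of round $\finalround$ entails executing the preceding communication step of the same round, so $p_i$, which by assumption reaches the computation step of round $r=\finalround$, does execute the communication step of round $\finalround$, as does $p_j$; since $p_{\orig}$ is Byzantine, the lemma gives the equality $\accumulatorTwo{i}[\finalround]=\accumulatorTwo{j}[\finalround]$, which is in particular the desired inclusion.

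The only real subtlety is the boundary between the two regimes: because Lemma~\ref{lemma:accum:i:R:in:j:RpOne} is restricted to $R\leq t$, the final round $r=t+1$ must be peeled off and handled separately through Lemma~\ref{lemma:T2:equal:final:round}. A secondary point to verify throughout is the ``computation step $\Rightarrow$ communication step'' implication, which is what lets me translate the corollary's hypothesis on $p_i$ (phrased in terms of the computation step) into the communication-step hypotheses required by both lemmas. Finally, if the statement does not exclude the degenerate case $p_i=p_j$, it is covered directly by monotonicity alone, since then $\accumulatorTwo{i}[r]\subseteq\accumulatorTwo{i}[\finalround]=\accumulatorTwo{j}[\finalround]$.
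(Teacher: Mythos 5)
Your proof is correct and follows essentially the same route as the paper: the same case split on $r=\finalround$ versus $r\leq t$, discharged by Lemma~\ref{lemma:T2:equal:final:round} and Lemma~\ref{lemma:accum:i:R:in:j:RpOne} respectively, together with the monotonicity observation $\accumulatorTwo{j}[r+1]\subseteq\accumulatorTwo{j}[\finalround]$. Your additional checks (that executing the computation step of a round entails executing its communication step, and the degenerate case $p_i=p_j$, which Lemma~\ref{lemma:accum:i:R:in:j:RpOne} formally excludes) are sound refinements of details the paper leaves implicit.
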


\begin{proof}The proof follows either from Lemma~\ref{lemma:T2:equal:final:round} or~\ref{lemma:accum:i:R:in:j:RpOne}, depending on whether $r=\finalround$ or not.
  \begin{itemize}
  \item If $r=\finalround$, the corollary follows trivially from Lemma~\ref{lemma:T2:equal:final:round}.
  \item If $r<\finalround$, this follows from Lemma~\ref{lemma:accum:i:R:in:j:RpOne}, and observing that $\accumulatorTwo{j}[r+1]\subseteq \accumulatorTwo{j}[\finalround]$. \qedhere 
  \end{itemize}
\end{proof}


\begin{lemma}\label{lemma:GCL:WBPVisiProp}
  The pair $(\GCLpredicate,\revealGCL)$ fulfills the \WBPVisiProp property.
\end{lemma}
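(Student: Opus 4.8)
The plan is to show that everything needed to make $\GCLpredicate(m,w,\view_j)$ true at round $t+1$ is forced upon $p_j$: the backing set is transmitted through the length-$2$ prefixes controlled by the $T_2$ machinery, while a usable revealing chain is produced by a correct process whose existence is guaranteed by the same counting argument as in Lemma~\ref{lemma:GCL:WBPConspiProp}. Concretely, let $\gamma_i$, $S_i$, $W_i$ (with $|W_i|\ge w$) be the revealing chain and the sets witnessing $\GCLpredicate(m,w,\view_i)$ at the round $R_i\le t+1$ at which $p_i$ observes it, and write $r_i=|\gamma_i|$, so that $p_i$ receives $m\cryptchain\gamma_i$ during round $r_i$.

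First I would transfer the backing set. Since the only round-$1$ signer of a valid chain of $m$ is $\psender$, and the round-$2$ signers are exactly the processes $q$ for which $m\cryptchain\psender\cryptchain q$ is a length-$2$ prefix, the set $S_i$ is determined by $\accumulatorTwo{i}[R_i]$ (restricted to $m$) together with $\psender$. Conspicuity (Lemma~\ref{lemma:GCL:WBPConspiProp}) first guarantees that $m$ is known to $p_j$ by round $\revealGCL(w)\le t+1$, so $\psender\in S_j$; and Corollary~\ref{coro:T2:i:r:in:T2:j:final} (applicable because $\psender$ is Byzantine, $p_i$ runs the first $R_i\le t+1$ rounds, and $p_j$ runs all $t+1$ rounds) gives $\accumulatorTwo{i}[R_i]\subseteq\accumulatorTwo{j}[t+1]$, hence every round-$2$ backer of $m$ seen by $p_i$ is also seen by $p_j$. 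Together this yields $S_i\subseteq S_j$ at round $t+1$.

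It then suffices to exhibit a chain $m\cryptchain\gamma\in\view_j$ whose subchain $\set(\subchain(\gamma,3,t+3-w))$ is disjoint from $W_i$: indeed $W_i\subseteq S_i\subseteq S_j$ then gives $W_i\subseteq S_j\setminus\set(\subchain(\gamma,3,t+3-w))$, so $\GCLpredicate(m,w,\view_j)$ holds with weight at least $|W_i|\ge w$. I would obtain such a $\gamma$ by the same case split on $r_i$ versus $\maxfn(2,t+3-w)$ as in Lemma~\ref{lemma:GCL:WBPConspiProp}. If the subchain range is empty ($w\ge t+1$), any chain carrying $m$ works. If $r_i<\maxfn(2,t+3-w)$, then $p_i$ itself (re)broadcasts the revealing chain one last time before it possibly quits, so $p_j$ receives a chain coinciding with $\gamma_i$ up to the point where $p_i$'s signature is appended. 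If instead $r_i\ge\maxfn(2,t+3-w)$, the counting argument of Lemma~\ref{lemma:GCL:WBPConspiProp} provides a correct process $p_k\in W_i\cup\set(\subchain(\gamma_i,3,t+3-w))$, which broadcasts either the length-$2$ prefix $m\cryptchain\psender\cryptchain p_k$ (if $p_k\in W_i$, recalling that $\psender$ being Byzantine forces $p_k$ to be a round-$2$ backer) or the prefix of $\gamma_i$ ending at its own position (if $p_k$ lies in the subchain); in each situation $p_j$ takes $\gamma$ to be this received chain.

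The delicate point, and the main obstacle, is verifying that the subchain of the chain $p_j$ \emph{actually} holds (which need not equal $\gamma_i$) is disjoint from $W_i$. Restricting to positions $3$ to $t+3-w$, a prefix of $\gamma_i$ only ever contributes processes already lying in $\set(\subchain(\gamma_i,3,t+3-w))$, which is disjoint from $W_i$ by definition, so those cases are clean. The one case requiring care is $r_i<\maxfn(2,t+3-w)$ with $p_i\notin\gamma_i$, where the chain $m\cryptchain\gamma_i\cryptchain p_i$ received by $p_j$ may carry $p_i$ inside positions $3..t+3-w$: if $p_i\notin W_i$ this is harmless, and if $p_i\in W_i$ then $p_i$ is a correct round-$2$ backer, so $p_j$ also holds $m\cryptchain\psender\cryptchain p_i$, whose subchain is empty, and I would use that chain instead. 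I also need the standard early-stopping observation (already implicit in Lemma~\ref{lemma:GCL:WBPConspiProp}) that a correct process performs one final broadcast in the communication step of the round in which it quits, guaranteeing the required forwarding reaches $p_j$. Unlike the $T_2$ lemmas, no role inversion is needed, since the statement is one-directional, and the argument concludes that $p_j$ observes $\GCLpredicate(m,w,\view_j)$ at round $t+1$.
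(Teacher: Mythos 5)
Your overall strategy is essentially the paper's own proof: you transfer the backing set through the length-$2$ prefix machinery (Corollary~\ref{coro:T2:i:r:in:T2:j:final}, matching the paper's Sublemma~\ref{sublemma:S:i:R:subset:S:j:t+1}), and you then exhibit a chain $m\cryptchain\gamma_j\in\view_j$ at round $t+1$ whose positions $3$ to $t+3-w$ avoid $W_i$ (a slightly more direct formulation of the paper's Sublemma~\ref{sublemma:exists:gamma:i}), using the same three mechanisms: the round-2 prefix $m\cryptchain\psender\cryptchain p_k$ when the offending signer is a round-2 backer, prefix-forwarding by a correct signer inside $\gamma_i$, and $p_i$'s own final rebroadcast. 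Your treatment of the delicate $p_i\in W_i$ patch and of the final broadcast before quitting is correct and matches the paper.

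There is, however, a genuine gap at the border case $w=1$, caused by keying your case split on $r_i$ versus $\revealGCL(w)=\maxfn(2,t+3-w)$, whereas the paper splits on $|\gamma_i|=t+1$ versus $|\gamma_i|<t+1$ (and on $p_i\in S_{i,R}$). For $w=1$ we have $\revealGCL(1)=t+2>t+1$, so two of your steps fail: (i) your claim that conspicuity gives $m$ known to $p_j$ ``by round $\revealGCL(w)\leq t+1$'' is false at this weight --- \WBPConspiProp is vacuous for $w=1$, since no process executes round $t+2$ --- so your derivation of $\psender\in S_j$ is unfounded there; and (ii) the subcase $r_i=t+1<t+2=\maxfn(2,t+3-1)$ falls into your ``$p_i$ rebroadcasts'' branch, where the broadcast of $m\cryptchain\gamma_i\cryptchain p_i$ would have to occur in round $r_i+1=t+2$, a round that does not exist, so the chain never reaches $p_j$ and your construction of $\gamma$ collapses. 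Note that $w=1$ is squarely in scope: the proof of \brbNDYprop uses \WBPVisiProp together with \WBPLocalConspProp at arbitrary weights. The repair is exactly the paper's Case 2 of Sublemma~\ref{sublemma:exists:gamma:i}: when $|\gamma_i|=t+1$, acyclicity forces a correct signer $p_k\in\gamma_i$, whose broadcast prefix $m\cryptchain\gamma_k\cryptchain p_k$ reaches $p_j$ by round $t+1$ and contributes only elements of $\set\big(\subchain(\gamma_i,3,t+3-w)\big)$, hence stays disjoint from $W_i$; and once any chain carrying $m$ lies in $\view_j$, the validity of chains gives $\psender\in S_{j,t+1}$ with no appeal to conspicuity. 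Substituting this uniform split for yours, the rest of your argument goes through.
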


\begin{proofoverview}
  Consider $p_i$, $p_j$ two correct processes, and assume \psender is Byzantine. The proof focuses on the sets of processes $S$ and $W$, and on the revealing chain $\gamma$ used in the predicate \GCLpredicate (Algorithm~\ref{algo:GCL:predicate}).
  First, the proof shows that the set $S$ perceived by $p_i$ is propagated to all other correct processes at the latest by round $t+1$. More concretely, if $p_i$ perceives a process $p_k$ as backing $m$, then all correct processes also perceive $p_k$ as backing $m$ by round $t+1$.
  The proof then shows that any revealing chain $\gamma_i$ that renders \GCLpredicate $\ttrue$ for $p_i$ implies the existence of a revealing chain $\gamma_j$ is $p_j$'s view at round $t+1$ that is no more ``constraining'' than $\gamma_i$.
\end{proofoverview}

\begin{detailedproof}
  Assume \psender is Byzantine.
  Assume a correct process $p_i$ observes $\GCLpredicate(m,$ $w,$ $\view_i)=$ $\ttrue$ during some round $R$ of its execution. Consider $p_j$ another correct process that reaches round $\finalround$. Without loss of generality, assume $p_j\neq p_i$ (as the case $p_i=p_j$ is trivial).

  In the following, for clarity, we note $\view_x^y$ the value of the $\view$ variable in Algorithm~\ref{algo:generic:SBRB:pi} for the process $p_x$ at round $y$ (or more precisely, after \view has been updated in round $y$ at line~\ref{line:generic:view:i:R} of Algorithm~\ref{algo:generic:SBRB:pi}).
  
  Consider $\sequencei$, $S_{i,R}$ and $W_{i,R}$ the chain and the sets of processes that render true $\GCLpredicate$ for $p_i$ in Algorithm~\ref{algo:GCL:predicate} during round $R$. 

By definition of $\GCLpredicate$, $|W_{i,R}|\geq w$.
Using $\accumulatorTwo{i}[R]$ (\ref{eq:def:accumulator}), we can express $S_{i,R}$ as follows
\ft{There is a hidden subtlety here: $\accumulatorTwo{i}[R]$ is defined based on the value of $\view_i$ at round $R$ (or more precisely of $\view_i$ just after the communication step of round $R$.)}
\begin{align*}
  S_{i,R}&=\{\set(\gamma) \mid m\cryptchain\gamma\in \view_i^R[1]\cup\accumulatorTwo{i}[R]\}.
\end{align*}
The set $\view_i^R[1]$ contains the chains received by $p_i$ in round $1$, while $\accumulatorTwo{i}[R]$ contains the length-2 prefixes of chains received by $p_i$ in rounds $2$ to $R$.
If $m\cryptchain\gamma\in \view_i^R[1]\cup\accumulatorTwo{i}[R]$, the processes whose signature appears in $\gamma$ have therefore backed $m$ in round $1$ or $2$ in $\view_i^R$.
\newcommand{\subseqi}{\gamma^i_{3..t+3-w}}
Let us define $\subseqi\isdefinedas\subchain(\sequencei,3,t+3-w)$.
We can express $W_{i,R}$ as
\begin{align*}    
  W_{i,R}&=S_{i,R}\setminus \set\big(\subseqi\big)\\
  &=\big\{\set(\gamma) \mid m\cryptchain\gamma\in \view_i^R[1]\cup\accumulatorTwo{i}[R]\big\}\, \setminus \,\set\big(\subseqi\big).
\end{align*}

In the following, we will consider $p_j$'s perception at round $t+1$ of the processes that have backed $m$ in round $1$ or $2$. The set of these processes can be defined as:
\begin{equation*}
  S_{j,t+1}=\{\set(\gamma) \mid m\cryptchain\gamma\in \view_j^{t+1}[1]\cup\accumulatorTwo{j}[t+1]\}.
\end{equation*}

\begin{sublemma}\label{sublemma:S:i:R:subset:S:j:t+1}
  $S_{i,R}\subseteq S_{j,t+1}.$
\end{sublemma}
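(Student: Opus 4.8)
The plan is to decompose $S_{i,R}$ along the two families of chains that define it, the length-$1$ chains in $\view_i^R[1]$ and the length-$2$ prefixes in $\accumulatorTwo{i}[R]$, and to show separately that each contributes only processes already present in $S_{j,t+1}$. Recall that, up to the abuse of notation for $\set(\cdot)$, we have $S_{i,R}=\set\big(\view_i^R[1]\cup\accumulatorTwo{i}[R]\big)$ and $S_{j,t+1}=\set\big(\view_j^{t+1}[1]\cup\accumulatorTwo{j}[t+1]\big)$. The length-$2$ part is immediate from Corollary~\ref{coro:T2:i:r:in:T2:j:final}: since \psender is Byzantine, $p_i$ executes the computation step of its first $R\leq t+1$ rounds (it observes the predicate in round $R$), and $p_j$ executes the communication step of all $t+1$ rounds, the corollary yields $\accumulatorTwo{i}[R]\subseteq\accumulatorTwo{j}[t+1]$, hence $\set\big(\accumulatorTwo{i}[R]\big)\subseteq\set\big(\accumulatorTwo{j}[t+1]\big)\subseteq S_{j,t+1}$.

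The remaining work, which I expect to be the main obstacle, concerns the round-$1$ part $\view_i^R[1]$: Corollary~\ref{coro:T2:i:r:in:T2:j:final} says nothing about it, and because \psender is Byzantine it may have sent the length-$1$ chain $m\cryptchain\psender$ to $p_i$ alone, so that $p_j$ never receives it directly and $m\cryptchain\psender\notin\view_j^{t+1}[1]$. The key observation that unblocks this is that a length-$1$ chain is exactly $m\cryptchain\psender$, whose supporting set is the singleton $\{\psender\}$; thus the only process $\view_i^R[1]$ can add to $S_{i,R}$ is \psender itself. I would then recover \psender inside $p_j$'s length-$2$ prefixes via $p_i$'s sign-and-retransmit step: if $m\cryptchain\psender\in\view_i^R[1]$, then $p_i$ received this chain in round $1$, and since $p_i$ is correct with $p_i\neq\psender$, it appends its signature at line~\ref{line:generic:signing:next:round} and stores $m\cryptchain\psender\cryptchain p_i$ in $\toBeBcast_{i,2}$.

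It then remains to argue that $p_i$ actually transmits this chain in round $2$ and that $p_j$ picks it up. Because $\revealGCL(w)=\maxfn(2,t+3-w)\geq 2>1$, the early-delivery condition at line~\ref{line:generic:test:knowmessage} cannot hold in round $1$, so $p_i$ does not set $\ready_i$ and reaches the communication step of round $2$, where (as $R=2\geq 2$) it broadcasts $m\cryptchain\psender\cryptchain p_i$ at line~\ref{line:generic:bcast}; note that round $2$ exists since $\psender$ Byzantine forces $t\geq 1$. As $p_j$ executes the communication step of round $2$ and the network is reliable and synchronous, $p_j$ receives $m\cryptchain\psender\cryptchain p_i$ in round $2$, so its length-$2$ truncation $\truncate_2(m\cryptchain\psender\cryptchain p_i)=m\cryptchain\psender\cryptchain p_i$ lies in $\accumulatorTwo{j}[t+1]$, with supporting set $\{\psender,p_i\}$; in particular $\psender\in S_{j,t+1}$. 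Combining this with the length-$2$ inclusion established above shows that every process of $S_{i,R}$ belongs to $S_{j,t+1}$, proving $S_{i,R}\subseteq S_{j,t+1}$.
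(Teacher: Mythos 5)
Your proof is correct and takes essentially the same route as the paper's: the same decomposition of $S_{i,R}$ into the length-2 prefixes (handled via Corollary~\ref{coro:T2:i:r:in:T2:j:final}) and the length-1 chain $m\cryptchain\psender$, which is recovered in $p_j$'s view through $p_i$'s sign-and-rebroadcast of $m\cryptchain\psender\cryptchain p_i$ in round 2. You even make explicit two details the paper leaves implicit, namely that $\revealGCL(w)\geq 2$ prevents $p_i$ from quitting before the round-2 communication step, and that a Byzantine $\psender$ forces $t\geq 1$.
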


{%
\renewcommand{\proofname}{Proof}
\begin{proof}
  Consider $q\in S_{i,R}$. There exists a chain $\gamma_q$ such that $q\in\gamma_q$ and $m\cryptchain\gamma_q\in \view_i^R[1]\cup\accumulatorTwo{i}[R]$.
  \begin{itemize}
  \item If $m\cryptchain\gamma_q\in \accumulatorTwo{i}[R]$, by Corollary~\ref{coro:T2:i:r:in:T2:j:final}, $\accumulatorTwo{i}[R]\subseteq \accumulatorTwo{j}[\finalround]$, and $\gamma_q\in \accumulatorTwo{j}[\finalround]$, which leads to $q\in S_{j,t+1}$.

  \item If $m\cryptchain\gamma_q\in\view_i^R[1]$, then because $p_i$ is correct, $m\cryptchain\gamma_q$ is a valid chain of length 1. As a result, $\gamma_q=(\psender)$ and $q=\psender$. Since $p_i\neq \psender$, $p_i$ signs $m\cryptchain\psender$ at line~\ref{line:generic:signing:next:round} of Alg.~\ref{algo:generic:SBRB:pi}, and broadcasts  $m\cryptchain\psender\cryptchain p_i$ to all correct processes in round $2$. We therefore have $m\cryptchain\psender\cryptchain p_i\in \accumulatorTwo{j}[2]\subseteq \accumulatorTwo{j}[\finalround]$ (assuming $t\geq 1$, and $t+1\geq 2$). As a result, $q=\psender\in S_{j,t+1}$, which concludes the lemma.
    \qedhere
  \end{itemize}
\end{proof}
}%

To conclude the proof, we need to find a chain $m\cryptchain\gamma_j\in\view_j^{t+1}$ that, when ``removed'' from $S_{j,t+1}$, produces a set $W_{j,t+1}$ containing at least $w$ processes. Once this happens, $m\cryptchain\gamma_j$, $S_{j,t+1}$ and $W_{j,t+1}$ will render $\GCLpredicate(m,$ $w,$ $\view_j^{t+1})$ true in round $t+1$.
The existence of $m\cryptchain\gamma_j$ comes from the following sublemma.

\begin{sublemma}\label{sublemma:exists:gamma:i}
  $\Exists m\cryptchain\gamma_j\in\view_j^{t+1}, \Big[\set\big(\subchain(\gamma_j,3,t+3-w)\big)\cap S_{i,R}\Big] \,\subseteq\, \set(\subseqi).$
\end{sublemma}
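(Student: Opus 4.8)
The plan is to exhibit an explicit witness chain $m\cryptchain\gamma_j\in\view_j^{\finalround}$ and to check that its constraint window $\subchain(\gamma_j,3,t+3-w)$ meets $S_{i,R}$ only inside $\set(\subseqi)$. In every branch the witness will be (a prefix of) the revealing chain $\sequencei$ that some \emph{correct} process has already pushed to $p_j$ by round $\finalround$, so the argument splits on \emph{where} a correct process sits in $\sequencei$. Writing $\rmwi=|\sequencei|$, note that $\sequencei$ is valid, hence acyclic, so its positions $1,\dots,\rmwi$ carry $\rmwi$ distinct processes, with $\psender$ (Byzantine) occupying position~$1$.

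\textbf{Case A: $\sequencei$ contains a correct process.} Let $p_k$ be such a process, necessarily at some position $s\ge 2$. Since signatures are unforgeable and $p_k$ is correct, $p_k$ must have received the length-$(s{-}1)$ prefix of $m\cryptchain\sequencei$ in round $s{-}1$, signed it (line~\ref{line:generic:signing:next:round}), and broadcast the length-$s$ prefix in round $s\le\finalround$ (line~\ref{line:generic:bcast}); hence $p_j$ holds $m\cryptchain\gamma_j$ with $\gamma_j=\subchain(\sequencei,1,s)$ by round $\finalround$. As $\gamma_j$ is a prefix of $\sequencei$, $\subchain(\gamma_j,3,t+3-w)$ is a prefix of $\subseqi$, so $\set(\subchain(\gamma_j,3,t+3-w))\subseteq\set(\subseqi)$ and the required inclusion holds a fortiori.

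\textbf{Case B: every process in $\sequencei$ is Byzantine.} Then $\rmwi\le t$ (at most $t$ Byzantine processes) and $p_i\notin\sequencei$, so when $p_i$ receives $m\cryptchain\sequencei$ in round $\rmwi$ it signs it and broadcasts $m\cryptchain\sequencei\cryptchain p_i$ in round $\rmwi{+}1\le\finalround$; this broadcast happens even if $p_i$ has already toggled $\ready_i$, since line~\ref{line:generic:bcast} precedes the stop at line~\ref{line:generic:stop:ready}. Take $\gamma_j=\sequencei\cryptchain p_i\in\view_j^{\finalround}$. If $\rmwi{+}1>t+3-w$, the appended $p_i$ lies beyond the constraint window and $\subchain(\gamma_j,3,t+3-w)=\subseqi$; otherwise $\set(\subchain(\gamma_j,3,t+3-w))=\set(\subseqi)\cup\{p_i\}$. \emph{The main obstacle is precisely this last situation, where the freshly appended $p_i$ falls inside the constraint window of $\gamma_j$:} if additionally $p_i\in S_{i,R}$, then $p_i$ pollutes the left-hand intersection without belonging to $\set(\subseqi)$. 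I resolve it by observing that $p_i\in S_{i,R}$ together with $p_i$ correct and $p_i\neq\psender$ forces $p_i$ to back $m$ in round~$2$, i.e. $p_i$ signed $m\cryptchain\psender$ and broadcast $m\cryptchain\psender\cryptchain p_i$ in round~$2$; hence $p_j$ already owns the length-$2$ chain $m\cryptchain\psender\cryptchain p_i\in\view_j^{\finalround}$, whose constraint window $\subchain((\psender,p_i),3,t+3-w)$ is empty. Switching the witness to this chain makes the intersection empty, and the inclusion is trivial. When instead $p_i\notin S_{i,R}$, intersecting with $S_{i,R}$ simply deletes the offending $p_i$, so the inclusion again holds.

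In all branches the chosen $\gamma_j$ satisfies the claim, which proves the sublemma. To see why this containment (rather than an exact equality of revealing chains) is exactly what \WBPVisiProp needs, combine it with Sublemma~\ref{sublemma:S:i:R:subset:S:j:t+1} ($S_{i,R}\subseteq S_{j,t+1}$): any $q\in W_{i,R}=S_{i,R}\setminus\set(\subseqi)$ lies in $S_{j,t+1}$ and cannot lie in $\set(\subchain(\gamma_j,3,t+3-w))$ (else the sublemma would place it in $\set(\subseqi)$), so $q\in S_{j,t+1}\setminus\set(\subchain(\gamma_j,3,t+3-w))$. Thus this difference set has size at least $|W_{i,R}|\ge w$, so $\gamma_j$ witnesses $\GCLpredicate(m,w,\view_j^{\finalround})$ via line~\ref{line:certif:threshold}, completing the proof of the \WBPVisiProp property.
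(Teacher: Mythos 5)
Your proof is correct and is essentially the paper's own argument: you exhibit the same three witness chains --- a prefix of $\gamma_i$ ending at a correct signer (which that signer has signed and forwarded by round $t+1$), the extension $m\cryptchain\gamma_i\cryptchain p_i$ when $\gamma_i$ is all-Byzantine (where intersecting with $S_{i,R}$ removes the appended $p_i$ if $p_i\notin S_{i,R}$), and the length-2 chain $m\cryptchain\psender\cryptchain p_i$ with empty constraint window when $p_i\in S_{i,R}$ --- justified by the same sign-and-rebroadcast mechanics. The only difference is organizational: the paper splits on $p_i\in S_{i,R}$, then $|\gamma_i|=t+1$ (from which acyclicity yields a correct signer), then the remainder, whereas your top-level split on whether $\gamma_i$ contains a correct process merges the paper's Cases 2 and 3 and invokes the length-2 witness only in the one sub-case where the appended $p_i$ could actually pollute the window.
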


{%
\renewcommand{\proofname}{Proof}
\begin{proof}
  The proof distinguishes three cases:
  \begin{itemize}
  \item Case 1: $p_i\in S_{i,R}$. Because $p_i\neq\psender$ and $p_i$ is correct, if $p_i\in S_{i,R}$, then $p_i$ has signed a chain $m\cryptchain \psender \cryptchain p_i$, and broadcast this chain in round $2$ to all other correct processes, including $p_j$. As a result, $m\cryptchain \psender \cryptchain p_i\in\view_j^{t+1}$ (assuming $t\geq 1$).

    Furthermore, since $m\cryptchain \psender \cryptchain p_i$ only contains 2 processes, $$\set\big(\subchain(\psender \cryptchain p_i,3,t+3-w)\big)=\emptyset,$$ which proves the sublemma.
  \item Case 2: $|\gamma_i|=t+1$.
  In this case, because $\gamma_i$ is acyclic, it contains at least one correct process $p_k$.
  Since $p_k$ is correct, it has signed and then broadcast a chain $m\cryptchain \gamma_k \cryptchain p_k$ to all processes including $p_j$ at the latest by round $t+1$. As a result, $m\cryptchain \gamma_k \cryptchain p_k\in \view_j^{t+1}$.

    Furthermore, by construction, $\gamma_k \cryptchain p_k$ is a prefix of $\gamma_i$, which implies
    \begin{equation*}
      \set\big(\subchain(\gamma_k \cryptchain p_k,3,t+3-w)\big)\subseteq \set(\subseqi),
    \end{equation*}
    which proves the sublemma.
    
  \item Case 3: $p_i\not\in S_{i,R} \wedge |\gamma_i|<t+1$.
    If $p_i\in \gamma_i$, then we can reuse the reasoning of Case 2 with $p_k=p_i$, which proves the sublemma. If $p_i\not\in \gamma_i$, since $|\gamma_i|<t+1$, $p_i$ has received $\gamma_i$ in a round $r_i=|\gamma_i|$, and because it is correct, $p_i$ has signed and broadcast the chain $m\cryptchain\gamma_i\cryptchain p_i$ in round $r_i+1\leq t+1$ to all processes, including $p_j$. As a result $m\cryptchain\gamma_i\cryptchain p_i\in\view_j^{t+1}$.

    Furthermore
    \begin{align*}
      \set\big(\subchain(\gamma_i\cryptchain p_i,3,t+3-w)\big)
      &\subseteq \set\big(\subchain(\gamma_i,3,t+3-w)\big)\cup\{p_i\},\\
      &\subseteq \set(\subseqi)\cup\{p_i\}.\tag{by definition of $\subseqi$}
    \end{align*}
    As a result
     \begin{align*}
       \set\big(\subchain(\gamma_i\cryptchain p_i,3,t+3-w)\big)\cap S_{i,R},
       &\subseteq \big(\set(\subseqi)\cap S_{i,R}\big) \cup \big(\{p_i\}\cap S_{i,R}\big),\\
       &\subseteq \:\set(\subseqi)\cap S_{i,R}.\tag{since $p_i\not\in S_{i,R}$ by case assumption}
     \end{align*}
     This last inclusion concludes Case 3 and therefore the sublemma.\qedhere
  \end{itemize}
\end{proof}
}%
\noindent
In the following, we will note $m\cryptchain\gamma_j$ the chain whose existence is given by Sublemma~\ref{sublemma:exists:gamma:i}. Consider
\begin{equation*}
  W_{j,t+1}= S_{j,t+1} \setminus \set\big(\subchain(\gamma_j,3,t+3-w)\big).
\end{equation*}
The following holds
\begin{align*}
  W_{i,R} &= S_{i,R}\setminus \set\big(\subseqi\big)\\
  &\subseteq S_{i,R}\setminus \Big(\set\big(\subchain(\gamma_j,3,t+3-w)\big)\cap S_{i,R}\Big)\tag{using Sublemma~\ref{sublemma:exists:gamma:i}}\\
  &\subseteq S_{i,R}\setminus \set\big(\subchain(\gamma_j,3,t+3-w)\big)\\
  &\subseteq S_{j,t+1}\setminus \set\big(\subchain(\gamma_j,3,t+3-w)\big)\tag{using Sublemma~\ref{sublemma:S:i:R:subset:S:j:t+1}}\\
  &\subseteq W_{j,t+1}.
\end{align*}

Because $|W_{i,R}|\geq w$ by definition, this last inclusion yields $|W_{j,t+1}|\geq w$. This inequality shows that $\GCLpredicate(m,w,\view_j)$ is rendered true at $p_j$ at round $t+1$ by using $\gamma_j$, $S_{j,t+1}$, and $W_{j,t+1}$ in Algorithm~\ref{algo:GCL:predicate}, thus proving the lemma.
\qedhere
\end{detailedproof}


\begin{lemma}\label{lemma:GCL:WBPLiveProp}
  The pair $(\GCLpredicate,\revealGCL)$ fulfills the \WBPLiveProp property.
\end{lemma}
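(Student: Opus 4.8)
The plan is to instantiate \WBPLiveProp with the concrete weight $\wgood \isdefinedas c$ and to exhibit, in the view of every correct process at the end of round~$2$, both enough backers for $m$ and a revealing chain that imposes no constraint. First I would discharge the reveal condition: since $\revealGCL(w)=\maxfn(2,t+3-w)$ and, by Theorem~\ref{theo:GCL:t+3-c:BRB:robust}, $\lambdagood=\maxfn(2,t+3-c)$, we immediately get $\revealGCL(\wgood)=\revealGCL(c)=\maxfn(2,t+3-c)=\lambdagood$, so $c$ is a legal choice of $\wgood$ (and $c\geq 1$ since $\psender$ is correct, hence $\wgood\in\mathbb{N}^{+}$).

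Next I would trace the good-case message flow to populate $\view_i$. Because $\psender$ is correct, in round~$1$ it broadcasts $m\cryptchain\psender$ to all of $\Pi$ (Algorithm~\ref{alg:generic:porig:code}), so every correct $p_i\neq\psender$ has $m\cryptchain\psender\in\view_i$ after round~$1$ and sets $\toBeBcast_{i,2}=\{m\cryptchain\psender\cryptchain p_i\}$. In round~$2$, each correct $p_j\neq\psender$ broadcasts $m\cryptchain\psender\cryptchain p_j$ to all of $\Pi$ (including itself, by the definition of $\broadcast$). Hence, after the round-$2$ view update, each correct $p_i\neq\psender$ sees $m\cryptchain\psender$ together with $m\cryptchain\psender\cryptchain p_j$ for every correct $p_j\neq\psender$. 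Consequently the set $S$ of processes backing $m$ in round~$1$ or~$2$ in $\view_i$ contains $\psender$ and all $c-1$ other correct processes, so $|S|\geq c$.

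The remaining step is to satisfy the revealing-chain requirement of \GCLpredicate without sacrificing any backer. The key observation is that one may take the revealing chain to be the length-$1$ chain $m\cryptchain\psender$ itself, i.e.\ $\gamma=(\psender)$, which lies in $\view_i$. Since $|\gamma|=1<3$, the subchain $\subchain(\gamma,3,t+3-c)$ is empty, so $W=S\setminus\emptyset=S$ and $|W|=|S|\geq c=\wgood$. Therefore $\GCLpredicate(m,\wgood,\view_i)=\ttrue$ is observed during round~$2$ by every correct $p_i\neq\psender$, which is exactly \WBPLiveProp.

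The main obstacle is conceptual rather than computational: unlike \WBPLocalConspProp, the predicate \GCLpredicate demands an auxiliary revealing chain whose positions $3$ through $t+3-w$ must avoid the backing set, and one must ensure that exhibiting such a chain does not eat into the weight count. The trick of using a chain of length at most~$2$, so that the constrained segment is vacuous, resolves this cleanly. A secondary point to argue explicitly is that all $c$ correct processes genuinely appear in $S$ by round~$2$; this hinges on $\broadcast$ reaching every process in $\Pi$ (so that $p_i$'s own round-$2$ signature is reflected in $\view_i$), which I would state when establishing $|S|\geq c$, since dropping $p_i$ itself would leave only $|S|\geq c-1$ and break the choice $\wgood=c$.
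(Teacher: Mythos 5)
Your proof is correct and follows essentially the same route as the paper's: both set $\wgood=c$, count $\psender$ plus the $c-1$ other correct processes (whose round-$2$ chains $m\cryptchain\psender\cryptchain p_k$ reach everyone) to get $|S|\geq c$, and pick a revealing chain short enough that $\subchain(\gamma,3,t+3-c)$ is empty, so $W=S$ and $\revealGCL(c)=\maxfn(2,t+3-c)$ closes the argument. The only (immaterial) difference is that you take the length-$1$ chain $m\cryptchain\psender$ as the revealing chain where the paper takes one of the length-$2$ chains $m\cryptchain\psender\cryptchain p_k$; your explicit note that $\broadcast$ includes self-delivery, so that $p_i$ itself lands in $S$, is a detail the paper leaves implicit.
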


\begin{proof}
  When the initial sender, $\psender$, is correct, and \brb-broadcast a message $m$, all remaining correct processes $p_i\neq\psender$ receive $m\cryptchain \psender$ in round $1$, and broadcast $m\cryptchain \psender \cryptchain p_i$ in round $2$. As a result, every correct process $p_i\neq\psender$ receives at least $c-1$ distinct chains backing $m$ of the form $m\cryptchain \psender \cryptchain p_k$ in round $2$.

  These chains imply that the set $S$ at line~\ref{line:certif:set:S} of Algorithm~\ref{algo:GCL:predicate} contains at least $c$ processes ($\psender$ and the remaining $c-1$ correct processes). Using any of the length-2 chains $m\cryptchain \psender \cryptchain p_k$, a set $W$ can be constructed equal to $S$, rendering $\GCLpredicate(m,c,\view_i)$ true for all correct processes other than $\psender$.

  The observation that $\revealGCL(c)=\maxfn(2,t+3-c)$ concludes the lemma.
\end{proof}

\subsection{Proof of Theorem~\ref{theo:GCL:t+3-c:BRB:robust} and Corollary~\ref{coro:GCL:BRB:goodcase}}

The proof of Theorem~\ref{theo:GCL:t+3-c:BRB:robust} follows from Lemmas~\ref{lemma:GCL:WBPWeightMonoProp}, \ref{lemma:GCK:WBPViewMonoProp}, \ref{lemma:GCL:WBPConspiProp}, \ref{lemma:GCL:WBPVisiProp}, and~\ref{lemma:GCL:WBPLiveProp}.
The proof of Corollary~\ref{coro:GCL:BRB:goodcase} follows from Theorems~\ref{theo:algo:generic:SBRB:pi:works} and~\ref{theo:GCL:t+3-c:BRB:robust}.
\section{Conclusion}
Considering $n$-process synchronous distributed systems where up to $t<n$ processes  can be Byzantine, this paper explored the good-case latency of deterministic Byzantine reliable broadcast (BRB) algorithms, namely the time taken by correct processes to deliver a message when the initial sender is correct.

In contrast to their randomized counterparts, no deterministic BRB algorithm was known that exhibits a good-case latency better than $t+1$ (the worst-case bound) under a majority of Byzantine processes.
This paper has proposed a novel deterministic synchronous BRB algorithm that substantially improves on this earlier bound and provides a good case latency of $\maxfn(2,t+3-c)$ rounds, where $t$ is the upper bound on the number of Byzantine processes, and $c$ the number of effectively correct processes in the considered run.

The proposed algorithm has been presented as an instance of a \textit{generic} BRB algorithm (from which the classical BRB algorithm from Lamport, Shostak and Pease~\cite{LSP82} can also be derived).
This generic algorithm extends the ``signature chain mechanism'' first proposed four decades ago. It exploits a family of weight-based predicates to identify patterns in sets of signature chains and help correct processes decide when they can safely deliver a message. A judicious choice for these patterns delivers a concrete BRB algorithm that allows correct processes to \brb-deliver much earlier than earlier proposals when the context is favorable.
In particular, when the sender is correct, and there are enough effectively correct processes ($c>t$), the resulting algorithm delivers in $2$ rounds, thus outperforming all known dishonest-majority BRB algorithms (whether deterministic or randomized).

Several crucial open questions remain, in particular, whether the upper bound of $\maxfn(2,t+3-c)$ rounds can be further improved (for instance, using techniques 
employed in sub-linear randomized algorithms~\cite{ANRX21-2,WXSD20}).
In terms of lower bounds, one might ask whether the lower bound of $\lfloor n / (n-t) \rfloor -1$ shown in \cite{ANRX21-2} can be refined to include the effective number of correct processes $c$, and whether this same lower bound can be strengthened in the deterministic case, for instance by observing Byzantine Agreement cannot be solved in a (worst-case) sub-linear communication complexity using algorithms that tolerate a strongly adaptive adversary (which include deterministic algorithms)~\cite{ACDNPRS19}.

Finally, this paper did not consider the problem of communication complexity.
Signature chains can be quite costly in a practical implementation, as each new signature  adds hundreds or even thousands of bits to each network message.
We conjecture that multi-signatures could help to significantly reduce this overhead by aggregating all non-backing signatures into a fixed-size structure~\cite{BLS01}.
Furthermore, an interesting follow-up would then be to study the tension between time complexity and communication complexity and how favoring one metric may hinder the other.

%
\section*{Declarations}
\paragraph{Funding.}
This work was partially funded by the French ANR project ByBLoS (ANR-20-CE25-0002-01), and by the PriCLeSS project granted by the Labex CominLabs excellence laboratory of the French ANR (ANR-10-LABX-07-01).

\paragraph{Competing Interests.}
The authors have no competing interests to declare that are relevant to the content of this article, apart from those possibly resulting from the funding sources mentioned above.
\paragraph{Data.}
 Data sharing not applicable to this article as no datasets were generated or analyzed during the current study.%

\bibliographystyle{plain}
\bibliography{bibliography}

\begin{thebibliography}{10}

\bibitem{ACDNPRS19}
Ittai Abraham, T-H.~Hubert Chan, Danny Dolev, Kartik Nayak, Rafael Pass, Ling
  Ren, and Elaine Shi.
\newblock Communication complexity of {Byzantine} agreement, revisited.
\newblock In {\em ACM Symposium on Principles of Distributed Computing (PODC)},
  pages 317--326, 2019.

\bibitem{AMNRY20}
Ittai Abraham, Dahlia Malkhi, Kartik Nayak, Ling Ren, and Maofan Yin.
\newblock Sync {HotStuff}: Simple and practical synchronous state machine
  replication.
\newblock In {\em IEEE Symposium on Security and Privacy (S\&P)}, pages
  106--118, 2020.

\bibitem{ANRX21}
Ittai Abraham, Kartik Nayak, Ling Ren, and Zhuolun Xiang.
\newblock Good-case latency of {Byzantine} broadcast: A complete
  categorization.
\newblock In {\em ACM Symposium on Principles of Distributed Computing (PODC)},
  pages 331--341, 2021.

\bibitem{ANRX21-2}
Ittai Abraham, Kartik Nayak, Ling Ren, and Zhuolun Xiang.
\newblock Good-case latency of {Byzantine} broadcast: A complete
  categorization.
\newblock In {\em arXiv:2102.07240}, pages 1--38, 2021.

\bibitem{AW04}
Hagit Attiya and Jennifer~L. Welch.
\newblock {\em Distributed computing - fundamentals, simulations, and advanced
  topics (2. ed.)}.
\newblock Wiley series on parallel and distributed computing. Wiley, 2004.

\bibitem{AFRT20}
Alex Auvolat, Davide Frey, Michel Raynal, and Fran{\c{c}}ois Ta{\"{\i}}ani.
\newblock Money transfer made simple: A specification, a generic algorithm and
  its proof.
\newblock {\em Bulletin of {EATCS}}, 132:22--43, 2020.

\bibitem{AFRT21}
Alex Auvolat, Davide Frey, Michel Raynal, and Fran{\c{c}}ois Ta{\"{\i}}ani.
\newblock Byzantine-tolerant causal broadcast.
\newblock {\em Theoretical Computer Science}, 885:55--68, 2021.

\bibitem{BDS20}
Mathieu Baudet, George Danezis, and Alberto Sonnino.
\newblock {F}ast{P}ay: High-performance {Byzantine} fault tolerant settlement.
\newblock In {\em ACM Advances in Financial Technologies}, pages 163--177,
  2020.

\bibitem{BLS01}
Dan Boneh, Ben Lynn, and Hovav Shacham.
\newblock Short signatures from the {Weil} pairing.
\newblock In {\em International Conference on the Theory and Application of
  Cryptology and Information Security}, pages 514--532. Springer, 2001.

\bibitem{boneh2001short}
Dan Boneh, Ben Lynn, and Hovav Shacham.
\newblock Short signatures from the weil pairing.
\newblock In {\em 7th International Conference on the Theory and Application of
  Cryptology and Information Security (ASIACRYPT 2001)}, pages 514--532.
  Springer, 2001.

\bibitem{B87}
Gabriel Bracha.
\newblock Asynchronous {Byzantine} agreement protocols.
\newblock {\em Information \& Computation}, 75(2):130--143, 1987.

\bibitem{CGL11}
Christian Cachin, Rachid Guerraoui, and Lu{\'{\i}}s E.~T. Rodrigues.
\newblock {\em Introduction to Reliable and Secure Distributed Programming (2.
  ed.)}.
\newblock Springer, 2011.

\bibitem{CM19}
Jing Chen and Silvio Micali.
\newblock Algorand: A secure and efficient distributed ledger.
\newblock {\em Theoretical Computer Science}, 777:155--183, 2019.

\bibitem{CGKKMPPST20}
Daniel Collins, Rachid Guerraoui, Jovan Komatovic, Petr Kuznetsov, Matteo
  Monti, Matej Pavlovic, Yvonne{-}Anne Pignolet, Dragos{-}Adrian Seredinschi,
  Andrei Tonkikh, and Athanasios Xygkis.
\newblock Online payments by merely broadcasting messages.
\newblock In {\em Dependable Systems and Networks (DSN)}, pages 26--38. IEEE,
  2020.

\bibitem{DRS90}
Danny Dolev, Ruediger Reischuk, and H~Raymond Strong.
\newblock Early stopping in {Byzantine} agreement.
\newblock {\em Journal of the ACM}, 37(4):720--741, 1990.

\bibitem{DS83}
Danny Dolev and H.~Raymond Strong.
\newblock Authenticated algorithms for {Byzantine} agreement.
\newblock {\em SIAM Journal on Computing}, 12(4):656--666, 1983.

\bibitem{DLS88}
Cynthia Dwork, Nancy~A. Lynch, and Larry~J. Stockmeyer.
\newblock Consensus in the presence of partial synchrony.
\newblock {\em Journal of the {ACM}}, 35(2):288--323, 1988.

\bibitem{FN09}
Matthias Fitzi and Jesper~Buus Nielsen.
\newblock On the number of synchronous rounds sufficient for authenticated
  {Byzantine} agreement.
\newblock In {\em International Symposium on Distributed Computing (DISC)},
  Springer LNCS 5805, pages 449--463, 2009.

\bibitem{DBLP:conf/pact/FreyGRT21}
Davide Frey, Lucie Guillou, Michel Raynal, and Fran{\c{c}}ois Ta{\"{\i}}ani.
\newblock Consensus-free ledgers when operations of distinct processes are
  commutative.
\newblock In {\em 16th International Conference on Parallel Computing
  Technologies (PaCT)}, Springer LNCS 12942, pages 359--370, 2021.

\bibitem{GKQV10}
Rachid Guerraoui, Nikola Knezevic, Vivien Qu{\'{e}}ma, and Marko Vukolic.
\newblock The next 700 {BFT} protocols.
\newblock In {\em EuroSys}, pages 363--376. {ACM}, 2010.

\bibitem{GKMPS22}
Rachid Guerraoui, Petr Kuznetsov, Matteo Monti, Matej Pavlovic, and
  Dragos{-}Adrian Seredinschi.
\newblock The consensus number of a cryptocurrency.
\newblock {\em Distributed Computing}, 35(1):1--15, 2022.

\bibitem{IR16}
Damien Imbs and Michel Raynal.
\newblock Trading off \emph{t}-resilience for efficiency in asynchronous
  {Byzantine} reliable broadcast.
\newblock {\em Parallel Processing Letters}, 26(4):1650017:1--1650017:8, 2016.

\bibitem{LSP82}
Leslie Lamport, Robert~E. Shostak, and Marshall~C. Pease.
\newblock The {Byzantine} generals problem.
\newblock {\em ACM Transactions on Programming Languages and Systems},
  4(3):382--401, 1982.

\bibitem{MRV99}
Silvio Micali, Michael~O. Rabin, and Salil~P. Vadhan.
\newblock Verifiable random functions.
\newblock In {\em 40th IEEE Symposium on the Foundations of Computer Science
  (FOCS)}, pages 120--130, 1999.

\bibitem{MHR14}
Achour Most{\'{e}}faoui, Moumen Hamouma, and Michel Raynal.
\newblock Signature-free asynchronous {Byzantine} consensus with $t<n/3$ and
  {$O(n^2)$} messages.
\newblock In {\em ACM Symposium on Principles of Distributed Computing (PODC)},
  pages 2--9. {ACM}, 2014.

\bibitem{NRSVX20}
Kartik Nayak, Ling Ren, Elaine Shi, Nitin~H. Vaidya, and Zhuolun Xiang.
\newblock Improved extension protocols for {Byzantine} broadcast and agreement.
\newblock In {\em International Symposium on Distributed Computing (DISC)},
  volume 179 of {\em LIPIcs}, pages 28:1--28:17, 2020.

\bibitem{PSL80}
Marshall~C. Pease, Robert~E. Shostak, and Leslie Lamport.
\newblock Reaching agreement in the presence of faults.
\newblock {\em Journal of the {ACM}}, 27(2):228--234, 1980.

\bibitem{R18}
Michel Raynal.
\newblock {\em Fault-Tolerant Message-Passing Distributed Systems - An
  Algorithmic Approach}.
\newblock Springer, 459 pages, 2018.

\bibitem{WXDS20}
Jun Wan, Hanshen Xiao, Srinivas Devadas, and Elaine Shi.
\newblock Round-efficient {Byzantine} broadcast under strongly adaptive and
  majority corruptions.
\newblock In {\em 18th Theory of Cryptography Conference (TCC)}, Springer LNCS
  12550, pages 412--456, 2020.

\bibitem{WXSD20}
Jun Wan, Hanshen Xiao, Elaine Shi, and Srinivas Devadas.
\newblock Expected constant round {Byzantine} broadcast under dishonest
  majority.
\newblock In {\em 18th Theory of Cryptography Conference (TCC)}, Springer LNCS
  12550, pages 381--411, 2020.

\end{thebibliography}

\newpage
\appendix
\section{Communication Complexity}

Because our algorithm focuses on latency, we have left aside communication complexity so far, but this would be the next aspect to consider.
First off, it should be noted that the communication cost of a distributed algorithm can be approached from two perspectives, by considering two distinct metrics:

\begin{itemize}
    \item Metric 1: The number of messages sent by correct processes,
    \item Metric 2: The size of individual messages sent by correct processes.
\end{itemize}

These two metrics only measure the communication cost of correct processes, as Byzantine processes can send a potentially infinite amount of messages with a potentially infinite amount of information.
The two metrics can be combined in one single metric: the total amount of information sent in the network by correct processes (Metric 3).
But to approach the problem in a more granular manner, it is useful to consider these 2 metrics separately.
Moreover, in a practical implementation, it is often more desirable to reduce the number of messages (Metric 1) than the size of messages (Metric 2), as each new message often necessitates negotiating a new network connection (e.g., a TCP session between two endpoints) which entails a significant communication overhead.
In other words, having a few large messages is often better than having a lot of small messages.
In this respect, Metric 3 does not allow a fine-grained analysis of communication complexity.

In the following, ``the algorithm'' refers to Algorithm~\ref{algo:generic:SBRB:pi} in which we use the $\GCLpredicate$ predicate described in Algorithm~\ref{algo:GCL:predicate}.

\subsection{Metric 1: Number of messages}

\newcommand{\Rend}{\ensuremath{R_{\mathit{end}}}\xspace}

Currently, the algorithm follows the original BRB algorithm of Lamport, Shostak, and Pease and adopts a full-knowledge dissemination strategy: correct processes forward any signature chain they have not signed yet.
This is not very efficient, and as a result, in each round, each correct process sends $n$ messages (1 unreliable broadcast).
For simplicity's sake, we will ignore the border case where a process cannot sign any new chain.
This assumption leads to a worst-case collective message cost for correct processes of $n$ messages in round 1, and $c \times n$ messages in each later round, which adds up to $n+ c \times n \times \Rend$ messages (at most), where \Rend is the maximum number of rounds needed for the algorithm to brb-deliver messages at correct processes.\footnote{When $\Rend\leq t$, Algorithm~\ref{algo:generic:SBRB:pi} \brb-delivers in round \Rend, but also executes one extra round of communication (through the $\ready_i$ variable). This extra round leads to a total of at most $\Rend+1$ rounds of communication: the first round followed by $\Rend$ subsequent rounds.\label{fn:Rend:extra:round}}
This upper bound can be further refined depending on whether the execution occurs in a good or bad case.

\subsubsection{Good case (correct sender)}
In a good case, we have $\Rend = \maxfn(2,t+3-c)$, but the initial sender (which is correct) only participates in the first round. These two observations lead to at most $n + (c-1) \times n \times \maxfn(2,t+3-c)$ messages sent by correct processes, which boils down to $n + 2\times n \times (n-1) = O(n^2)$ messages when all processes are correct ($c=n$, assuming $n>t$).

\subsubsection{Bad case (Byzantine sender)}
In a bad case, we have at worst $\Rend=t+1$. When $\Rend=t+1$, however, correct processes do not execute any extra round of communication (see Footnote~\ref{fn:Rend:extra:round}). This, therefore, leads to at most $n + (c-1) \times n \times (t+1)$ messages sent by correct processes. If one assumes $c$ and $t$ are of the order of $n$, the overall message complexity is thus in $O(n^3)$.

\subsection{Metric 2: Size of messages}
We capture the size of a message by counting how many ``information items'' it contains.
Because the size of the fixed-size fields of a message (e.g., its application payload) becomes asymptotically negligible compared to its fields of variable size (e.g., a set of signatures), we equate this number of ``information items'' by the number of elements in the fields of variable size.
In the case of our algorithm, this number of ``information items'' would be the number of signatures in all of the chains contained in a message.

In our algorithm, message size contributes heavily to communication costs.
This is because messages keep growing in size at every round, as both the number of chains in each message and the length of each chain keep increasing.
Let us, therefore, count the number of signatures exchanged in each round.
A chain exchanged in round $R$ must contain $R$ signatures.
A process $p_i$ can have no more than $C_{n-2}^{R-2}$ (the binomial coefficient ``$n-2$ choose $R-2$'') such chains that need to be disseminated (since these chains must start by \psender's signature, must end with $p_i$'s signature, and are acyclic).

\subsubsection{Good case (correct sender)}
In a good case, in round $R \geq 2$, the chains of length $R$ that a correct process needs to disseminate contain no more than $R \times C_{n-2}^{R-2}$ signatures, and correct processes collectively cannot send more than $(c-1) \times n \times R \times C_{n-2}^{R-2}$ signatures, which is upper bounded by $(c-1) \times n \times R \times (n-2)^{R-2}$.
When all processes are correct ($c=n$, which implies that the algorithm terminates in 3 rounds: 2 for all processes to brb-deliver, plus one for broadcasting a final message, see Footnote~\ref{fn:Rend:extra:round}), this adds up to no more than $n + 2n(n-1) + 3n(n-1)(n-2) = O(n^3)$ signatures.

\subsubsection{Bad case (Byzantine sender)}
In a bad case, given that the Byzantine sender may spam the system with an arbitrary number of application messages (the BRB's payload), with each message incurring its own cost in signatures, the signature cost is essentially unbounded.

\subsection{Possible improvements}

\subsubsection{Sets instead of chains of signatures}
The weight-based predicate $\GCLpredicate$ (Algorithm~\ref{algo:GCL:predicate}, Section~\ref{sec:wbp-gcl-predicate}) does not use the order of the signatures in chains from position 3 to the end.
Indeed, the predicate hinges on the cardinality of the set of signatures $W$ (rather than some sequence of signatures), and the construction of $W$ depends on the interplay of signatures present in the first two rounds and those present from round 3 and later.
As a result, instead of a chain, every signature from position 3 to the end could be bundled in a set that does not preserve the order.

Using sets of signatures instead of chains would greatly reduce the size of the messages in good and bad cases (Metric 2), as currently, a single message that is exchanged in this algorithm can contain multiple chains that possess exactly the same signatures, but not in the same order, which creates a significant amount of redundant information in each message.

Moreover, using sets of signatures would enable the use of multisignature schemes (such as BLS~\cite{boneh2001short}), which make it possible to aggregate multiple digital signatures in one single fixed-size structure, thus further reducing message sizes.
An important caveat is that, even with multisignatures, the message must still contain the identity of all signatories to verify its authenticity. As a result, although the number of ``authenticators''~\cite{AMNRY20} is reduced, the number of ``information items'' remains the same.

\subsubsection{Filtering received messages}
To limit the size of the message when the sender is Byzantine, we conjecture that a filtering mechanism precluding multiple conflicting application messages from being backed by correct processes could help to obtain a finite value for Metric 2 in a bad case.
This improvement may also lead to a decrease of Metric 1.

\subsubsection{Not sending ``useless'' messages}
There is a strong possibility that some messages of the algorithm that we present are not required to ensure its correctness.
For instance, in our algorithm, a correct process signs and rebroadcasts every signature chain it receives that does not already contain its signature.
However, the set of chains that it broadcasts may be a version with more signatures of an old set of chains that it has already broadcast in a previous round.
As a result, broadcasting this new set of chains to the network may not decrease the revealing round for the message for any process, and not broadcasting this new ``useless'' message may very well have no consequence on the correct termination of the algorithm.
This would reduce Metric 1 (the number of messages) in both good and bad cases.

\end{document}